%% file: main.tex
\documentclass[mnsc,sglanonrev]{informs4}

\input{preamble.tex}
\input{tikz/setting.tex}

\OneAndAHalfSpacedXI

\EquationsNumberedThrough    

\TheoremsNumberedThrough     
\ECRepeatTheorems  %

\begin{document}




\TITLE{Large-Scale LLM Inference with Heterogeneous Workloads: Prefill-Decode Contention and Asymptotically Optimal Control}

\ARTICLEAUTHORS{%
  \AUTHOR{
    Ruihan Lin\textsuperscript{a},
    Zean Han\textsuperscript{b},
    Zezhen Ding\textsuperscript{c},
    Jiheng Zhang\textsuperscript{d}
  }
  \AFF{
    Department of Industrial Engineering and Decision Analytics,
    The Hong Kong University of Science and Technology\\
    \EMAIL{
      \textsuperscript{a}rlinah@connect.ust.hk, 
      \textsuperscript{b}zhanax@connect.ust.hk, 
      \textsuperscript{c}zdingah@connect.ust.hk, 
      \textsuperscript{d}jiheng@ust.hk
    }
  }
}

\ABSTRACT{

Large Language Models (LLMs) are rapidly becoming critical infrastructure for enterprise applications, driving unprecedented demand for GPU-based inference services.
A key operational challenge arises from the two-phase nature of LLM inference: a compute-intensive \emph{prefill} phase that processes user input, followed by a memory-bound \emph{decode} phase that generates output tokens.
When these phases share GPU resources, prefill tasks throttle the processing speed of concurrent decodes, creating state-dependent contention.
This contention is further complicated by workload heterogeneity, as different applications exhibit vastly different input and output lengths.
We develop a stochastic control framework for scheduling heterogeneous LLM workloads across large GPU clusters.
We formulate LLM inference as a multiclass many-server queueing network with state-dependent service rates, grounded in empirical iteration-time measurements.
We analyze the fluid approximation of this system and solve steady-state linear programs that characterize optimal resource allocation.
We design gate-and-route policies that regulate prefill admission and decode routing, and prove that they are asymptotically optimal in the many-GPU limit under both bundled and separate token-pricing schemes.
We further extend the framework to incorporate Service Level Indicators (SLIs) such as latency and fairness, providing a general approach to constrained scheduling.
\textcolor{tc-ins}{Trace-driven replay simulations on empirical Azure traces, together with controlled synthetic experiments, show that our policies outperform representative serving heuristics and expose interpretable service--revenue tradeoffs.}

}%


\KEYWORDS{Stochastic control, Queueing network, Large language models, Revenue management} 


\maketitle


\input{body.tex}

\bibliography{references}

\ECSwitch

\input{appendix}

\end{document}

%% file: preamble.tex
\usepackage{mathtools}          

\usepackage{microtype}          

\usepackage{booktabs}           
\usepackage{multirow}           
\usepackage{tabularx}           

\usepackage{caption}
\usepackage{subcaption}         

\usepackage{algorithm}
\usepackage{algpseudocode}      

\usepackage{enumitem}           



\DeclareFontShape{OML}{cmm}{b}{it}{<->cmmib10}{}
\DeclareFontShape{OMS}{cmsy}{b}{n}{<->cmsy10}{}

\newcommand{\To}{\Rightarrow}                   

\usepackage[normalem]{ulem}
\definecolor{tc-del}{RGB}{0, 0, 0}    
\definecolor{tc-ins}{RGB}{0, 0, 0}    
\definecolor{tc-cmt}{RGB}{0, 0, 0}    


%% file: tikz/setting.tex
\usepackage{tikz}
\usetikzlibrary{shapes,shapes.arrows,arrows,arrows.meta,positioning,calc,patterns,decorations.pathreplacing,topaths,automata}
\usepackage{pgfplots}
\usepgfplotslibrary{groupplots,dateplot}
\usepackage{pgfplotstable}
\pgfplotsset{compat=1.18}

\tikzset{
  cir/.style={
    ellipse,
    minimum width=15pt,
    minimum height=23pt,
    align=center,
    text width=20pt,
    inner sep=2pt,
    draw=black,
  }
}

\tikzset{
  pcir/.style={
    circle,
    minimum width=15pt,
    minimum height=15pt,
    align=center,
    text width=20pt,
    inner sep=2pt,
    draw=blue,
  }
}

\tikzset{
  box/.style={
    rectangle,
    rounded corners=3pt,
    minimum width=25pt,
    minimum height=20pt,
    inner sep=2pt,
    draw=black,
  }
}

%% file: body.tex
\setlength{\abovedisplayskip}{4pt plus 2pt minus 2pt}
\setlength{\belowdisplayskip}{4pt plus 2pt minus 2pt}
\setlength{\abovedisplayshortskip}{1pt plus 1pt minus 1pt}
\setlength{\belowdisplayshortskip}{2pt plus 1pt minus 1pt}

\section{Introduction}
Large Language Models (LLMs) have emerged as a foundational technology in contemporary artificial intelligence, leading to a substantial increase in computational demand~\citep{zhao2025surveylargelanguagemodels}.
To accommodate this growth, production-scale infrastructures have expanded correspondingly, frequently requiring the concurrent utilization of thousands of GPUs to sustain worldwide inference workloads~\citep{Kwon2023vLLM,aminabadi2022deepspeed}.
Since commercial LLM services predominantly adopt per-token pricing~\citep{openaiPricing2025,vertexPricing2025,anthropic_billing_2025,gemini_billing_2025,deepseek_pricing_2025}, revenue is closely tied to the volume and composition of tokens processed, making efficient resource allocation critical for both profitability and user experience.

\paragraph{\textbf{Prefill--decode contention.}}
A distinguishing feature of LLM inference is that each request proceeds in two stages: a \emph{\textbf{prefill}} stage, in which the model processes the user's input prompt, followed by a \emph{\textbf{decode}} stage, in which it generates output tokens autoregressively.
Modern serving systems batch multiple requests on the same GPU to exploit the GPU's parallel processing capability, but these two stages interact in nontrivial ways when sharing GPU resources.
Prefill is compute-intensive and, when present in a batch, dominates the iteration time and thereby throttles the processing speed of co-located decode tasks; decode, in contrast, is memory-bound and proceeds faster when running alone~\citep{Kwon2023vLLM,agrawal2024sarathiserve}.
Moreover, iteration time grows roughly linearly in the amount of prefill work, so adding a second prefill to a batch would nearly double the iteration time without improving parallelism.
For this reason, practical systems process at most one prefill per GPU at a time.
This \emph{prefill-decode contention} creates a fundamental scheduling tension: admitting more prefills increases the rate at which new requests enter the system but slows down all concurrent decodes, and the scheduler must carefully balance how many GPUs are devoted to prefill versus decode-only operation.

\paragraph{\textbf{Workload heterogeneity.}}
Real-world LLM services do not see a single homogeneous stream of requests.
They serve a mix of applications, such as summarization, creative writing, and question answering, which differ widely in typical input and output lengths~\citep{Sun2024Llumnix,Zheng2024LmsysChat1M,Zhao2024WildChat}.
For example, summarization tasks average over 1,000 input tokens with moderate output, while creative writing requires fewer than 100 input tokens yet generates over 900 output tokens on average (see Table~\ref{tab:workload_heterogeneity} in the e-companion for detailed statistics).
This heterogeneity amplifies the scheduling challenge: a class with long prefills and short decodes (e.g., summarization) consumes GPU time during prefill but releases decode capacity quickly, whereas a class with short prefills and long decodes (e.g., creative writing) admits quickly but occupies decode slots for extended periods.
Maximizing revenue requires serving an appropriate mix of both to balance pipeline utilization.
Consequently, this multiclass resource allocation problem is not well served by simple static priority rules such as first-come-first-served or shortest-job-first; instead, it calls for finer control over how resources are distributed across classes.

This paper addresses the following question: \emph{\textbf{How should a large-scale LLM inference system jointly control admission and scheduling across multiple request classes to maximize token-based revenue while respecting service level indicators (SLIs)?}}
Answering this question requires overcoming several challenges.
First, the state-dependent service rates arising from prefill-decode contention create analytical difficulties that preclude direct application of standard queueing results.
Second, heterogeneous workloads induce a multiclass resource allocation problem where the optimal policy depends on the composition of the workload mix.
Third, practical systems must balance revenue maximization against SLIs such as latency and fairness across request classes.

\paragraph{\textbf{Our approach.}}
We model the system as a multiclass many-server queueing network where each GPU operates in one of two modes: \emph{mixed} (running one prefill alongside decodes) or \emph{solo} (decode-only).
Service rates are state-dependent, specifically, decodes run slower in mixed mode due to intensive computation of prefill, and we derive these rates from empirical iteration-time measurements in Section~\ref{sec:speed_abstraction}, capturing essential GPU physics in a tractable analytical framework.

Because production LLM clusters typically comprise hundreds to thousands of GPUs, we study the system through fluid approximation in the many-server regime, which is a standard and well-established approach in operations research for analyzing large-scale stochastic networks.
In this scaling, stochastic fluctuations average out and the system trajectory converges to a deterministic limit, yielding both analytical tractability and high accuracy at scale.
The fluid model reduces to a steady-state linear program (LP) whose solution prescribes how to partition cluster capacity between mixed and solo modes and how to distribute workload across request classes.

We translate the fluid solution into implementable control via a \emph{gate-and-route} architecture: a \emph{prefill gate} regulates admission by tracking class-level occupancy targets from the LP, and a \emph{decode router} directs completed prefills to available GPU slots.
This decomposition into static planning (solving the LP) and dynamic control (enforcing LP targets) is central to our design and underpins the asymptotic optimality results we establish.

\paragraph{\textbf{Our contributions.}}
We make the following contributions.
\begin{enumerate}
    \item \emph{Multiclass many-server model with prefill-decode contention.}
    We develop a queueing network where each GPU operates in mixed or solo mode, with state-dependent service rates that capture how prefill operations throttle co-located decodes.
    The service-rate parameters are calibrated from controlled experiments on production hardware (A100 GPUs) to closely reflect real system behavior.

    \item \emph{Fluid approximation and LP-based planning.}
    We establish convergence of the scaled stochastic system to a deterministic fluid limit in the many-server regime.
    The steady-state analysis reduces to a linear program that prescribes optimal capacity partitioning between mixed and solo modes and class-level occupancy targets.
    This LP formulation provides analytical tractability and serves as the foundation for the control policies we develop.

    \item \textcolor{tc-ins}{\emph{Asymptotically optimal control policies and extensions.}
    We translate the LP solution into an implementable gate-and-route policy and prove its asymptotic optimality under bundled charging, where revenue is credited upon completion.
    Our policy operates at the class level: the LP yields class-specific admission targets that jointly regulate the mixed/solo split and the admission mix, going beyond aggregate prefill/decode split rules.
    We then show that the same planning architecture extends to separate charging, where phase-based revenue recognition changes admission incentives and leads to a priority-based prefill gate, and to service-level indicators (fairness, latency) modeled as constraints or penalties.}

    \item \textcolor{tc-ins}{\emph{Adaptive online control and empirical evaluation.}
    We develop an adaptive online version of the policy that estimates class-level arrival rates from a rolling window and periodically replans the fluid LP, without assuming a known traffic process.
    Trace-driven replay experiments on empirical Azure traces show that it attains higher revenue than representative serving heuristics, while controlled synthetic experiments confirm that per-GPU revenue converges to the fluid optimum as the cluster scales and quantify the revenue cost of imposing fairness and latency SLIs.}
\end{enumerate}

\subsection{Literature Review}
Efficient LLM inference serving has become a critical systems challenge as production deployments face mounting demands for throughput, latency, and resource efficiency.
Early systems work concentrated on improving performance on single GPUs through architectural innovations.
Iteration-level batching made it practical to schedule at token granularity \citep{yu2022orca}, while paged attention reduced KV-cache fragmentation and enabled high utilization under dynamic workloads \citep{Kwon2023vLLM}.
Chunked prefill further enabled interleaving prefill chunks with ongoing decodes \citep{agrawal2024sarathiserve}.

As deployments scaled from single-device prototypes to large-scale production clusters, designers began allowing the prefill and decode stages to be executed on different GPUs to better match the compute-bound prefill phase with the memory-bandwidth-bound decode phase \citep{Patel2024Splitwise,zhong2024distserve}.
\textcolor{tc-ins}{Recent systems analysis further emphasizes that whether and how to disaggregate inference depends pragmatically on model size, traffic mix, and the balance between prefill and decode rates~\citep{mitra2025beyond}.}
This transition from single-GPU optimization to cluster-scale orchestration raises new questions that go beyond engineering heuristics: how should a service provider allocate GPU capacity and route requests across GPUs under heterogeneous workloads to maximize long-run revenue while considering customized service-level indicators?
Existing serving stacks acknowledge these trade-offs but do not yet provide a formal framework for revenue-driven resource allocation at scale.

This operational challenge has attracted growing attention from the operations research (OR) and operations management (OM) communities, as reflected in several recent surveys at the AI/LLM and OR/OM interface~\citep{dai2025ai,zhao2025surveylargelanguagemodels,zhou2024surveyefficientinferencelarge,li2024llminferenceservingsurvey,MitzenmacherShahout2025}.
Work here moves in two directions.
One uses LLMs to enhance OR/OM workflows, such as democratizing optimization~\citep{simchi2025democratizing}, automated optimization modeling~\citep{huang2025orlm}, and supply chain decision-making~\citep{simchilevi2025largelanguagemodelssupply}.
The other, which we pursue, applies OR/OM methodology to LLM inference itself, spanning output-quality improvements~\citep{ai2025majorityvotingllmaggregation,hu2025pretrainedaimodelassisted} and the acceleration of inference through principled resource allocation and scheduling that is the focus of this paper.

In the context of accelerating LLM inference, a recent strand of work employs competitive analysis from the online algorithms community to formalize LLM serving.
Zhou and coauthors model KV-cache-constrained batching and benchmark online schedulers against a hindsight integer program.
They show that under fully adversarial arrivals, no deterministic online algorithm achieves a constant competitive ratio; under additional assumptions, they provide a polynomial-time scheduler with a constant competitive ratio \citep{Jaillet2025KV}.
Follow-up work obtains constant-competitive policies when heterogeneous prefill/decode lengths are modeled directly \citep{WangYeZhou2025} and logarithmic-competitive guarantees under interval predictions of decode length via an adaptive policy (A\_min) contrasted with a conservative upper-bound policy (A\_max) \citep{ChenYeZhou2025}.

Meanwhile, the advancement of queueing theory provides an analytical framework for stochastic systems and control design.
In many-server settings, fluid approximations provide tractable first-order descriptions for capacity planning and performance analysis \citep{Whitt2006Fluid,zhang2013fluid}, including accuracy guarantees for sizing under impatience \citep{bassamboo2010accuracy} and asymptotically optimal scheduling structures for multiclass systems with abandonment \citep{atar2010cmu,long2020dynamic,long2024generalized}.
Subsequent work enriches these models by exploiting within-queue heterogeneity, allowing dependence between service requirements and patience \citep{bassamboo2016scheduling,wu2019service}, and analyzing state-dependent service rates and slowdowns \citep{DongFeldmanYomTov2015Slowdowns}.
In parallel, delay estimation and information-sharing frameworks have been developed to manage latency considerations in complex service systems \citep{IbrahimWhitt2009DelayEst,Ibrahim2018DelayInfoSurvey}.
Finally, the literature addresses methodological concerns such as robustness to model/input uncertainty in simulation \citep{Lam2016RobustSensitivity,GhoshLam2019RobustSimulation}, alongside control-oriented heavy-traffic perspectives on dynamic admission/sequencing \citep{HarrisonZeevi2004HW,Ata2006Thin} and state-space-collapse analyses in parallel-server networks \citep{DaiTezcan2011SSC}.

{\color{tc-ins}
Applying these queueing tools to LLM inference, recent theoretical work has begun to model LLM serving through stochastic and fluid approximations.
\citet{Ao2025WAIT} study KV-memory growth and batch-time linearity on a single GPU and design threshold policies that approach fluid-optimal throughput while controlling latency.
Complementary results establish throughput optimality of work-conserving rules under simplified token-time abstractions and motivate piecewise-linear iteration-time models~\citep{li2025throughput}.
These studies provide useful foundations for single-GPU scheduling and throughput analysis.

Our work instead studies cluster-scale serving with many GPUs, heterogeneous request classes, and token-based revenue objectives.
The many-server scaling increases the number of GPUs while keeping per-device service primitives fixed, and the planning problem must decide both the mixed/solo capacity split and the class-aware admission mix.
We therefore build a multiclass fluid approximation using service rates calibrated from empirical iteration-time measurements, solve a steady-state LP for occupancy targets, and translate those targets into a gate-and-route policy with asymptotic optimality guarantees.
Our contribution is novel along two complementary dimensions.
Relative to the systems literature on LLM serving, which largely treats prefill--decode scheduling through engineering heuristics or fixed capacity splits, we provide a principled cluster-scale control framework that ties admission and routing to an explicit revenue and SLI objective and comes with provable optimality guarantees in the many-GPU limit.
Relative to the queueing literature on multiclass many-server systems, our model goes beyond the classical fixed-pool service network: prefill admission itself reshapes the mixed/solo GPU composition, and hence the decode capacity available to each class, so that admission and routing are coupled through the server configuration in a way the standard many-server template does not capture and that calls for new structural and convergence arguments.}

\subsection{Organization}

The remainder of the paper is organized as follows.
Section~\ref{sec:problem} introduces the iteration-time abstraction, the multiclass many-server stochastic network with mixed and solo decode modes, and the bundled revenue objective.
Section~\ref{sec:fluid} develops the many-GPU fluid limit, formulates the steady-state linear program (LP) over per-GPU occupancies, and establishes structural properties such as decode-buffer elimination.
\textcolor{tc-ins}{Section~\ref{sec:policy} constructs the occupancy-anchored gate-and-route policy and proves its asymptotic optimality under bundled pricing.
Section~\ref{sec:slo} presents two extensions: a separate prefill/decode charging scheme with its prioritize-and-route policy, and an SLI-aware formulation that integrates fairness and latency proxies.
Section~\ref{sec:numerical} presents trace-driven replay experiments on empirical Azure traces together with synthetic multiclass workloads, while additional EC experiments validate fluid convergence and component mechanisms.}
Section~\ref{sec:conclusion} concludes and discusses directions for extending the framework.
Technical proofs and additional lemmas are collected in the electronic companion.

\section{Problem Formulation}
\label{sec:problem}
This section develops the queueing model in three parts.
Section~\ref{subsec:preliminaries} provides background on modern LLM inference systems.
Section~\ref{sec:speed_abstraction} characterizes GPU iteration time.
Section~\ref{subsec:stochastic-model} derives the resulting service rates and embeds them into a many-server stochastic network, specifying the state, flows, admissible controls, capacity coupling, and revenue objectives.

\subsection{Preliminaries: Modern LLM Inference Systems}
\label{subsec:preliminaries}

\paragraph{\textbf{Two stages of LLM inference.}}
We formally introduce the two stages in LLM inference, namely \emph{\textbf{prefill}} and \emph{\textbf{decode}}.
In the \emph{\textbf{prefill}} stage, the model reads the entire input prompt once and, layer by layer, builds an internal representation of all input tokens.
Implementations store per-token intermediate states in a key--value (KV) cache so that, in the subsequent \emph{\textbf{decode}} stage, the model can generate output tokens one by one by attending to this cache instead of recomputing all past KV values \citep{shazeer2019fast,dai2019transformer,Kwon2023vLLM}.
Empirical studies show that these two stages stress the hardware differently: prefill behaves like a large, single-shot matrix computation, while decode repeatedly accesses and extends the KV cache and is more sensitive to memory traffic \citep{Kwon2023vLLM,agrawal2024sarathiserve}.
This asymmetry is the root cause of the scheduling trade-offs that we model.

\paragraph{\textbf{Continuous batching.}}
State-of-the-art serving systems do not run one request at a time per GPU.
Instead, they use \emph{continuous} or \emph{iteration-level} batching: in each short iteration, the GPU advances many active requests by one output token in parallel, then repeats this process for the next token \citep{yu2022orca,Kwon2023vLLM}.
In practice, these systems impose a fixed upper bound on how many requests can be batched on a GPU, chosen for engineering reasons such as avoiding out-of-memory errors and reducing dynamic memory management overhead \citep{Kwon2023vLLM,agrawal2024sarathiserve}.
We denote this constant by $B$.

\paragraph{\textbf{Chunked prefill and GPU modes.}}
Long prompts make it inefficient to run prefill in isolation.
Recent systems therefore \emph{chunk} prefills into smaller pieces (with each $C$ tokens) that can be interleaved with decode work on the same GPU \citep{agrawal2024sarathiserve}.
Measurements show that once a prefill chunk is present in an iteration, it tends to dominate the iteration time; running additional prefills in parallel on the same GPU brings no extra benefit because compute capacity is already saturated \citep{agrawal2024sarathiserve}.
Following this evidence, we adopt the standard assumption that each GPU runs at most one prefill chunk at a time.

\begin{figure}[t]
    \centering
    \includegraphics[width=0.98\linewidth]{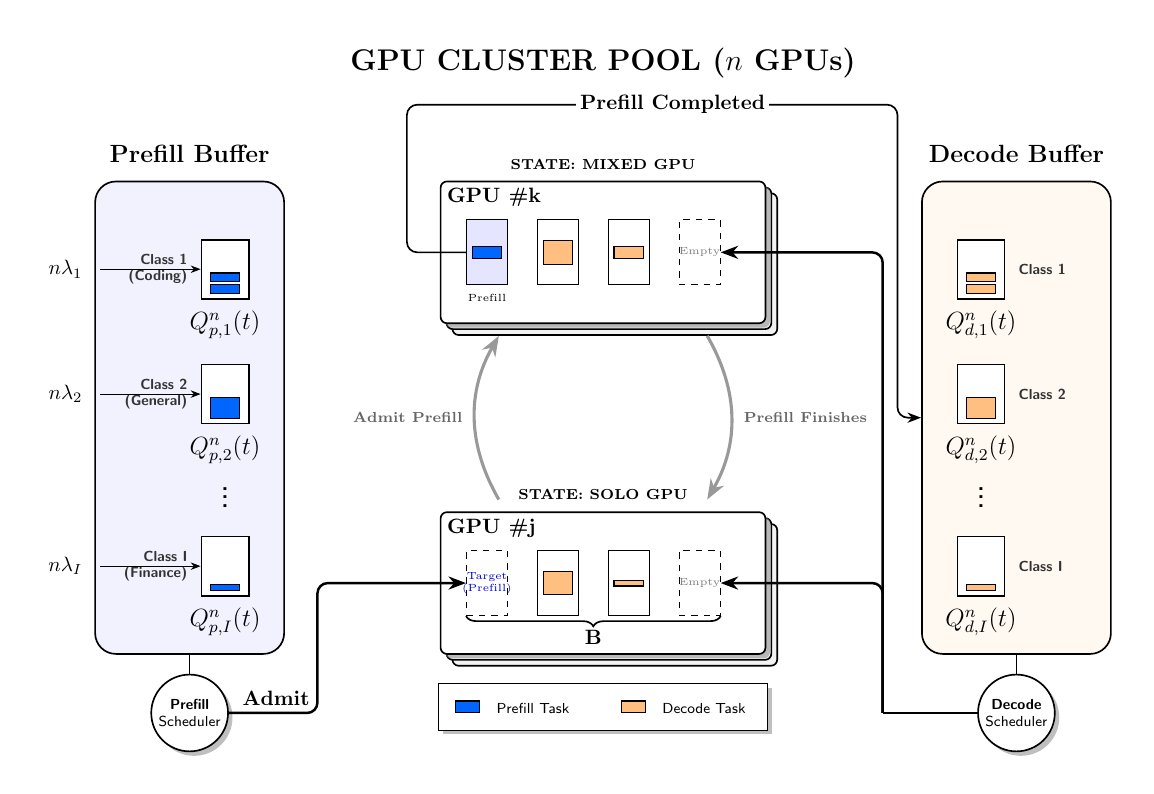}
    \caption{Schematic of the Dynamic GPU Scheduling Architecture. The system manages a cluster of $n$ GPUs with batch size $B$. GPUs transition between the Solo State (decode-only) and Mixed State (one prefill + decodes) based on assignments from the Prefill Scheduler. Completed prefills enter a virtual Decode Buffer, from which the Decode Scheduler populates available slots in either state.}
    \label{fig:llm_inference}
\end{figure}

Figure~\ref{fig:llm_inference} summarizes the resulting GPU-level architecture, which resembles key elements of systems such as vLLM and Sarathi-Serve~\citep{Kwon2023vLLM,agrawal2024sarathiserve}.
New requests from all classes first enter \emph{prefill queues}.
A host-side \emph{Prefill Scheduler} selects some queued requests and starts their prefill stage on GPUs that are not currently running another prefill.
While a prefill is in progress, that GPU is in a \emph{mixed} mode: one slot is occupied by the prefill chunk, and the remaining at most \(B-1\) slots can be used to advance decodes from decode-ready requests of various classes.
When the prefill finishes, the request moves into the \emph{decode buffer}, representing the set of requests ready to decode, with their KV caches typically kept resident in GPU memory.
The GPU then returns to a \emph{solo} decode mode, where all activated slots are devoted to decode.
A separate \emph{Decode Scheduler} continuously fills empty slots on both mixed and solo GPUs from the multiclass decode buffer.
Because a mixed-mode GPU shares its compute and memory bandwidth between a large prefill and several decodes, the per-token progress of those decodes is slower than on a solo GPU; admitting more prefills thus increases the rate at which new requests enter decode but slows down decodes sharing the same GPU.

\subsection{Token Processing on GPUs}
\label{sec:speed_abstraction}

To build a tractable stochastic model, we first characterize GPU iteration time.
In each iteration, the GPU processes a batch that may contain both prefill and decode tasks: all decode tasks in the batch each generate one output token, and if a prefill chunk is present, the GPU processes one chunk of its input tokens.
The iteration time is defined as the duration required to complete this batch processing step.

\paragraph{\textbf{Iteration time.}} A key empirical finding, documented by \citet{li2025throughput}, is that iteration time depends on the total number of tokens processed in a batch:
\begin{equation}
    \tau(b') = c + a \cdot \max\{0, b' - b_0\}.
    \label{eq:iteration_time_general}
\end{equation}
Here, $b'$ denotes the effective token count for the iteration, equal to the prefill chunk size plus the number of concurrent decode tasks.
The constant $c > 0$ captures fixed overheads (e.g., kernel launches), $a > 0$ is the marginal cost per token, and $b_0 \ge 0$ is a threshold below which overheads dominate.
{\color{tc-ins}
This two-regime form summarizes the operating ranges relevant for scheduling.
Decode-only iterations are dominated by fixed per-iteration overhead, while mixed iterations with a sizeable prefill chunk are governed by the marginal cost of processing additional prefill tokens.}
This formula captures two operating regimes:
\begin{itemize}
    \item \textbf{Decode-only iteration:} In a decode-only batch, the effective token count $b'$ equals the batch size, which is typically smaller than $b_0$.
    The $\max$ term in Equation~\eqref{eq:iteration_time_general} then vanishes, and the iteration time reduces to a near-constant value
    \begin{equation}
        \tau_{\text{solo}} = \tau(b') = c, \quad \text{for } b' \le b_0.
    \end{equation}

    \item \textbf{Mixed-batch iteration:} When a batch includes a prefill chunk of size $C$, the chunk dominates the effective token count (i.e., $b' \approx C$).
    For practical chunk sizes where $C > b_0$, iteration time grows linearly in $C$.
    This is consistent with empirical observations from Sarathi-Serve, where decode iteration time is largely independent of batch size but prefill iteration time scales linearly with chunk size~\citep{agrawal2024sarathiserve}.
\end{itemize}
Our experiments in Section~\ref{subsec:calibration} confirm both regimes: Figure~\ref{fig:prefill_calibration} reports the mixed-iteration calibration, where iteration time grows linearly in the prefill chunk size $C$, and the solo-decode calibration, where the KV-cache slope is small relative to the fixed iteration overhead.

Since practical chunk sizes typically satisfy $C \ge b_0$, we adopt the linear form for mixed-batch iteration time:
\begin{equation}
    \tau_{\text{mix}}(C) = \alpha + \beta C, \quad \text{where } \alpha := c - a b_0 \text{ and } \beta := a > 0.
    \label{eq:iteration_time_mixed}
\end{equation}
This two-regime abstraction captures the key prefill–decode interaction while remaining analytically tractable.

\subsection{The Stochastic Model}
\label{subsec:stochastic-model}

We now embed the iteration-time characterization from Section~\ref{sec:speed_abstraction} into a multiclass many-server stochastic network.
The model is indexed by the number of GPUs $n\in\mathbb{N}$.

\paragraph{\textbf{Primitives and service parameters.}}

Requests belong to a finite set of classes $\mathcal{I} := \{1,\dots,I\}$.
\textcolor{tc-ins}{A consolidated notation summary is provided in EC Table~\ref{tab:notation-summary}.}
A class-$i$ request is characterized by its representative prompt length $P_i$ and decode length $D_i$ (in tokens).
The system consists of $n$ homogeneous GPUs.
Each GPU can host at most $B\in\mathbb{N}$ concurrent decode streams and at most one prefill at a time.
Prefill is executed in fixed-size chunks of $C>0$ tokens per iteration.

Recall from Section~\ref{sec:speed_abstraction} that the iteration time is $\tau_{\text{solo}} = c$ in decode-only mode and $\tau_{\text{mix}}(C) = \alpha + \beta C$ when a prefill chunk is present.
For notational convenience, write $\tau := \tau_{\text{mix}}(C)$ for the mixed-iteration time.
Service rates are derived as follows.

\begin{itemize}
    \item \textbf{Prefill rate.} A prefill job of length $P_i$ tokens advances $C$ tokens per iteration, each taking time $\tau$.
    Completing the prefill requires $(P_i/C)$ iterations, so the mean service time is $(P_i/C)\tau$ and the rate is
    \[
    \mu_{p,i} = \frac{C}{P_i\,\tau}.
    \]

    \item \textbf{Mixed decode rate.} In mixed mode, a decode job produces one token per iteration.
    A class-$i$ job needs $D_i$ tokens, so the mean service time is $D_i\tau$ and the rate is
    \[
    \mu_{m,i} = \frac{1}{D_i\,\tau}.
    \]

    \item \textbf{Solo decode rate.} In decode-only mode, each token takes $\tau_{\text{solo}}$ seconds.
    Defining $\gamma := 1/\tau_{\text{solo}}$ to be the token generation rate per slot, the mean service time for $D_i$ tokens is $D_i/\gamma$ and the rate is
    \[
    \mu_{s,i} = \frac{\gamma}{D_i}.
    \]
\end{itemize}
We collect these rates as
\begin{equation}
    \mu_{p,i} = \frac{C}{P_i\,\tau},
    \qquad
    \mu_{m,i} = \frac{1}{D_i\,\tau},
    \qquad
    \mu_{s,i} = \frac{\gamma}{D_i}.
    \label{eq:mu-def}
\end{equation}

For analytical tractability, we model prefill, mixed decode, and solo decode service times as independent exponential random variables with rates $\mu_{p,i}$, $\mu_{m,i}$, and $\mu_{s,i}$, respectively, for each class $i\in\mathcal{I}$.

We assume Poisson arrivals: in the $n$th system, class-$i$ arrivals form a Poisson process with rate $\lambda_i^n := n\lambda_i$, where $\lambda_i>0$ is the nominal arrival rate per GPU, so the total offered load grows proportionally with $n$.
Customers are impatient in both the prefill and decode queues: any class-$i$ job that is waiting is endowed with an independent exponential patience time with rate $\theta_i\ge 0$.
Interarrival times, service times, and patience times are assumed mutually independent across all jobs and classes.

\paragraph{\textbf{State and control processes.}}

Fix $n\in\mathbb{N}$.
For each class $i\in\mathcal{I}$ and time $t\ge 0$, denote by $Q_{p,i}^n(t)$ the number of class-$i$ jobs waiting for prefill, by $X_i^n(t)$ the number in prefill service, by $Q_{d,i}^n(t)$ the number waiting for decode (prefill completed), and by $Y_{m,i}^n(t)$ and $Y_{s,i}^n(t)$ the numbers in mixed and solo decode, respectively.
These processes are right-continuous, integer-valued, and change by unit jumps when individual jobs enter or leave the corresponding stage.
The total class-$i$ content in prefill and decode is
\begin{equation}
    Z_{p,i}^n(t) := Q_{p,i}^n(t) + X_i^n(t),
    \qquad
    Z_{d,i}^n(t) := Q_{d,i}^n(t) + Y_{m,i}^n(t) + Y_{s,i}^n(t).
    \label{eq:Z-def}
\end{equation}

The cumulative primitive counting processes are defined as follows.
Let $A_i^n(t)$ be the total number of exogenous arrivals of class-$i$ jobs by time $t$.
Let $B_{p,i}^n(t)$ and $B_{d,i}^n(t)$ denote the total abandonments from the prefill and decode queues of class $i$.
Let $S_{p,i}^n(t)$ be the total prefill completions, and $S_{d,m,i}^n(t)$ and $S_{d,s,i}^n(t)$ the total mixed and solo decode completions; the total decode completions are
\begin{equation}
    S_{d,i}^n(t) := S_{d,m,i}^n(t) + S_{d,s,i}^n(t).
    \label{eq:Sd-def}
\end{equation}

These counting processes admit a standard random time-change representation.
Let
\[
\{N_{A,i}, N_{B_p,i}, N_{B_d,i}, N_{p,i}, N_{d,m,i}, N_{d,s,i}\}_{i\in\mathcal{I}}
\]
be mutually independent unit-rate Poisson processes.
Then, for each $i\in\mathcal{I}$ and $t\ge 0$,
\begin{align}
    A_i^n(t) 
    &= N_{A,i}\!\big(\lambda_i^n\, t\big), &
    B_{p,i}^n(t) 
    &= N_{B_p,i}\!\left(\int_0^t \theta_i\, Q_{p,i}^n(s)\,ds\right), \label{eq:rt-arrival-Bp}\\[0.3em]
    B_{d,i}^n(t) 
    &= N_{B_d,i}\!\left(\int_0^t \theta_i\, Q_{d,i}^n(s)\,ds\right), &
    S_{p,i}^n(t) 
    &= N_{p,i}\!\left(\int_0^t \mu_{p,i}\, X_i^n(s)\,ds\right), \label{eq:rt-Bd-Sp}\\[0.3em]
    S_{d,m,i}^n(t) 
    &= N_{d,m,i}\!\left(\int_0^t \mu_{m,i}\, Y_{m,i}^n(s)\,ds\right), &
    S_{d,s,i}^n(t) 
    &= N_{d,s,i}\!\left(\int_0^t \mu_{s,i}\, Y_{s,i}^n(s)\,ds\right). \label{eq:rt-Sdm-Sds}
\end{align}
Equations~\eqref{eq:rt-arrival-Bp}–\eqref{eq:rt-Sdm-Sds} state that each cumulative count is driven by a unit-rate Poisson process with the corresponding integrated intensity.

Scheduling decisions are encoded through cumulative control processes.
For each class $i$, let $U_{p,i}^n(t)$ be the number of jobs admitted into prefill service by time $t$, and $U_{d,m,i}^n(t)$ and $U_{d,s,i}^n(t)$ the numbers admitted into mixed and solo decode.
Mode switches between decode submodes are counted by
\begin{equation}
    M_{s\to m,i}^n(t) \;\text{ and }\; M_{m\to s,i}^n(t),
    \label{eq:M-sm-ms-def}
\end{equation}
the cumulative numbers of class-$i$ decodes switched from solo to mixed and from mixed to solo by time $t$.
These mode-switch processes are endogenous: they have no external Poisson clocks and are induced by changes in prefill activity on each GPU.
In particular, solo-to-mixed switches occur when a prefill is admitted to a GPU that was previously in pure decode mode, and mixed-to-solo switches occur when that prefill completes; these transitions are structural consequences of the prefill dynamics rather than direct control actions of the scheduling policy.

For later use, define the aggregate in-service counts
\begin{equation}
    X^n(t) := \sum_{i} X_i^n(t),\qquad
    Y_m^n(t) := \sum_{i} Y_{m,i}^n(t),\qquad
    Y_s^n(t) := \sum_{i} Y_{s,i}^n(t),
    \label{eq:agg-XY}
\end{equation}
and similarly $M_{s\to m}^n(t) := \sum_i M_{s\to m,i}^n(t)$ and $M_{m\to s}^n(t) := \sum_i M_{m\to s,i}^n(t)$.
The per-GPU physical constraints imply
\begin{align}
    0 \;\le\; X^n(t) &\;\le\; n, \label{eq:cap-prefill}\\[0.3em]
    0 \;\le\; Y_m^n(t) &\;\le\; (B-1)\,X^n(t), \label{eq:cap-mixed}\\[0.3em]
    0 \;\le\; Y_s^n(t) &\;\le\; B\big(n - X^n(t)\big). \label{eq:cap-solo}
\end{align}
Equation~\eqref{eq:cap-prefill} enforces at most one prefill per GPU, while~\eqref{eq:cap-mixed}--\eqref{eq:cap-solo} bound mixed and solo decodes according to whether a GPU is running a prefill.

\paragraph{\textbf{Admissible policies.}}
We now formalize the notion of a policy.
Let
\[
\pi^n := \Big(
    Q_{p}^n, Q_{d}^n, X^n, Y_m^n, Y_s^n,\;
    A^n, B_{p}^n, B_{d}^n,\;
    S_{p}^n, S_{d,m}^n, S_{d,s}^n,\;
    U_{p}^n, U_{d,m}^n, U_{d,s}^n,\;
    M_{s\to m}^n, M_{m\to s}^n
\Big)
\]
denote the collection of all non-primitive processes in the $n$th system (state, cumulative flows, and control processes), where each symbol stands for the vector over classes $i\in\mathcal{I}$.
Let $\Pi^n$ denote the set of policies that satisfy the admissibility conditions below:

(i) the resulting state processes satisfy the capacity constraints \eqref{eq:cap-prefill}–\eqref{eq:cap-solo} and balance equations \eqref{eq:bal-Qp}-\eqref{eq:bal-Ys} for all $t\ge 0$;

(ii) the policy is \emph{event-driven}, i.e., each control process $U_{\cdot,i}^n(t)$ can change only at arrival epochs, abandonment epochs, service-completion epochs, or at $t=0$;

(iii) within each class, prefill and decode queues are served in first-come-first-served order, and service is non-preemptive.


We say that any $\pi^n\in\Pi^n$ is an \emph{admissible policy}, under which the state processes are then uniquely determined from the primitives and the controls via the balance equations \eqref{eq:bal-Qp}–\eqref{eq:bal-Ys}.

\paragraph{\textbf{Balance equations.}}

Under any policy $\pi^n\in\Pi^n$, the state and cumulative processes satisfy the following flow-balance identities for all $i\in\mathcal{I}$ and $t\ge 0$:
\begin{align}
    Q_{p,i}^n(t) 
    &= Q_{p,i}^n(0) + A_i^n(t) - U_{p,i}^n(t) - B_{p,i}^n(t),
    \label{eq:bal-Qp}\\[0.3em]
    X_i^n(t) 
    &= X_i^n(0) + U_{p,i}^n(t) - S_{p,i}^n(t),
    \label{eq:bal-X}\\[0.3em]
    Q_{d,i}^n(t) 
    &= Q_{d,i}^n(0) + S_{p,i}^n(t) - U_{d,m,i}^n(t) - U_{d,s,i}^n(t) - B_{d,i}^n(t),
    \label{eq:bal-Qd}\\[0.3em]
    Y_{m,i}^n(t) 
    &= Y_{m,i}^n(0) + U_{d,m,i}^n(t) - S_{d,m,i}^n(t) + M_{s\to m,i}^n(t) - M_{m\to s,i}^n(t),
    \label{eq:bal-Ym}\\[0.3em]
    Y_{s,i}^n(t) 
    &= Y_{s,i}^n(0) + U_{d,s,i}^n(t) - S_{d,s,i}^n(t) + M_{m\to s,i}^n(t) - M_{s\to m,i}^n(t).
    \label{eq:bal-Ys}
\end{align}
Equation~\eqref{eq:bal-Qp} says that the prefill queue-length equals its initial content plus arrivals, minus admissions and abandonments.
Equation~\eqref{eq:bal-X} tracks prefill jobs in service as admissions minus completions.
Equation~\eqref{eq:bal-Qd} balances the decode queue as completed prefills minus admissions into decode and abandonments.
Equations~\eqref{eq:bal-Ym}--\eqref{eq:bal-Ys} track mixed and solo decodes as admissions minus completions, plus net inflow from mode switches.

Adding~\eqref{eq:bal-Ym} and~\eqref{eq:bal-Ys} eliminates the mode-switch terms and yields
\begin{equation}
    Y_{m,i}^n(t) + Y_{s,i}^n(t)
    = Y_{m,i}^n(0) + Y_{s,i}^n(0)
      + U_{d,m,i}^n(t) + U_{d,s,i}^n(t)
      - S_{d,m,i}^n(t) - S_{d,s,i}^n(t),
    \label{eq:bal-mode-invariance}
\end{equation}
so mode switches only redistribute ongoing decodes between mixed and solo, without changing their total number.

\paragraph{\textbf{Revenue models and objective functions.}}

Commercial LLM services predominantly use token-based pricing.
We consider two revenue models that differ in when revenue is recognized.

\medskip
\noindent\textit{(1) Bundled charging scheme.}  
The provider charges a single price per request based on the total number of tokens, and revenue is recognized only when the entire request completes (after decode).
For class $i$,
\begin{equation}
    w_i := c_p P_i + c_d D_i,
    \label{eq:w-i-def}
\end{equation}
where $c_p,c_d\ge 0$ are unit prices per prefill and decode token.
The per-GPU average reward over $[0,T]$ under policy $\pi^n$ is
\begin{equation}
    R^n(T;\pi^n) 
    := \frac{1}{nT}\,
        \mathbb{E}^{\pi^n}\!\Bigg[\sum_{i=1}^I w_i\,S_{d,i}^n(T)\Bigg].
    \label{eq:Rn-bundled}
\end{equation}
Only completed requests contribute to~\eqref{eq:Rn-bundled}; prefill work without decode completion yields no revenue.

\medskip
\noindent\textit{(2) Separate charging scheme.}  
Alternatively, prefill and decode tokens may be billed and recognized separately.
The corresponding per-GPU average reward is
\begin{equation}
    \tilde R^{n}(T;\pi^n) 
    := \frac{1}{nT}\,
       \mathbb{E}^{\pi^n}\!\Bigg[\sum_{i=1}^I \big( c_p P_i S_{p,i}^n(T) + c_d D_i S_{d,i}^n(T) \big)\Bigg].
    \label{eq:Rn-separate}
\end{equation}
Both objectives depend on token throughput but induce different scheduling incentives: in particular, the separate scheme~\eqref{eq:Rn-separate} may encourage aggressive prefill admissions to harvest immediate prefill revenue at the expense of downstream decode congestion.
\textcolor{tc-ins}{We develop the main bundled-charging control in Section~\ref{sec:policy} and discuss separate charging as an extension in Section~\ref{sec:separate-charging}.}

\textcolor{tc-ins}{\paragraph{\textbf{Remark on partial-output early termination.}}
In production LLM serving, users may stop generation before the model emits its end-of-sequence token, e.g.\ when the displayed answer already meets the information need.
This behavior can be accommodated by reinterpreting the decode primitive as an effective service time $T_i^{\mathrm{eff}}=\min\{T_i^{\mathrm{comp}},T_i^{\mathrm{stop}}\}$, where $T_i^{\mathrm{comp}}$ is the full generation time and $T_i^{\mathrm{stop}}$ is the user's in-service patience.
The induced effective decode rate $\mu_{m,i}^{\mathrm{eff}}=1/(D_i^{\mathrm{eff}}\tau)$ and the corresponding per-class expected reward enter the planning LP as primitives, so the gate-and-route architecture and the fluid analysis developed below apply with these substituted primitives.
For clarity of exposition we take $D_i$ as fixed in the main development and treat $D_i^{\mathrm{eff}}$ as a straightforward modeling extension.}

\section{Fluid Approximation and Steady-State Planning}

\label{sec:fluid}
In large-scale LLM deployments, providers typically operate hundreds or thousands of GPUs in parallel.
At this scale, the system state is high-dimensional and stochastic, with arrivals, service completions, and abandonments fluctuating across time and devices, making direct stochastic optimization analytically intractable and hard to interpret.
We therefore adopt a many-GPU fluid approximation: consider a sequence of systems indexed by the number of GPUs $n$, scale all queue lengths and in-service counts by $1/n$, and let $n\to\infty$.
In this limit, random fluctuations average out and the network is described by a deterministic set of flow-balance equations and capacity constraints.
Steady-state solutions of this fluid model specify per-GPU occupancies, which serve as planning targets for the stochastic control policies in Section~\ref{sec:policy}.
The formulation itself does not assume that the workload is always overloaded; depending on arrival rates, service primitives, revenue weights, and SLI constraints, the resulting fluid operating point may be underloaded, critically loaded, or congested.

\paragraph{\textbf{Fluid-scaled processes.}}
For any stochastic process $W^n(t)$ in the $n$-th system, we define its fluid-scaled version by
\[
\bar W^n(t) := \frac{1}{n}\,W^n(t),\qquad t\ge 0.
\]
We use an overbar to indicate such scaled processes (e.g., $\bar Q_{p,i}^n(t)$, $\bar X_i^n(t)$, $\bar Y_{m,i}^n(t)$), and we write the corresponding lowercase letters (e.g., $q_{p,i}(t)$, $x_i(t)$, $y_{m,i}(t)$) for generic deterministic fluid trajectories that arise as limits of these scaled processes in Section~\ref{subsec:stochastic-model}.
These functions satisfy, for all $t\ge 0$ and all $i\in\mathcal{I}$, the flow-balance equations:


\begin{align}
q_{p,i}(t) 
&= q_{p,i}(0) + a_i(t) - u_{p,i}(t) - b_{p,i}(t), \label{eq:fluid-qp}\\
x_i(t) 
&= x_i(0) + u_{p,i}(t) - s_{p,i}(t), \label{eq:fluid-x}\\
q_{d,i}(t) 
&= q_{d,i}(0) + s_{p,i}(t) - u_{d,m,i}(t) - u_{d,s,i}(t) - b_{d,i}(t), \label{eq:fluid-qd}\\
y_{m,i}(t) 
&= y_{m,i}(0) + u_{d,m,i}(t) - s_{d,m,i}(t) + m_{s\to m,i}(t) - m_{m\to s,i}(t), \label{eq:fluid-ym}\\
y_{s,i}(t) 
&= y_{s,i}(0) + u_{d,s,i}(t) - s_{d,s,i}(t) + m_{m\to s,i}(t) - m_{s\to m,i}(t). \label{eq:fluid-ys}
\end{align}
Here $q_{p,i}(t)$ and $q_{d,i}(t)$ are the prefill and decode queue contents, $x_i(t)$, $y_{m,i}(t)$ and $y_{s,i}(t)$ are the in-service masses in prefill, mixed decode and solo decode, $u_{p,i}(t)$, $u_{d,m,i}(t)$ and $u_{d,s,i}(t)$ are the cumulative admissions into these stages, $b_{p,i}(t)$ and $b_{d,i}(t)$ are cumulative abandonments, $s_{p,i}(t)$, $s_{d,m,i}(t)$ and $s_{d,s,i}(t)$ are cumulative service completions, and $m_{s\to m,i}(t)$, $m_{m\to s,i}(t)$ are cumulative mode switches between decode submodes.

The primitive arrivals, abandonments, and service completions evolve at their mean rates:
\begin{align}
a_i(t) &= \lambda_i\, t, \label{eq:fluid-a}\\
b_{p,i}(t) &= \int_0^t \theta_{i}\,q_{p,i}(s)\,ds,\qquad
b_{d,i}(t) = \int_0^t \theta_{i}\,q_{d,i}(s)\,ds, \label{eq:fluid-b}\\
s_{p,i}(t) &= \int_0^t \mu_{p,i}\,x_i(s)\,ds, \label{eq:fluid-dp}\\
s_{d,m,i}(t) &= \int_0^t \mu_{m,i}\,y_{m,i}(s)\,ds,\qquad
s_{d,s,i}(t) = \int_0^t \mu_{s,i}\,y_{s,i}(s)\,ds. \label{eq:fluid-dd}
\end{align}
Equation~\eqref{eq:fluid-a} gives the deterministic arrival rate, \eqref{eq:fluid-b} the abandonment flows from the prefill and decode queues, and \eqref{eq:fluid-dp}--\eqref{eq:fluid-dd} the prefill and decode completion flows driven by the in-service masses.

Let
\[
x(t):=\sum_{i}x_i(t),\qquad
y_m(t):=\sum_{i}y_{m,i}(t),\qquad
y_s(t):=\sum_{i}y_{s,i}(t),
\]
and similarly $u_p(t):=\sum_i u_{p,i}(t)$ and $s_p(t):=\sum_i s_{p,i}(t)$.
The per-GPU capacity constraints in the fluid model are
\begin{align}
& 0 \le x(t) \le 1,\label{eq:fluid-cap-x}\\
& 0 \le y_m(t) \le (B-1)\,x(t),\label{eq:fluid-cap-ym}\\
& 0 \le y_s(t) \le B\,\big(1-x(t)\big),\label{eq:fluid-cap-ys}
\end{align}
which mirror the prefill and decode caps in \eqref{eq:cap-prefill}–\eqref{eq:cap-solo} after scaling by $n$.

Finally, admission feasibility holds at the fluid level.
For all $0\le s\le t$ and all $i\in\mathcal{I}$,
\begin{align}
u_{d,m,i}(s,t)+u_{d,s,i}(s,t)
&\le  q_{d,i}(s)+s_{p,i}(s,t)-b_{d,i}(s,t),\label{eq:fluid-feas-decode}\\
u_{p,i}(s,t) &\le  q_{p,i}(s)+a_i(s,t)-b_{p,i}(s,t),\label{eq:fluid-feas-prefill}
\end{align}
where $u_{p,i}(s,t):=u_{p,i}(t) - u_{p,i}(s)$ and $a_i(s,t)$, $s_{p,i}(s,t)$, $b_{p,i}(s,t)$, etc.\ are defined analogously.
Inequalities~\eqref{eq:fluid-feas-decode}--\eqref{eq:fluid-feas-prefill} state that, over any time interval $[s,t]$, admissions into each stage cannot exceed the fluid already present in the corresponding buffer plus the net inflow into that buffer.

\begin{assumption}[Convergence of initial state]
\label{ass:initial_state}
    The initial states of fluid-scaled processes converge to a deterministic state:
$\big(\bar Q_{p,i}^n(0),\bar Q_{d,i}^n(0),\bar X_i^n(0),\bar Y_{m,i}^n(0),\bar Y_{s,i}^n(0)\big)\Rightarrow (q_{p,i}(0),q_{d,i}(0),x_i(0),y_{m,i}(0),y_{s,i}(0))$;
\end{assumption}

\begin{theorem}[Fluid limit]\label{thm:fluid-limit}
Fix any finite horizon $T>0$.
The sequence of fluid-scaled stochastic processes
\[
\begin{aligned}
\bar{\mathcal{X}}^n(t):=
\big(&\{\bar Q_{p,i}^n,\bar Q_{d,i}^n,\bar X_i^n,\bar Y_{m,i}^n,\bar Y_{s,i}^n\}_{i\in\mathcal{I}},
\{\bar S_{p,i}^n,\bar S_{d,m,i}^n,\bar S_{d,s,i}^n\}_{i\in\mathcal{I}},
\{\bar B_{p,i}^n,\bar B_{d,i}^n\}_{i\in\mathcal{I}},\\
&\{\bar U_{p,i}^n,\bar U_{d,m,i}^n,\bar U_{d,s,i}^n\}_{i\in\mathcal{I}},
\{\bar M_{s\to m,i}^n-\bar M_{m\to s,i}^n\}_{i\in\mathcal{I}}\big)
\end{aligned}
\]
is tight in $\mathbb{D}([0,T],\mathbb{R}^d)$ under the Skorokhod $J_1$ topology, where $d$ is the total dimension of the vector above.
Moreover, any subsequential weak limit $\bar{\mathcal{X}}(t)$ is almost surely continuous and, on $[0,T]$, satisfies the fluid model equations \eqref{eq:fluid-qp}--\eqref{eq:fluid-ys} with \eqref{eq:fluid-a}--\eqref{eq:fluid-dd}, the capacity constraints \eqref{eq:fluid-cap-x}--\eqref{eq:fluid-cap-ys}, with initial state given by $\mathbf{z}_0$.
\end{theorem}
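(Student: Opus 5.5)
We follow the standard martingale and random-time-change route for many-server fluid limits. The plan has four steps: (i) a uniform bound on fluid-scaled contents over $[0,T]$; (ii) a functional law of large numbers for the primitive Poisson processes; (iii) tightness of every component of $\bar{\mathcal X}^n$ through moduli of continuity; (iv) identification of subsequential limits by passing to the limit in the balance and capacity relations.

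\emph{Step 1: stochastic boundedness.} Every content process is bounded by its initial content plus the arrivals received. Hence
\begin{equation*}
\sup_{t\le T}\big(\bar Q_{p,i}^n+\bar X_i^n+\bar Q_{d,i}^n+\bar Y_{m,i}^n+\bar Y_{s,i}^n\big)(t)\le \bar Z^n(0)+\tfrac1n N_{A,i}(n\lambda_i T).
\end{equation*}
The right-hand side is tight by Assumption~\ref{ass:initial_state} and the SLLN. In-service masses are also deterministically bounded by \eqref{eq:cap-prefill}--\eqref{eq:cap-solo}: $\bar X^n\le1$ and $\bar Y^n_m+\bar Y^n_s\le B$.

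\emph{Step 2: FLLN for the primitives.} By the functional SLLN, $\sup_{u\le L}|\tfrac1n N(nu)-u|\to0$ almost surely for every unit-rate Poisson process $N$ and every $L<\infty$. Each integrated intensity, for example $\int_0^t\mu_{p,i}\bar X_i^n(s)\,ds$, is at most a tight constant times $T$ by Step 1. We can therefore write
\begin{equation*}
\bar S_{p,i}^n(t)=\int_0^t\mu_{p,i}\bar X_i^n(s)\,ds+o(1)
\end{equation*}
uniformly on $[0,T]$, and the same holds for $\bar A^n_i$, $\bar B^n_{p,i}$, $\bar B^n_{d,i}$, $\bar S^n_{d,m,i}$ and $\bar S^n_{d,s,i}$. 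Since the integrands are bounded, the integral terms are uniformly Lipschitz in $t$, so these components are C-tight.

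\emph{Step 3: tightness of the controls and queues.} The cumulative controls are nondecreasing, and \eqref{eq:bal-Qp} and \eqref{eq:bal-X} give
\begin{equation*}
\bar U_{p,i}^n(s,t)\le \bar A_i^n(s,t)+\bigl(\bar Q_{p,i}^n(s)-\bar Q_{p,i}^n(t)\bigr)-\bar B_{p,i}^n(s,t).
\end{equation*}
This bound alone is circular, because it involves the queue increment, and the queues are exactly what we want to control. We therefore bound the controls through the downstream stage instead:
\begin{equation*}
\bar U_{p,i}^n(s,t)=\bar X_i^n(t)-\bar X_i^n(s)+\bar S_{p,i}^n(s,t).
\end{equation*}
This still requires an oscillation bound on $\bar X_i^n$. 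The key observation is that \emph{jumps in every component are of size $1/n$}, and every increment of a control is bounded by the corresponding cumulative inflow. Concretely, $\bar U_{p,i}^n(s,t)\le \bar Q_{p,i}^n(s)+\bar A_i^n(s,t)$, and the controls are monotone. This gives tightness in $\mathbb D$ via Helly-type compactness of monotone bounded functions, but not continuity of the limit.

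Continuity of the limit is the main obstacle. The policy could in principle admit an $O(1)$ fraction of queued fluid instantaneously. That would make the limit of $\bar U_{p,i}^n$ jump, and with it $\bar X^n$ and the mode-switch differences. Capacity prevents such bursts after time $0$. A jump in $\bar U_{p,i}$ at time $t$ must be matched by a jump in $\bar X_i$, since $\bar S_p$ is continuous. If $\bar X$ is already at its cap, there is no room. If it is below the cap, then the queue must have been served earlier, because under event-driven admission the gate admits only at event epochs, and in a window of length $\delta$ there are $O(n\delta)$ events. I would first try to make this rigorous by noting that in an event-driven, non-preemptive system the number of admissions at a single event epoch is at most the number of freed slots plus $1$. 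Hence the total admissions on $[s,t]$ are at most $\bar A(s,t)+\bar S(s,t)+\bar B(s,t)+O(1/n)$ plus the initial idle capacity (the capacity vacancy $1-\bar X^n(0)$ and the corresponding decode vacancy), which only matters at $t=0$. Including that initial vacancy as a possible jump at time $0$, we would absorb it into $\mathbf z_0$ by interpreting the initial state post-admission. This gives an asymptotic modulus $w(\bar U^n,\delta)\le C\delta+o(1)$ on $(0,T]$. The mode-switch differences are then controlled through \eqref{eq:bal-Ym} as a combination of already C-tight terms. The queues follow from the balance equations, so all components are C-tight and every limit is continuous.

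\emph{Step 4: identification.} Take a subsequence along which the processes converge, and use Skorokhod representation to obtain almost sure uniform convergence, which is legitimate because the limits are continuous. Pass to the limit in \eqref{eq:bal-Qp}--\eqref{eq:bal-Ys} and in the FLLN representations of Step 2. Integrals of uniformly convergent bounded integrands converge, which yields \eqref{eq:fluid-qp}--\eqref{eq:fluid-dd}. The capacity inequalities \eqref{eq:fluid-cap-x}--\eqref{eq:fluid-cap-ys} are closed conditions and are preserved under the limit, and the initial state converges by Assumption~\ref{ass:initial_state}.
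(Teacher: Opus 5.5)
\paragraph{Review.} Your architecture is the same as the paper's: stochastic boundedness from capacity and arrivals, the FSLLN for the time-changed Poisson primitives, tightness, then Skorokhod representation and passage to the limit in the balance equations. You are right that continuity of the limits of the controls is the crux; the paper disposes of it by asserting that all jumps are of size $1/n$.

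\paragraph{The gap.} Your resolution in Step 3 fails for the admissible class of Section~\ref{subsec:stochastic-model}. Admissibility requires only capacity and balance, event-driven controls, within-class FCFS and non-preemption; it does not require non-idling. Two of your claims therefore fail. First, admissions at one event epoch are bounded by the number of currently vacant slots, not by the number freed at that epoch plus one. Second, vacant capacity can accumulate at any $s>0$, not only at $t=0$. Your sentence ``if it is below the cap, then the queue must have been served earlier'' is precisely a non-idling assumption.

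\paragraph{Counterexample.} Take one class with $\bar Q^n_{p,1}(0)\to 1$ and all other initial contents zero. Let the policy idle until the first arrival epoch $\sigma_n$ after $t=1$, and then admit $\min\{n,Q^n_{p,1}(\sigma_n-)\}$ jobs at once. This policy is admissible, $\sigma_n\to 1$, and $\bar X^n_1$ converges to a function that jumps by $\min\{1,q_{p,1}(1)\}>0$ at $t=1$. If you split the batch over two consecutive arrival epochs, even $J_1$-tightness fails. This also shows that monotonicity plus boundedness gives compactness only in weaker senses (pointwise on a dense set, or $M_1$), not in $J_1$. That applies to your Helly step and equally to the paper's appeal to a criterion for nondecreasing processes. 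Hence no bound $w(\bar U^n,\delta)\le C\delta+o(1)$ is available, and the continuity claim cannot be proved at the stated generality.

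\paragraph{What is missing.} The missing ingredient is a hypothesis on the policies, not a sharper estimate. Either of the following suffices.
\begin{itemize}
\item[(a)] The policies are non-idling: at every time, either all relevant slots are busy or the corresponding buffer is empty. Then you get a bound of the form you wrote, e.g.\ $\bar U^n_p(s,t)\le \bar A^n(s,t)+\bar S^n_p(s,t)$ summed over classes, and analogously for decode admissions.
\item[(b)] Each event epoch after time $0$ triggers at most $K$ control increments, for a fixed $K$. Then $\bar U^n(s,t)$ is at most $K/n$ times the number of primitive events in $(s,t]$, whose modulus Step~2 controls.
\end{itemize}
Both conditions hold for the gate-and-route, prioritize-and-route and SLI-aware policies of Sections~\ref{sec:policy}--\ref{sec:slo}. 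With the hypothesis stated, and with the time-$0$ batch absorbed into a post-admission initial state assumed to satisfy Assumption~\ref{ass:initial_state}, the rest of your argument goes through.

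\paragraph{A smaller repair.} Deriving the modulus of the net switch process from \eqref{eq:bal-Ym} is circular, because $\bar Y^n_{m,i}$ has not yet been shown to be C-tight. Instead, as the paper does, use the structural bounds $M^n_{s\to m}(s,t)\le B\,U^n_p(s,t)$ and $M^n_{m\to s}(s,t)\le B\,S^n_p(s,t)$ to control the switch processes first. Then read off $\bar Y^n_{m,i}$, $\bar Y^n_{s,i}$ and the queues from the balance equations.
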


\begingroup
\setlength{\abovedisplayskip}{4pt}
\setlength{\belowdisplayskip}{4pt}
\setlength{\abovedisplayshortskip}{2pt}
\setlength{\belowdisplayshortskip}{2pt}

\paragraph{\textbf{Fluid Reward Objectives.}}
Consistent with the revenue models defined in Section~\ref{sec:problem}, we formulate the fluid control objectives for the two charging schemes separately.

\noindent \textit{(1) Bundled Objective.}
Under the bundled scheme, value is realized only upon request completion.
Thus, the objective maximizes the weighted throughput of the decode phase:
\begin{equation}
    R(T) := \frac{1}{T} \int_0^T \sum_{i=1}^I w_i \, (y_{m,i}(\tau) \mu_{m,i}+ y_{s,i}(\tau) \mu_{s,i}) \, d\tau,
\end{equation}
where $w_i = c_p P_i + c_d D_i$ is the total reward for a completed class-$i$ request.
Note that the prefill activity $x_i(\tau)$ contributes to the objective only indirectly by feeding the decode queue.

\noindent \textit{(2) Separate Objective.}
Under the separate scheme, the system accumulates value continuously as tokens are processed in both phases.
The objective becomes:
{\color{tc-ins}
\begin{equation}
    \tilde R(T) := \frac{1}{T} \int_0^T \sum_{i=1}^I \Big[ \underbrace{(c_p P_i)}_{\text{prefill value}} \mu_{p,i} \, x_i(\tau) + \underbrace{(c_d D_i)}_{\text{decode value}} (y_{m,i}(\tau) \mu_{m,i}+ y_{s,i}(\tau) \mu_{s,i}) \Big] \, d\tau.
\end{equation}
}

\subsection{Fluid Control Problem}\label{subsec:fluid-routing}

We first consider the bundle objective (the separate charging scheme will be discussed in Section~\ref{sec:separate-charging}) and solve a steady-state (fluid) optimization to choose the optimal long-run occupancy shares and the routing of prefilled tasks across mixed and solo decode pools.
This formulation intentionally abstracts away the transient effects of stochastic variability in interarrival, prefill, and decode times (and abandonment/patience, if present), and instead enforces constraints only in terms of average arrival and service rates.
The solution delivers capacity-splitting targets that will guide the stochastic control policy developed in the next section.

\begin{equation}
\begin{aligned}
\label{eq:LP-original}
\max_{\{x_i,\, y_{m,i},\, y_{s,i},\, q_{p,i},\, q_{d,i}\}} 
\quad & \sum_{i=1}^{I}  w_i\,\bigl(y_{m,i}\, \mu_{m,i} \;+\;  y_{s,i}\, \mu_{s,i}\bigr) \\[4pt]
\text{s.t.}\quad 
& \sum_{i=1}^{I} x_{i} \;\le\; 1, && \text{(Prefill Capacity)} \\[4pt]
& \sum_{i=1}^{I}  y_{m,i} \;\le\; (B-1)\sum_{i=1}^{I} x_{i}, && \text{(Mixed Decode Capacity)} \\[4pt]
& \sum_{i=1}^{I}  y_{s,i} \;\le\; B\Big(1 - \sum_{i=1}^{I} x_{i}\Big), && \text{(Solo Decode Capacity)} \\[6pt]
& \lambda_i - \theta_i\, q_{p,i} \;=\; \mu_{p,i}\, x_i, \qquad \forall i, && \text{(Prefill Flow Balance)} \\[4pt]
& \mu_{p,i}\, x_i\ - \theta_i\, q_{d,i} \;=\; 
\mu_{m,i}\,  y_{m,i} \;+\; \mu_{s,i}\,  y_{s,i}, \qquad \forall i, && \text{(Decode Flow Balance)} \\[6pt]
& x_i,\,  y_{m,i},\,  y_{s,i},\, q_{d,i},\, q_{p,i} \;\ge\; 0, \qquad \forall i. && \text{(Non-negativity)}
\end{aligned}
\end{equation}

The linear program \eqref{eq:LP-original} describes the steady-state fluid model for routing prefill and decode work of multiple classes across GPUs with batch size $B$.
The decision variables are long-run, per-GPU averages: $x_i$ is the fraction of time a GPU devotes to class-$i$ prefill; $y_{m,i}$ and $y_{s,i}$ are the average class-$i$ decode occupancies in mixed mode and solo mode; and $q_{p,i}$ and $q_{d,i}$ are the average prefill and decode queue masses.

The capacity constraints in the first three lines of \eqref{eq:LP-original} enforce the per-GPU limits: at most one prefill can run on a GPU, and, conditional on whether a prefill is present, at most $B-1$ (mixed) or $B$ (solo) decodes can be served in parallel.
The flow-balance constraints for each class $i$ state that, in steady state, arrivals net of prefill abandonments equal the prefill completion rate, and that prefill completions, net of decode abandonments, are exactly matched by the total decode completion rate.
The objective in the first line of \eqref{eq:LP-original} maximizes the per-GPU long-run reward by weighting class-$i$ decode completions in mixed and solo modes with $w_i = c_p P_i + c_d D_i$, the total value of a completed request under bundled pricing.

\endgroup

\begin{proposition}[Decode-buffer elimination]\label{prop:qd-zero}
\textcolor{tc-ins}{In the calibrated GPU regime where solo decode is at least as efficient as mixed decode, i.e., $\gamma\,\tau \ge (B-1)/B$, the steady-state fluid LP admits an optimal solution with $q_{d,i}^\star = 0$ for all $i$.}
\end{proposition}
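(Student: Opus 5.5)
The plan is an exchange argument. I would start from an arbitrary optimal solution $(x,y_m,y_s,q_p,q_d)$ of \eqref{eq:LP-original}, which exists because the feasible region is nonempty and the objective is bounded by the capacity constraints. I would then modify it class by class so that every $q_{d,i}$ becomes zero while feasibility and the objective value are preserved.

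If $\theta_i=0$, the variable $q_{d,i}$ does not appear in any constraint or in the objective, so it can simply be set to $0$.

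Now fix a class $i$ with $\theta_i>0$ and $q_{d,i}>0$. The idea is to shift the decode-queue abandonment upstream into the prefill queue. Define
\[
\delta := \frac{\theta_i\, q_{d,i}}{\mu_{p,i}},\qquad x_i' := x_i-\delta,\qquad q_{p,i}' := q_{p,i}+q_{d,i},\qquad q_{d,i}' := 0 .
\]
I would check the following points in order.
\begin{itemize}
\item $x_i'\ge 0$: decode flow balance gives $\mu_{p,i}x_i\ge\theta_i q_{d,i}$.
\item Prefill flow balance still holds, since $\lambda_i-\theta_i q_{p,i}' = \mu_{p,i}x_i - \theta_i q_{d,i} = \mu_{p,i}x_i'$.
\item Decode flow balance holds with unchanged $y_{m,i},y_{s,i}$, since $\mu_{p,i}x_i' - 0 = \mu_{p,i}x_i-\theta_i q_{d,i}$.
\item The objective is unchanged, because it depends only on the decode occupancies.
\item The prefill capacity constraint becomes slacker.
\item The solo capacity constraint gains slack $B\delta$, because its right-hand side increases by $B\delta$.
\end{itemize}

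The only constraint that can fail is mixed decode capacity. Its right-hand side drops by $(B-1)\delta$, so $\sum_j y_{m,j}$ may exceed $(B-1)\sum_j x_j'$ by some amount $e\in[0,(B-1)\delta]$.

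To repair it, I would remove a total of $e$ units of mixed-decode occupancy, taken from any classes $j$ with $y_{m,j}>0$. These are available, since the excess $e$ is at most $\sum_j y_{m,j}$. For each unit $\varepsilon_j$ removed from class $j$, I would add $\varepsilon_j\mu_{m,j}/\mu_{s,j}$ units of solo occupancy to the same class $j$. This keeps every class's decode throughput $\mu_{m,j}y_{m,j}+\mu_{s,j}y_{s,j}$ fixed. Consequently every decode flow balance and the objective are preserved exactly.

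The calibration assumption enters only here. Using \eqref{eq:mu-def}, the ratio is
\[
\frac{\mu_{m,j}}{\mu_{s,j}}=\frac{1}{\gamma\tau}\le \frac{B}{B-1}.
\]
So the added solo occupancy is at most $eB/(B-1)\le B\delta$, which fits in the freed solo slack. All nonnegativity constraints are clearly maintained.

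Repeating the step for each class gives a feasible solution with the same optimal value and $q_{d,i}^\star=0$ for all $i$. The steps do not interfere: each one leaves earlier zeros intact and never makes the prefill capacity constraint tighter.

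The main obstacle is this repair of the mixed-capacity constraint. Cutting prefill mass shrinks mixed capacity, so one must verify that the displaced mixed decodes can be re-hosted in solo mode without losing throughput. That is precisely the condition $\gamma\tau\ge (B-1)/B$: each solo slot is at least $(B-1)/B$ times as productive as a mixed slot, while each unit of removed prefill frees $B$ solo slots and removes only $B-1$ mixed slots.
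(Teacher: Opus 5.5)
Your proposal is correct and is essentially the paper's own exchange argument: your $\delta=\theta_i q_{d,i}/\mu_{p,i}$ is exactly the paper's $\Delta_i$, your shift $q_{p,i}'=q_{p,i}+q_{d,i}$ is its $\tilde q_{p,i}$, and the throughput-preserving transfer of mixed to solo occupancy, with the solo check $e/(\gamma\tau)\le B\delta$, is the same repair of the mixed-capacity constraint. The differences are cosmetic: you process classes one at a time rather than simultaneously, you treat $\theta_i=0$ explicitly, and you skip the KKT detour, which the paper's construction never actually uses. One small fix: write the solo bound as $e/(\gamma\tau)\le (B-1)\delta/(\gamma\tau)\le B\delta$, so that nothing is divided by $B-1$ when $B=1$.
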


\textcolor{tc-ins}{The efficiency condition in Proposition~\ref{prop:qd-zero} reflects the measured prefill--decode asymmetry: a solo decode GPU uses all $B$ slots for decode, while a mixed GPU reserves one slot for prefill and runs slower decode iterations.}
Since revenue is generated only when a request \emph{finishes} decode, fluid mass held in the decode buffer yields no reward and only delays completions.
We prove that any LP solution with $q_{d,i}>0$ can be improved by moving this backlog upstream while keeping the capacity constraints, which weakly increases the completion rate; hence, at optimality, the decode buffer is empty in steady state.

\textcolor{tc-ins}{This coupling also changes the nature of the control problem. Admitting prefill work does not merely add future decode demand; it also changes the mixed/solo GPU composition and hence the downstream decode service capacity available to that demand. As a result, the policy must regulate admission and routing jointly through occupancy targets, rather than treat routing over a fixed server pool as a separate subproblem.}

A key implication for control is that the optimal reward rate is determined by \emph{occupancy proportions}, i.e. how much GPU time is spent on prefill and how decode slots are filled.
In a backlogged system, the fluid-optimal plan keeps both prefill and decode fully utilized and fixes the fraction of GPUs running prefills at its target level.
The remaining design question is how to route completed prefills between mixed and solo decode so that capacity is saturated without building a decode backlog.
This leads to a simple gate-and-route architecture: a \emph{prefill gate} that regulates the target prefill occupancy (and class mix) and a \emph{decode router} that splits work between mixed and solo decodes to keep slots busy while preventing persistent decode queues.

\section{\textcolor{tc-ins}{Gate-and-Route Control under Bundled Charging}}
\label{sec:policy}

{\color{tc-ins}
Building upon the fluid-optimal solution, we now develop the core implementable control framework for the stochastic $n$-GPU system under the bundled, completion-based revenue objective.
Our approach operationalizes the fluid prescriptions by decomposing the scheduling problem into two hierarchical stages: a static resource partitioning phase that fixes the cluster configuration, and a dynamic control phase that manages job admission and routing in real time.
This section focuses on the occupancy-based \emph{Gate-and-Route} policy, which regulates prefill occupancies and decode routing to attain the fluid-optimal throughput under bundled charging.
Extensions that modify revenue recognition or add service-level requirements are deferred to Section~\ref{sec:slo}.}

\subsection{Bundled Charging Scheme}
\label{subsec:policy-occupancy}
The design of this policy is inspired by the structural insight from Proposition~\ref{prop:qd-zero}, which reveals that the relative efficiency gain of solo decoding is class-independent.
This property suggests a decomposition of the complex scheduling problem.
We can \emph{statically} partition the cluster resources to ensure that the aggregate decode capacity is critically loaded in the fluid limit, capable of fully digesting the downstream workload generated by the optimal prefill throughput.
With the decode stage dimensioned to clear the traffic naturally, the burden of optimization shifts upstream.
Consequently, we focus our \emph{fine-grained dynamic control} on the prefill admission to strictly regulate the job mix and occupancy, allowing the decode stage to operate under a simple work-conserving discipline (FCFS) while still guiding the system toward the fluid-optimal state.
\subsubsection*{Static Planning}
\label{subsec:policy-static-occ}

Let $n$ be the number of GPUs and $B$ the per-GPU decode stream cap.
Take any optimal per-GPU solution of the steady-state fluid LP, denoted by $\bigl(x_i^\star,\,y_{m,i}^\star,\,y_{s,i}^\star,\,q_{p,i}^\star\bigr)_{i\in\mathcal I}$.

Fix the number of mixed GPUs as
\[
M := \Big\lceil n \sum_{i\in\mathcal I} x_i^\star \Big\rceil,
\]
choose any subset $\mathcal{G}_{\mathrm{mix}}\subset\{1,\dots,n\}$ with $|\mathcal{G}_{\mathrm{mix}}|=M$, and set $\mathcal{G}_{\mathrm{solo}}:=\{1,\dots,n\}\setminus\mathcal{G}_{\mathrm{mix}}$.
A GPU $g\in\mathcal{G}_{\mathrm{mix}}$ permanently reserves one slot for prefill (or equivalently, those GPUs prioritize new prefill over decode jobs) and may run at most $(B-1)$ decodes; a GPU $g\in\mathcal{G}_{\mathrm{solo}}$ never runs prefills and may run at most $B$ decodes.

\subsubsection*{Dynamic Control}
\label{subsec:policy-dynamic-occ}

Let $X_i^n(t)$ denote the number of class-$i$ prefill tasks currently in service and $Q_{p,i}^n(t)$ denote the prefill queue length at time $t$.

\paragraph{\textbf{Upstream (prefill) gate on mixed GPUs.}}
Prefills run only on $\mathcal{G}_{\mathrm{mix}}$.
Whenever a mixed GPU $g\in\mathcal{G}_{\mathrm{mix}}$ has its reserved prefill slot idle, identify the set of classes with waiting jobs, $\mathcal{I}_{\mathrm{wait}} = \{i \in \mathcal{I} : Q_{p,i}^n(t^-) > 0\}$.
If $\mathcal{I}_{\mathrm{wait}}$ is empty, the slot remains idle.
Otherwise, compute the occupancy deviation index for each candidate class:
\[
\xi_i(t^-) := \frac{1}{x_i^\star}\bigl(X_{i}^n(t^-) - n\,x_{i}^\star\bigr).
\]
The scheduler admits the head-of-line prefill of the class $i^\star$ that minimizes this deviation:
\[
i^\star \in \arg\min_{i \in \mathcal{I}_{\mathrm{wait}}} \xi_i(t^-).
\]
If there are multiple classes achieving the minimum deviation, ties are broken by selecting the class with the largest queue deviation $\delta_i(t^-) := Q_{p,i}^n(t^-) - Q_{\mathrm{P},i}^{\dagger}$.
Service is non-preemptive and FCFS within each class.

The gate is a negative-feedback rule around the fluid targets $x_i^\star$: classes whose prefill occupancy exceeds their target (large $\xi_i$) are held back, while under-served classes (smallest $\xi_i$) are pulled up by being admitted first.
Since the total mixed prefill capacity $\sum_i x_i^\star$ is fixed by the static planning, not all classes can be above target at once, and repeatedly correcting the most deviated class keeps the long-run average occupancies fluctuating in a small neighborhood of the fluid-optimal levels.
\paragraph{\textbf{Downstream (decode) routing.}}
Maintain a single decode buffer with class-$i$ queue length $Q_{d,i}^n(t)$.
When a class-$i$ job requires decode (either immediately after prefill completion or upon a decode completion elsewhere), route as follows:
\begin{enumerate}
\item If some GPU in $\mathcal{G}_{\mathrm{solo}}$ has a free decode slot, place the job uniformly at random among such GPUs.
\item Otherwise, if some GPU in $\mathcal{G}_{\mathrm{mix}}$ has a free decode slot, place it randomly among such GPUs.
\item Otherwise, append the job to the decode buffer (FCFS across class).
\end{enumerate}

The key insight is that, from a token-level viewpoint, the decode stage only needs to keep up with the stream of decode tokens created by the prefill gate.
Once the policy stabilizes this token production rate and keeps decode compute fully utilized, the decode workload is always consumable, and the detailed class mix becomes secondary.
This is why a simple work-conserving rule such as FCFS is sufficient at decode.

Two mechanisms make this intuition rigorous.
Static planning fixes the mixed versus solo partition so that decode is never overloaded in the fluid limit, and in the binding case it is critically loaded at an LP-optimal point with zero steady-state decode buffer.
GPU physics further implies that the relative speed advantage of solo decoding over mixed decoding is the same across classes, which lets us translate capacity between solo and mixed in a class-agnostic way and treat decode as effectively homogeneous in heavy traffic.
Theorem~\ref{thm:asymptotic_optimality_occ} formalizes this insight by proving that the resulting gate-and-route policy, with FCFS decoding, achieves asymptotic optimality.

\begin{theorem}[Asymptotic optimality of occupancy-based Gate-and-Route Policy]\label{thm:asymptotic_optimality_occ}
Let $R^\star$ denote the optimal objective value of the steady-state fluid routing LP, let Assumption~\ref{ass:initial_state} hold, and assume $\theta_i>0$ for all $i\in\mathcal{I}$.
Let $\pi^{n,\star}$ denote the occupancy-based Gate-and-Route policy parameterized by an optimal solution of the steady-state fluid routing LP.
Then $\pi^{n,\star}$ is asymptotically optimal:
\[
\liminf_{T\to\infty}\;\liminf_{n\to\infty}\; R^n(T;\pi^{n,\star})\;=\;R^\star.
\]
\end{theorem}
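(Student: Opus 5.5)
The proof splits into an upper bound, $\limsup_{T}\limsup_n R^n(T;\pi^n)\le R^\star$ for any admissible sequence, and a matching lower bound for $\pi^{n,\star}$. Only the lower bound is needed for the displayed equality, but the upper bound is what justifies the name optimality.

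\textbf{Upper bound.} We use Theorem~\ref{thm:fluid-limit}.
\begin{itemize}
\item Because $\mathbb{E}[\bar S^n_{d,i}(T)]\le \bar Z^n(0)+\lambda_i T$ is uniformly integrable, we may pass expectations to the limit along any subsequence.
\item Any fluid limit satisfies \eqref{eq:fluid-qp}--\eqref{eq:fluid-cap-ys}. Take time averages on $[0,T]$, $\hat x_i:=T^{-1}\int_0^T x_i$, and similarly for the other quantities.
\item These averages satisfy the LP capacity constraints exactly, since the constraints are linear.
\item They satisfy the flow-balance constraints up to an error of order $(\text{initial and terminal content})/T$, which is $O(1/T)$ because contents stay bounded when $\theta_i>0$.
\item The LP value is continuous (Lipschitz) under perturbations of the right-hand sides, because the feasible set is a polytope and $\theta_i>0$ keeps it nonempty with $q$ free to absorb slack. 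Hence $\limsup_n R^n(T)\le R^\star+O(1/T)$.
\end{itemize}

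\textbf{Lower bound: setup.} We analyze the fluid limit of $\pi^{n,\star}$ directly. Under static planning, $M/n\to x^\star:=\sum_i x_i^\star$. By Proposition~\ref{prop:qd-zero} we may take an optimal solution with $q^\star_{d,i}=0$. A further standard normalization (again using $\gamma\tau\ge(B-1)/B$) shows that decode capacity is not over-committed: the decode workload $\sum_i \mu_{p,i}x_i^\star$ can be cleared by $(B-1)x^\star$ mixed and $B(1-x^\star)$ solo slots.

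\textbf{Lower bound: the three steps.}
\begin{enumerate}
\item \emph{Gate fluid limit.} The prefill gate is a min-index rule on $\xi_i=(X_i-nx_i^\star)/x_i^\star$. In the limit it implies that whenever some class has $q_{p,i}>0$ and $x_i<x_i^\star$, prefill capacity is pushed to it. Every mixed GPU's prefill slot is busy whenever some prefill queue is positive, so $x(t)=x^\star$ on those intervals.

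\item \emph{Prefill occupancies converge.} We would show $x_i(t)\to x_i^\star$ for all $i$, using a Lyapunov function $V(t)=\max_i \xi_i(t)$ (or $\sum_i(x_i-x_i^\star)^-$). Under the gate, $V$ decreases at a rate bounded below whenever it is positive. The case where class $i$ lacks queued work is handled by the flow balance $\lambda_i-\theta_i q^\star_{p,i}=\mu_{p,i}x_i^\star$: if $q^\star_{p,i}>0$, abandonment with $\theta_i>0$ forces $q_{p,i}$ toward a positive level, so class $i$ is always available. If $q^\star_{p,i}=0$, then $\lambda_i=\mu_{p,i}x_i^\star$ and the occupancy follows arrivals.

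\item \emph{Decode layer.} With prefill output converging to $\mu_{p,i}x_i^\star$, decode is an FCFS many-server system with two pools. Its capacity, measured in class-weighted work, is at least the input. Since $\mu_{s,i}/\mu_{m,i}=\gamma\tau$ is class-independent, decode work $D_i$ is served at speed $1/\tau$ or $\gamma$ per slot regardless of class. This reduces decode to a single-class workload problem, and the aggregate decode queue has $q_d(t)\to0$.
\end{enumerate}

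\textbf{Conclusion of the lower bound.} Combining the steps, completion flows converge to $\sum_i \mu_{p,i}x_i^\star$ class by class, which equals the LP throughput. The Cesàro average over $[0,T]$ then tends to $R^\star$ as $T\to\infty$.

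\textbf{Main obstacle.} We expect step 2, global convergence of the gate's fluid dynamics to $x^\star$, to be the hardest, because the fluid limit of a discontinuous index policy is only characterized through a differential inclusion. We would first try a piecewise-linear Lyapunov argument on $V$ and accept uniqueness only of the attracting set, not of the trajectories. A secondary difficulty is showing in step 3 that FCFS decoding fills solo slots first, so that the mixed/solo split attains the LP decode throughput rather than merely a feasible one; this is where the condition $\gamma\tau\ge(B-1)/B$ enters.
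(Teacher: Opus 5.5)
Your lower bound is the paper's proof in all essentials. Showing that $V=\max_i\xi_i\le 0$ eventually is Lemma~\ref{lem:xi-upper}. Using $\theta_i>0$ and $\lambda_i=\theta_i q_{p,i}^\star+\mu_{p,i}x_i^\star$ to make every prefill queue eventually positive, after which saturated mixed capacity forces $x_i=x_i^\star$, is Lemmas~\ref{lem:qp-positive-eventually}--\ref{lem:prefill-convergence-fluid}. Your single-class workload reduction is the paper's $W_d=\sum_i(q_{d,i}+y_{m,i}+y_{s,i})/\mu_{m,i}$ with $\kappa=\gamma\tau$ and $A^\star\le C^\star$ (Lemma~\ref{lem:Wd-drift-positive}, Proposition~\ref{prop:qd-to-zero}).

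What is genuinely new is your upper bound over $\Pi^n$. The paper only identifies the limit under $\pi^{n,\star}$ and never shows that no admissible sequence can exceed $R^\star$. Your time-averaged fluid limit plus LP-sensitivity argument is therefore what actually justifies the word ``optimal.'' To pass expectations to the limit you need uniform integrability of the scaled initial contents, which Assumption~\ref{ass:initial_state} alone does not provide.

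Three points to tighten in the lower bound.

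\emph{First}, the claim that ``$V$ decreases at a rate bounded below whenever it is positive'' is false as stated. When no prefill queue is positive, or when the above-target class is the only one with a positive queue, the gate is work-conserving across classes and keeps raising that class's occupancy. For example, from an empty start every class is admitted at rate $\lambda_i$, and $\lambda_i>\mu_{p,i}x_i^\star$ whenever $q_{p,i}^\star>0$. You therefore need an ``eventually'' argument. The paper's Lemma~\ref{lem:xi-upper} rests on the same unproved claim that $x_i>x_i^\star$ forces $\dot x_i<0$.

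\emph{Second}, consider step 3 in the binding case $A^\star=C^\star$, which is the main case. There ``capacity at least the input'' only gives $\dot W_d\le A(t)-A^\star-\sum_i\theta_i q_{d,i}/\mu_{m,i}$ when $q_d>0$. It is the abandonment term, i.e.\ $\theta_i>0$, that drives $q_d\to 0$, so invoke it there explicitly.

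\emph{Third}, your ``secondary difficulty'' is not one. Solo-first is part of the router's definition, not something to prove. Moreover, the mixed/solo split is irrelevant to fluid revenue. Once $q_d\to 0$, integrate the class-$i$ decode balance over $[0,T]$ and divide by $T$; this gives long-run class-$i$ completions $\mu_{p,i}x_i^\star=\mu_{m,i}y_{m,i}^\star+\mu_{s,i}y_{s,i}^\star$ under any work-conserving router. The condition $\gamma\tau\ge(B-1)/B$ enters only through Proposition~\ref{prop:qd-zero}. Given a $q_d^\star=0$ solution, $A^\star\le C^\star$ is plain LP feasibility and needs no second use of that condition.
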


\textcolor{tc-ins}{The strict positivity $\theta_i>0$ is used in the proof to obtain a negative Lyapunov drift whenever a class-$i$ queue persists; operationally, our online controller imposes a small common regularization $\theta_i=\theta=3\times10^{-4}$ in the planning LP regardless of whether real abandonment is observed (see Section~\ref{sec:numerical}).
When the fluid system is underloaded so that all queues drain in the limit, the LP optimum is attained on the slack of the prefill flow-balance constraint and the result holds trivially; the proof in the EC focuses on the binding-capacity case in which the gate-and-route argument is nontrivial.}

\section{\textcolor{tc-ins}{Extensions: Separate Charging and SLI-Aware Control}}
\label{sec:slo}

{\color{tc-ins}
The preceding section develops the main Gate-and-Route policy under bundled charging, where revenue is credited only when a request completes decode.
We now show that the same fluid-planning architecture supports two extensions.
First, changing the timing of revenue recognition leads to a separate-charging pricing extension and a corresponding priority-based prefill gate.
Second, adding service-level requirements leads to SLI-aware planning constraints and routing rules.}

\subsection{Separate Charging Scheme}
\label{sec:separate-charging}

The bundled-revenue formulation studied above serves as our primary benchmark and aligns with the objective of maximizing the throughput of completed requests.
To complement this view, we also study a \emph{separate charging} objective in which value is recognized separately at prefill and decode.
This variant lets us derive a counterpart optimal policy and clarify how the timing of revenue recognition changes the incentives for admission and routing.
Formally, we define the per-GPU time-averaged reward as follows.
\begin{equation}
\label{eq:sep-reward-stochastic}
\tilde R_n(T;\pi^n)
\;:=\;
\frac{1}{nT}\,\mathbb{E}_{\pi^n}\!\left[
\sum_{i=1}^I \Big(c_p\,P_i\,S^n_{p,i}(T) \;+\; c_d\,D_i\,S^n_{d,i}(T)\Big)
\right],
\end{equation}
where $c_p,c_d\ge 0$ are unit prices, and $S^n_{p,i}(T)$ and $S^n_{d,i}(T)$ are cumulative prefill and decode completions for class $i$ by time $T$.

In steady state, we optimize the corresponding fluid objective over the same feasibility constraints as the bundled LP (i.e., \eqref{eq:LP-original}).
Substituting the service-rate definitions shows that the objective coefficients are class-independent, so the separate-charging LP depends on $(x_i,y_{m,i},y_{s,i})$ only through the aggregate occupancies:
\begin{equation}
\label{eq:LP-separate}
\max_{(x,y,q)}
\; c_p\,\frac{C}{\tau}\sum_{i=1}^I x_i
\;+\;
\frac{c_d}{\tau}\sum_{i=1}^I y_{m,i}
\;+\;
c_d\,\gamma \sum_{i=1}^I y_{s,i}.
\end{equation}

These structural properties yield three key insights for policy design under separate charging.
First, since prefill occupancy earns the same marginal reward $c_p C/\tau$ across all classes, the pricing structure itself does not prioritize any specific class mix in the prefill stage.
Second, because solo-mode decode iterations are faster than mixed-mode ones ($\gamma > 1/\tau$), solo decode occupancy is strictly more valuable per unit time.
A revenue-maximizing controller should thus prioritize saturating solo capacity.

Finally, unlike the bundled scheme where revenue is deferred, separate charging incentivizes the system to maintain a high "inventory" of downstream work.
To keep the more valuable decode slots busy, the prefill gate should favor classes with a larger decode-to-prefill ratio $D_i/P_i$, as they generate more future decode revenue per unit of prefill capacity consumed.
This logic motivates the static priority index used in the \textit{Prioritize-and-Route} policy described below.
The separate-charging optimum may also tolerate persistent decode backlogs, since these backlogs serve as a buffer to ensure continuous decode revenue.

\subsubsection{Prioritize-and-Route policy}

Under the separate-charging objective we reuse the same gate-and-route architecture as in the bundled scheme; the only change is the prefill admission rule (priority).
We briefly summarize the modifications.

\emph{\textbf{Static planning.}} Solve the separate-charging LP~\eqref{eq:LP-separate}.
We define the prefill queue targets and partition the GPUs into $G_{\mathrm{mix}}$ and $G_{\mathrm{solo}}$ following the identical procedure as in the bundled case, using the optimal prefill occupancies $\tilde x_i^\star$ to determine the partition size $\tilde M$.

\emph{\textbf{Dynamic control.}}
The downstream decode routing is unchanged, and the only modification is the upstream prefill gate on $G_{\mathrm{mix}}$.
Whenever a reserved prefill slot becomes idle, let $\mathcal{I}_{\mathrm{wait}}(t^-):=\{i\in\mathcal{I}:Q_{p,i}^n(t^-)>0\}$ denote the set of classes with at least one waiting job.
If $\mathcal{I}_{\mathrm{wait}}(t^-)=\emptyset$, the slot remains idle; otherwise we admit the head-of-line job from a class with the largest decode-to-prefill ratio among waiting classes:
\[
i^\star \in \arg\max_{i\in \mathcal{I}_{\mathrm{wait}}(t^-)}\Bigl\{\frac{D_i}{P_i}\Bigr\},
\]
breaking ties arbitrarily.
Service remains FCFS within each class.

\begin{theorem}[Asymptotic optimality under separate charging]
\label{thm:sep-optimality}
Let $\tilde R^\star$ be the optimal value of the steady-state fluid LP \eqref{eq:LP-separate}, and let $\tilde R_n(T;\pi^n)$ be the per-GPU separate-charging reward in \eqref{eq:sep-reward-stochastic}.
Assume Assumption~\ref{ass:initial_state} holds.

For each $n\in\mathbb{N}$, define the prioritize-and-route policy $\tilde \pi^{n,\star}$ as above, then $\tilde \pi^{n,\star}$ is asymptotically optimal for the separate-charging objective:
\[
\liminf_{T\to\infty}\;\liminf_{n\to\infty}\; \tilde R_n(T;\tilde \pi^{n,\star}) \;=\; \tilde R^\star.
\]
\end{theorem}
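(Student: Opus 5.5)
The proof follows the same two-sided template as Theorem~\ref{thm:asymptotic_optimality_occ}: an upper bound valid for every admissible policy, and a matching lower bound for $\tilde\pi^{n,\star}$ obtained through the fluid limit of Theorem~\ref{thm:fluid-limit}.

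\textbf{Upper bound.} We take an arbitrary sequence $\pi^n\in\Pi^n$ and write $\tilde R_n(T;\pi^n)$ using the compensators of the random time-change representation \eqref{eq:rt-Bd-Sp}--\eqref{eq:rt-Sdm-Sds}. We use $\mathbb{E}[S^n_{p,i}(T)]=\mathbb{E}\int_0^T\mu_{p,i}X_i^n$ and the analogous identities for decode completions. Substituting \eqref{eq:mu-def} turns the integrand into $c_p(C/\tau)\bar X^n+(c_d/\tau)\bar Y^n_m+c_d\gamma\bar Y^n_s$. This is exactly the form of \eqref{eq:LP-separate}.

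By Theorem~\ref{thm:fluid-limit} and uniform integrability (all occupancies are bounded by $B$ per GPU), $\limsup_n\tilde R_n(T;\pi^n)$ is bounded by the time average of this functional along some fluid path. We then average the fluid equations over $[0,T]$. Because $q$, $x$ and $y$ are bounded, the time-averaged occupancies satisfy the LP flow-balance constraints up to $O(1/T)$ error terms. These come from the initial/terminal contents and from the averaged abandonment terms with averaged queues. The capacity constraints \eqref{eq:fluid-cap-x}--\eqref{eq:fluid-cap-ys} are preserved under averaging, and the objective is linear. Using continuity of the LP value in the right-hand-side perturbation, we conclude $\limsup_T\limsup_n\tilde R_n\le\tilde R^\star$. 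This half is identical to the bundled case, and I would cite that argument rather than redo it.

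\textbf{Lower bound.} The key structural step is to describe an optimal solution of \eqref{eq:LP-separate}. Since $\gamma\ge 1/\tau$, and under the regime of Proposition~\ref{prop:qd-zero} solo capacity is worth at least mixed capacity, the optimum does three things. It fixes an aggregate prefill level $\tilde x^\star=\sum_i\tilde x_i^\star$. It fills decode slots as much as the decode inflow allows. And, given $\tilde x^\star$, it allocates prefill capacity greedily in decreasing order of $D_i/P_i$. The reason for the greedy order is that one unit of class-$i$ prefill occupancy generates decode inflow $\mu_{p,i}=C/(P_i\tau)$, which sustains decode occupancy $\mu_{p,i}/\mu_{s,i}\propto D_i/P_i$. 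So, when decode is not saturated, the marginal LP value of class-$i$ prefill is increasing in $D_i/P_i$. This is a fractional-knapsack exchange argument. The limit $\tilde x_i^\star\le\lambda_i/\mu_{p,i}$ comes from $q_{p,i}\ge0$, and the constraint is attained by a threshold class.

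Next we analyze the fluid limit under $\tilde\pi^{n,\star}$, where the static partition pins $x(t)\to\tilde x^\star$ whenever the prefill queues are nonempty. Proceeding in priority order, we show by induction that the highest-priority classes get $x_i(t)\to\lambda_i/\mu_{p,i}$ (their queues drain, since they are always served first). The threshold class takes the residual capacity, and lower classes get zero. This reproduces $\tilde x_i^\star$ asymptotically as $t\to\infty$. For the decode side, routing is work-conserving with solo first. The decode inflow is $\sum_i\mu_{p,i}x_i(t)$, and a Lyapunov/comparison argument on the aggregate decode content shows that $y_s$ and $y_m$ converge to the LP-optimal saturation levels. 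Here decode backlog is allowed and even beneficial, because it keeps slots full. The abandonment at rate $\theta_i$ guarantees that queues stay bounded even when $\theta_i=0$ is not excluded, so we will need to handle that case. If $\theta_i=0$, the decode buffer may grow, but the slots stay full, which only helps the objective.

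\textbf{Main obstacle.} The hardest step is the convergence $t\to\infty$ of the fluid trajectories under the priority gate when there is no $\theta_i>0$ assumption. In that case the negative-drift argument used for Theorem~\ref{thm:asymptotic_optimality_occ} is unavailable. I would first try to avoid needing full state convergence. Since the objective depends only on aggregates, it suffices to show that the long-run averages of $x$, $y_m$ and $y_s$ reach the LP-optimal aggregates. The prefill aggregate is pinned by the static partition once some queue is persistently positive. Decode saturation follows from conservation: long-run decode throughput equals prefill throughput minus abandonment, unless the slots are full. Combining these with the upper bound gives equality of the iterated $\liminf$.
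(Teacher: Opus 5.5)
Your route is correct when $\theta_i>0$, and it rests on the same key idea as the paper. The paper's proof uses four ingredients: the objective depends only on aggregate occupancies; the static partition keeps total prefill occupancy at $\tilde x^\star$ whenever some prefill queue is nonempty; $\mu_{p,i}/\mu_{m,i}=C\,D_i/P_i$, so the $D_i/P_i$ priority maximizes decode-work inflow; and work-conserving decode routing saturates decode when there is enough work. Your proposal uses the same ingredients but organizes them differently.

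The paper's proof is a one-sided sketch. It never shows that $\tilde R^\star$ upper-bounds the reward of an arbitrary admissible policy. Nor does the proof of Theorem~\ref{thm:asymptotic_optimality_occ}, so there is no bundled converse bound to cite. You would have to write out your time-averaging and LP-perturbation argument yourself. It works as sketched, because $\theta_i>0$ keeps all queues bounded.

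The paper also simply asserts that the priority rule ``maximizes the instantaneous inflow.'' For a non-preemptive gate this holds only asymptotically. Your fractional-knapsack characterization (greedy in $D_i/P_i$ under the caps $x_i\le\lambda_i/\mu_{p,i}$, with one threshold class), together with your class-by-class induction on the priority fluid dynamics, is what turns that assertion into a proof.

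Two precisions are needed. First, the greedy limit need not equal the particular $\tilde x_i^\star$ used to size $\tilde M$. It has the same total and a weakly larger $\sum_i (D_i/P_i)x_i$, so its value is still $\tilde R^\star$; state the conclusion that way. Second, your decode-side time-average identity requires the averaged decode abandonment to vanish in the critically loaded case. Showing this needs a drift argument driven by $A(t)\to A^\star$. The bundled bound $x_i(t)\le x_i^\star$ is not available here, because the priority gate can push a top class above its cap while it drains an initial backlog.

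Your plan for the case $\theta_i=0$ cannot be completed, because the claimed equality actually fails there. With $\theta_j=0$, the LP forces $x_j=\lambda_j/\mu_{p,j}$. The priority gate, however, can starve a low-ratio class $j$. Its prefill queue then grows linearly, so your averaged prefill balance weakens to an inequality and the upper bound breaks. If a higher-ratio class has excess demand and decode is unsaturated at the LP optimum, the policy shifts class $j$'s share to that class and earns strictly more than $\tilde R^\star$. You should therefore add $\theta_i>0$ as a hypothesis, as in Theorem~\ref{thm:asymptotic_optimality_occ}. The paper's sketch does not handle this case either.
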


\paragraph{\textbf{The Revenue-Congestion Trade-off and Operational Risks.}}
While Theorem~\ref{thm:sep-optimality} guarantees asymptotic optimality for the separate-charging objective, the underlying incentive shift comes from the structure of \eqref{eq:LP-separate}: revenue is recognized at prefill and decode separately, so a revenue-driven controller may exploit any available prefill capacity even when decode is already congested.
This decoupling changes where congestion accumulates.
Under bundled charging, the policy tends to regulate admissions so that the prefill buffer absorbs overload while the downstream decode buffer remains comparatively lean.
Under separate charging, the system may instead build substantial decode backlogs to keep decode slots continuously busy, as illustrated in Figure~\ref{fig:bundle_separate}.
Operationally, this is problematic because it can lead to memory pressure and create long post-prefill waits and, in extreme cases, requests that complete prefill (and generate revenue) but experience severely delayed decode or abandon before completion.
This motivates augmenting the revenue objective with explicit service constraints.
\textcolor{tc-ins}{The SLI-aware extension below introduces such constraints, and Section~\ref{sec:numerical} uses shadow-price analysis to quantify the resulting economic trade-offs.}

\begin{figure}[htbp]
    \centering
    \begin{minipage}{0.48\textwidth}
        \centering
        \includegraphics[width=\textwidth]{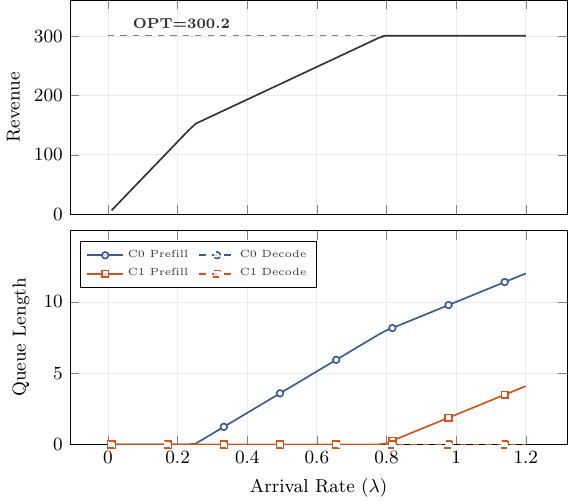}
    \end{minipage}
    \hfill  
    \begin{minipage}{0.48\textwidth}
        \centering
        \includegraphics[width=\textwidth]{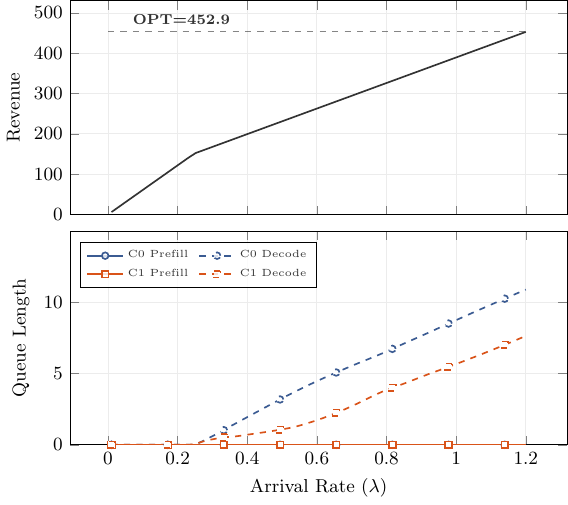}
    \end{minipage}
    \caption{Comparison of Revenue and Queue Lengths under Bundled vs. Separate Charging Schemes (C0: class 0, C1: class 1).}
    \label{fig:bundle_separate}
    \vspace{-4mm}
\end{figure}

{\color{tc-ins}
While the separate-charging extension changes the revenue objective, practical deployments must also adhere to different Service Level Indicators (SLIs), such as fairness and latency limits.
We next show that our fluid-based planning framework can naturally accommodate these operational requirements.
By formulating SLIs as explicit constraints (or penalty terms) within the steady-state optimization, we can generate SLI-aware target occupancies without altering the fundamental structure of the control policy.
This approach offers a flexible way to trade off revenue against diverse service guarantees.}

SLIs are widely used in practice, but we emphasize that in our framework they are defined at the level of the steady-state fluid variables.
A \emph{service-level indicator (SLI)} is a user-facing performance metric (e.g., fairness or latency) computed from the system's steady-state behavior.
In our framework, an SLI is modeled as either (i) a \emph{hard constraint} of the form $g(\mathbf{x},\mathbf{y},\mathbf{q})\le \eta$, or (ii) a \emph{soft penalty} term subtracted from the revenue objective, both expressed in terms of the steady-state variables in the fluid optimization problem.

A central modeling principle we adopt throughout this section is that \emph{a reasonable SLI should not rely on persistent decode-buffer buildup}.
Indeed, in our setting decode-buffer mass generates no value under completion-based reward, but it increases waiting time and, more importantly, can lead to severe GPU-memory pressure due to KV-cache residency and migration.
Accordingly, we impose the following standing assumption for the SLI-aware planning problem: the chosen SLI specification is such that the optimization admits an optimal solution with \(\,q_{d,i}^\star=0\,\) for all \(i\) (decode-buffer elimination).
For completeness, we provide an extension that allows \(q_{d,i}^\star>0\) in the electronic companion; this case is particularly relevant under separate charging.
To this end, we illustrate several canonical SLI specifications and then present the corresponding SLI-aware planning problem and control policy.

\subsubsection*{Resource fairness (prefill and decode)}

 Fairness SLIs control the dispersion of prefill occupancies $\{x_i\}_{i\in\mathcal I}$ and solo decode occupancies $\{y_{s,i}\}_{i\in\mathcal I}$.
While attractive at a high level, fairness constraints can be costly in our setting because they directly constrain the workload mix ($\boldsymbol{x}$ and/or $\boldsymbol{y}_s$).
This can force the system away from a hardware-efficient operating point and create a structural mismatch between prefill output and downstream decode capacity, leading to idling/under-utilization and a reduction in revenue.
This effect is quantified in the Pareto frontiers in Section~\ref{sec:numerical} (Fig.~\ref{fig:frontiers}): Prefill Fairness has a steep shadow price (Fig.~\ref{fig:frontier1}), whereas Decode Fairness is comparatively cheap (Fig.~\ref{fig:frontier2}).

 \noindent\textbf{Prefill Fairness.}
\begin{equation}
\max_{i,j\in\mathcal I} \{x_i - x_j\} \;\leq\; \eta_1,
\qquad
q_{d,i} = 0,\ \forall\, i \in \mathcal I.
\label{eq:slo_fairness_prefill_constraint}
\end{equation}

Equivalently, the same fairness preference can be modeled in the objective via a penalty term
\begin{equation}
l_1 \;=\; \eta_1' \max_{i,j\in\mathcal I} \{x_i - x_j\},
\label{eq:slo_fairness_prefill}
\end{equation}
with weight $\eta_1' > 0$ that tunes the trade-off between revenue and Prefill Fairness.

 \noindent\textbf{Decode Fairness.}
\begin{equation}
\max_{i,j\in\mathcal I} \{y_{s,i} - y_{s,j}\} \;\leq\; \eta_2,
\qquad
q_{d,i} = 0,\ \forall\, i \in \mathcal I.
\label{eq:slo_fairness_decode_constraint}
\end{equation}

In penalty form, we instead add
\begin{equation}
l_2 \;=\; \eta_2' \max_{i,j\in\mathcal I} \{y_{s,i} - y_{s,j}\},
\label{eq:slo_fairness_decode}
\end{equation}
with weight $\eta_2' > 0$; larger $\eta_2'$ places more emphasis on equalizing solo decode usage across classes.

\subsubsection*{Average Time per Output Token}
\label{subsubsec:slo_TPOT}
While the worst-case Time per Output Token is governed by the chunk size $C$, the average TPOT depends on the cluster-wide balance between prefill and decode activity.
Each unit of prefill occupancy $x_i$ introduces mixed-mode iterations that slow co-resident decodes.
A natural SLI is to cap the average TPOT:
\begin{equation}
\frac{\tau(B-1) \sum_{i=1}^{I} x_i + \frac{1}{\gamma}B\bigl(1-\sum_{i=1}^{I} x_i\bigr)}
     {(B-1)\sum_{i=1}^{I} x_i + B\bigl(1-\sum_{i=1}^{I} x_i\bigr)}
\;\leq\; \eta_3,
\label{eq:slo_TPOT_constraint2}
\end{equation}
for some target $\eta_3>0$. In this formulation we retain the standard capacity constraints in \eqref{eq:LP-original}, so idling is permitted when the TPOT cap is tight; the revenue objective still discourages unnecessary idling whenever additional work can be served.

Alternatively, we can incorporate TPOT directly into the objective by penalizing total prefill load:
\begin{equation}
l_3 = \eta_3' 
\frac{\tau(B-1) \sum_{i=1}^{I} x_i + \frac{1}{\gamma}B\bigl(1-\sum_{i=1}^{I} x_i\bigr)}
     {(B-1)\sum_{i=1}^{I} x_i + B\bigl(1-\sum_{i=1}^{I} x_i\bigr)},
\label{eq:slo_TPOT_penalty}
\end{equation}
where $\eta_3' > 0$ controls the strength of the revenue–latency trade-off.

\subsection{SLI-Aware Gate-and-Route Control Policy}
\label{subsec:sli-aware-policy}

To incorporate SLIs into the steady-state optimization, we augment the objective of Section~\ref{subsec:fluid-routing} (or Section~\ref{sec:separate-charging} for separate charging) by subtracting a weighted sum of penalty terms:

\begin{equation}
\max_{(\mathbf{x}, \mathbf{y}, \mathbf{q}) \in \mathcal{F}_{\mathcal{K}}} \quad \sum_{i=1}^{I} w_i \left( \mu_{m,i} y_{m,i} + \mu_{s,i} y_{s,i} \right) - \sum_{k \in \mathcal{K}} l_k,
\label{eq:slo_objective}
\end{equation}
where $\mathcal{K}$ indexes the active SLIs and $l_k$ are chosen from \eqref{eq:slo_fairness_prefill}, \eqref{eq:slo_fairness_decode}, \eqref{eq:slo_TPOT_penalty}, or other application-specific penalties.
We append the SLI-specific constraints to the feasibility constraints from Section~\ref{subsec:fluid-routing}, and denote the feasibility region by $\mathcal{F}_{\mathcal{K}}$.

To realize the targets $(y_{m,i}^\star, y_{s,i}^\star)$ derived from the SLI-aware planning problem, we employ a randomized decode router.
For clarity of exposition, we focus on the zero-buffer case ($q_{d,i}^\star=0$) in this section, which avoids memory pressure and simplifies the tracking mechanism.
An extension that accommodates persistent decode queues ($q_{d,i}^\star > 0$) is provided in Section~\ref{sec:sli-aware-policy-general}.

\emph{\textbf{Static planning.}} This phase follows the identical procedure as defined in Section~\ref{sec:policy}, determining the cluster-level queue targets $Q_{\mathrm{P},i}^{\dagger}$ and the GPU partition sets ($\mathcal{G}_{\mathrm{mix}}$ and $\mathcal{G}_{\mathrm{solo}}$) based on the optimal solution $(x_i^\star, y_{m,i}^\star, y_{s,i}^\star, q_{p,i}^\star)$ of the corresponding SLI-aware optimization problem in~\eqref{eq:slo_objective}.

\emph{\textbf{Dynamic control.}} The prefill gate remains the same as the \emph{Gate-and-Route} policy in Section~\ref{subsec:policy-dynamic-occ}.
The decode router splits the decode buffer into mixed and solo components and computes the class-$i$ solo probability
\[
p_{s,i} \;:=\;
\begin{cases}
\dfrac{\mu_{s,i}\, y_{s,i}^\star}{\mu_{m,i}\, y_{m,i}^\star + \mu_{s,i}\, y_{s,i}^\star}, & \text{if }\mu_{m,i}\, y_{m,i}^\star + \mu_{s,i}\, y_{s,i}^\star>0,\\[8pt]
1, & \text{otherwise.}
\end{cases}
\]
Upon prefill or decode completion of class $i$, draw $U\sim\mathrm{Unif}(0,1)$; route to $\mathcal{G}_{\mathrm{solo}}$ if $U\le p_{s,i}$ and to $\mathcal{G}_{\mathrm{mix}}$ otherwise, placing the decode uniformly at random among GPUs in the targeted group with free slots (or queuing in the corresponding buffer if none available).
Note that here we logically split the decode buffer into mixed buffer and solo buffer instead of a single decode buffer mentioned in Section~\ref{sec:policy}.

The router implements a randomized load split that mirrors the fluid targets: each class-$i$ decode is sent to solo or mixed with probability $p_{s,i}$ chosen so that the long-run fraction of class-$i$ service provided by solo vs.\ mixed matches $(y_{s,i}^\star, y_{m,i}^\star)$.
As many decodes are routed over time, the law of large numbers forces the realized occupancies $\bigl(y_{m,i}^n/n,\,y_{s,i}^n/n\bigr)$ to track these target proportions, effectively ``reshuffling'' decode work until the stochastic system hovers around the desired steady-state levels.

The asymptotic optimality guarantee of Theorem~\ref{thm:asymptotic_optimality_occ} extends to the SLI-aware policy under mild regularity on the optimization problem (bounded penalties, etc.).

\begin{theorem}[Occupancy Convergence and Asymptotic Optimality of SLI-Aware Policy]
\label{thm:slo_optimality}
Let $(x_i^\star, y_{m,i}^\star, y_{s,i}^\star, q_{p,i}^\star, q_{d,i}^\star)_{i\in\mathcal{I}}$ denote an optimal solution of the SLI-aware LP \eqref{eq:slo_objective} with active SLI set $\mathcal{K}$ and corresponding penalties $\{l_k\}_{k\in\mathcal{K}}$ and constraints.
Assume the SLI-aware LP satisfies Slater's condition and each penalty $l_k$ is bounded and Lipschitz continuous in the decision variables, and assume $\theta_i>0$ for all $i\in\mathcal{I}$.
\emph{Further assume that the selected optimal solution satisfies $q_{d,i}^\star=0$ for all $i$ (e.g., by including $q_{d,i}=0$ in the constraints).}
Under the SLI-aware control policy, the scaled steady-state occupancies converge:
\[
\lim_{n\to\infty} \frac{1}{n}\mathbb{E}\bigl[X_i^{(n)}\bigr] = x_i^\star, \quad
\lim_{n\to\infty} \frac{1}{n}\mathbb{E}\bigl[Y_{m,i}^{(n)}\bigr] = y_{m,i}^\star, \quad
\lim_{n\to\infty} \frac{1}{n}\mathbb{E}\bigl[Y_{s,i}^{(n)}\bigr] = y_{s,i}^\star,
\]
for all $i\in\mathcal{I}$, and the per-GPU SLI-aware objective value converges to optimality:
\[
\lim_{n\to\infty} \frac{1}{n}\mathbb{E}\left[\sum_{i=1}^{I} w_i \left( \mu_{m,i} Y_{m,i}^{(n)} + \mu_{s,i} Y_{s,i}^{(n)} \right) - n \sum_{k\in\mathcal{K}} l_k\bigl(\boldsymbol{X}^{(n)}/n, \boldsymbol{Y}_m^{(n)}/n, \boldsymbol{Y}_s^{(n)}/n\bigr)\right]
\]
\[
= \sum_{i=1}^{I} w_i \left( \mu_{m,i} y_{m,i}^\star + \mu_{s,i} y_{s,i}^\star \right) - \sum_{k\in\mathcal{K}} l_k\bigl(\boldsymbol{x}^\star, \boldsymbol{y}_m^\star, \boldsymbol{y}_s^\star\bigr),
\]
where $\boldsymbol{X}^{(n)} = (X_1^{(n)}, \ldots, X_I^{(n)})$ and similarly for $\boldsymbol{Y}_m^{(n)}$, $\boldsymbol{Y}_s^{(n)}$.
\end{theorem}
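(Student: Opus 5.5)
Our plan is to reduce the statement to a fluid-level convergence result and then transfer it to expectations. We proceed in three stages: fluid limit under the policy, global attraction of the fluid model to the LP target, and interchange of limits.

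\textbf{Stage 1: fluid limit under the SLI-aware policy.} We first specialise Theorem~\ref{thm:fluid-limit} to the SLI-aware control. Tightness and continuity of subsequential limits come for free. We then identify the additional fluid relations forced by the policy.

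\emph{(a) Static partition.} Since $M=\lceil n\sum_i x_i^\star\rceil$, every limit satisfies $\sum_i x_i(t)\le \sum_i x_i^\star$. The caps on the mixed and solo pools are $(B-1)\sum_i x_i^\star$ and $B(1-\sum_i x_i^\star)$.

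\emph{(b) Prefill gate.} The min-$\xi_i$ rule yields the usual fluid characterisation on intervals where the prefill pool is saturated. There $\sum_i x_i=\sum_i x_i^\star$, and admission rate is pushed to the class with minimal $(x_i-x_i^\star)/x_i^\star$ among classes with $q_{p,i}>0$.

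\emph{(c) Randomized router.} Because routing uses i.i.d.\ uniforms, a functional LLN for the thinned completion streams gives the split of class-$i$ inflow. Solo receives fraction $p_{s,i}$ and mixed receives $1-p_{s,i}$. Here the inflow is $\mu_{p,i}x_i$ plus re-routed decode completions.

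Together these give a closed fluid ODE/inclusion for $(x_i,y_{m,i},y_{s,i},q_{p,i},q_{d,i}^{m},q_{d,i}^{s})$.

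\textbf{Stage 2: the target is the unique globally attracting invariant point.} We first check that the SLI-LP optimum, with $q_{d,i}^\star=0$, is an invariant point of the Stage~1 dynamics.
\begin{itemize}
\item Prefill balance $\lambda_i-\theta_iq_{p,i}^\star=\mu_{p,i}x_i^\star$ makes $x^\star,q_p^\star$ stationary.
\item The choice of $p_{s,i}$ yields $\mu_{s,i}y_{s,i}^\star=p_{s,i}\mu_{p,i}x_i^\star$, and similarly for the mixed pool.
\item The LP capacity constraints guarantee that these occupancies fit into the two pools.
\end{itemize}

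For attraction we use a two-layer Lyapunov argument. The upstream layer is a weighted max-norm $V_p(t)=\max_i |x_i(t)-x_i^\star|/x_i^\star$ combined with $\sum_i|q_{p,i}-q_{p,i}^\star|$. The gate always raises the most under-served class, and $\theta_i>0$ drains any excess queue. Together these give strictly negative drift whenever $V_p>0$; this is exactly where the positivity of $\theta_i$ is used.

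Once $x(t)\to x^\star$, the decode inflows converge to $\mu_{p,i}x_i^\star$. Each pool then becomes a multiclass many-server queue with abandonment ($\theta_i>0$) and fixed capacity, fed at asymptotically constant rates. The stationary point loads each pool at most critically, by the LP capacity constraints. Consequently its fluid model converges to the unique point with $q_d=0$, $y=y^\star$; we would adapt a standard many-server-with-abandonment stability argument, for instance monotone coupling or a linear Lyapunov function in the total workload per pool.

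\textbf{Stage 3: interchange of limits.} We convert the fluid convergence into the stated limits of $\frac1n\mathbb{E}[\cdot]$.
\begin{itemize}
\item Uniform integrability is automatic, since $\bar X^n,\bar Y^n_m,\bar Y^n_s$ are bounded by $1,B-1,B$.
\item For the steady-state statement, we take the stationary distribution of the (positive recurrent, by abandonment) Markov chain for each $n$. The scaled stationary laws are tight, and any limit law is invariant for the fluid dynamics. By Stage~2 it must therefore be $\delta_{(x^\star,y^\star)}$, which gives the occupancy limits.
\item The objective is linear in $(Y_m,Y_s)$ plus the penalties $l_k$. These are bounded and Lipschitz, so $l_k(\bar X^n,\bar Y^n)\to l_k(x^\star,y^\star)$ in probability, and bounded convergence finishes the argument.
\end{itemize}
Slater's condition is used only to ensure the optimal solution is well defined and that the constraint set has nonempty interior. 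It also lets us choose $x_i^\star>0$ wherever needed so that $\xi_i$ is defined; zero-target classes are handled separately as never admitted in the limit.

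\textbf{Main obstacle.} The hardest step is the global attraction in Stage~2 for the decode pools under the randomized split. A pool may be exactly critically loaded (a binding capacity constraint), and the two logically separate buffers interact only through the prefill-driven inflow. In that case the negative drift comes solely from abandonment in $q_d$. We would first try the linear Lyapunov function $\sum_i (y_{\cdot,i}+q^{\cdot}_{d,i})/\mu_{\cdot,i}$ per pool, showing it decreases until $q_d=0$. We would then argue that with $q_d=0$ the occupancies satisfy the linear ODE $\dot y_{s,i}=p_{s,i}\mu_{p,i}x_i^\star-\mu_{s,i}y_{s,i}$, which converges exponentially.
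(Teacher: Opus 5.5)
Your fluid-level argument follows the paper's route. Prefill convergence comes from the gate; you use a max-norm Lyapunov function where the paper uses Lemmas~\ref{lem:xi-upper}--\ref{lem:prefill-convergence-fluid}, which lock $x_i(t)=x_i^\star$. Each decode pool is handled with the weighted workloads $W_m=\sum_i (q_{d,m,i}+y_{m,i})/\mu_{m,i}$ and $W_s=\sum_i (q_{d,s,i}+y_{s,i})/\mu_{s,i}$, whose drift is at most $-\underline{\theta}\,q$ while the pool buffer is positive; this uses $p_{s,i}\mu_{p,i}x_i^\star=\mu_{s,i}y_{s,i}^\star$ and the LP capacity constraints. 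The class occupancies then come from $\dot y_{s,i}=p_{s,i}\mu_{p,i}x_i^\star-\mu_{s,i}y_{s,i}$ and its mixed analogue. Your Stage~3 is a real addition. The paper stops at convergence of fluid trajectories and never passes to the stationary expectations $\frac1n\mathbb{E}[\cdot]$ in the statement, so your tightness-plus-invariance argument is what connects the fluid analysis to the theorem as written. One small slip in Stage~1(c): decode completions leave the system, so only prefill completions feed the pools; your later ODE correctly uses only $\mu_{p,i}x_i$.

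The gap is in the step you flag as the main obstacle. In the binding case $\sum_i y_{s,i}^\star=C_s$, which the paper treats as the main case, $W_s$ is not a Lyapunov function: on $\{q_{d,s}=0\}$ its drift is $\sum_i p_{s,i}\mu_{p,i}x_i^\star/\mu_{s,i}-y_s=C_s-y_s\ge 0$. Nor does the buffer stay empty once it empties, so ``then the linear ODE holds'' is unjustified. Because $\mu_{s,i}=\gamma/D_i$ depends on $i$, a full pool whose mix is tilted toward long-decode classes has completion rate $\sum_i\mu_{s,i}y_{s,i}<\sum_i\mu_{s,i}y_{s,i}^\star$, which equals the inflow. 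The buffer therefore re-forms immediately; equivalently, the free solution $C_s+\sum_i\bigl(y_{s,i}(t_1)-y_{s,i}^\star\bigr)e^{-\mu_{s,i}(t-t_1)}$ can exceed $C_s$. Empty and nonempty phases can alternate, so you need a separate argument that these excursions die out. That means a real proof of $q_{d,s}(t)\to 0$ that tracks the class mix under the buffer discipline, for instance through the FCFS age structure of the buffer. Once that is in hand, $y_{s,i}\to y_{s,i}^\star$ follows from the integrated identity $y_{s,i}(t)=e^{-\mu_{s,i}(t-T)}y_{s,i}(T)+\int_T^t e^{-\mu_{s,i}(t-u)}\,du_{d,s,i}(u)$, not from a pointwise ODE. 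The paper's proof has the same hole: its contradiction argument, borrowed from Proposition~\ref{prop:qd-to-zero}, treats $W$ as nonincreasing whenever $q_d\ge 0$, though the drift bound only holds on $\{q_d>0\}$. Since Stage~3 needs attraction from every initial condition, the gap carries over there. A milder version affects your upstream layer. The gate is work-conserving, so while the prefill pool is unsaturated, an over-target class with an empty queue grows at rate $\lambda_i-\mu_{p,i}x_i>0$. This is possible because $\lambda_i/\mu_{p,i}>x_i^\star$ when $q_{p,i}^\star>0$, and Lemma~\ref{lem:xi-upper} wrongly rules it out. Hence $V_p$ need not decrease, and you must first show the pool saturates.
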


\section{Numerical Experiments}
\label{sec:numerical}

In this section, we evaluate the performance of the proposed \emph{Gate-and-Route} policy through event-driven simulations calibrated with real-world LLM inference profiles.
{\color{tc-ins}
We use two types of experimental inputs.
The real-trace experiment replays empirical Azure-style arrival timestamps and request-level prompt/output lengths, while estimating arrival rates online.
The synthetic experiments in the main text and EC use multiclass workloads whose class lengths, arrival rates, and patience rates are specified as controlled experimental inputs.
Across both settings, the GPU service primitives are calibrated from hardware measurements.}
We first compare the online policy against representative serving heuristics on real traces, then use controlled synthetic workloads to study SLI tradeoffs, with the corresponding convergence checks reported in the EC.

\subsection{Calibration of Hyperparameters}
\label{subsec:calibration}
{\color{tc-ins}
We begin by calibrating the iteration-time primitives used by both the real-trace replay and the synthetic experiments.
The measurements are conducted on a server equipped with 4$\times$ NVIDIA A100-SXM4-40GB GPUs; detailed hardware and software specifications are reported in the EC (Table~\ref{tab:computing-infrastructure}).
We use vLLM version~0.11.0 and Qwen3-8B as the reference model for the calibration.
The key quantities are the mixed-iteration time, which determines prefill and mixed-decode service rates, and the solo-decode iteration time, which determines the solo decode speed.

For mixed-mode calibration, we vary the prefill chunk size $C$ and record the mean iteration time $\tau$ on a single GPU, then fit the linear model
\[
    \tau_{\mathrm{mix}}(C) \approx \alpha + \beta C.
\]
For Qwen3-8B, this gives $\alpha\approx 0.0174$ and $\beta\approx 6.2\times 10^{-5}$, with $R^2=0.998$.
For solo-decode calibration, we vary the KV-cache load $K$ and fit
\[
    T_{\mathrm{solo}}(K) \approx a_s + b_s K.
\]
The fitted solo relation is $T_{\mathrm{solo}}(K)=1.08\times 10^{-7}K+0.0089$ with $R^2=0.9892$.
Thus, for the 8B model, the KV-cache slope is already small relative to the intercept term; for larger models, the fixed per-iteration component becomes more dominant, so the slope-to-intercept scale is expected to decrease further.
This reconciles the constant $\tau_{\mathrm{solo}}$ used in the analytical model with the linear solo fit used in the trace experiment: the theory keeps only the first-order constant iteration time, while the trace replay includes the small KV-cache slope as a second-order refinement so that tail TTFT/TPOT metrics are evaluated under a more faithful finite-system simulator.

For the hardware and memory model, the only manually chosen constants in the trace experiment are the memory safety factor $u=0.8$ and the prefill chunk size $C=256$.
For an A100-40GB GPU, after loading one model replica, we compute the remaining usable KV-cache budget under the safety margin and set the maximum batch size as
\[
    B=\left\lfloor \frac{uM_{\mathrm{GPU}}-M_{\mathrm{model}}}{\bar m_{\mathrm{KV}}}\right\rfloor,
\]
where \(\bar m_{\mathrm{KV}}\) is the average per-request KV-cache memory proxy computed from the combined code and conversation trace statistics.
The same calibrated primitives then determine service rates.
With chunk size \(C=256\), the mixed iteration time is \(\tau_{\mathrm{mix}}(C)=\alpha+\beta C\), so class-\(i\) prefill and mixed-decode rates are \(C/(P_i\tau_{\mathrm{mix}}(C))\) and \(1/(D_i\tau_{\mathrm{mix}}(C))\), respectively.
For solo decode, the fitted relation \(T_{\mathrm{solo}}(K)=a_s+b_sK\) has a small KV-cache slope relative to the intercept, so we use the fitted iteration time at the token-equivalent KV load implied by the same safety-adjusted memory budget, and set the class-\(i\) solo-decode rate to \(1/(D_i\tau_{\mathrm{solo}})\).
}

\begin{figure*}[t]
  \centering
  \includegraphics[width=0.95\linewidth]{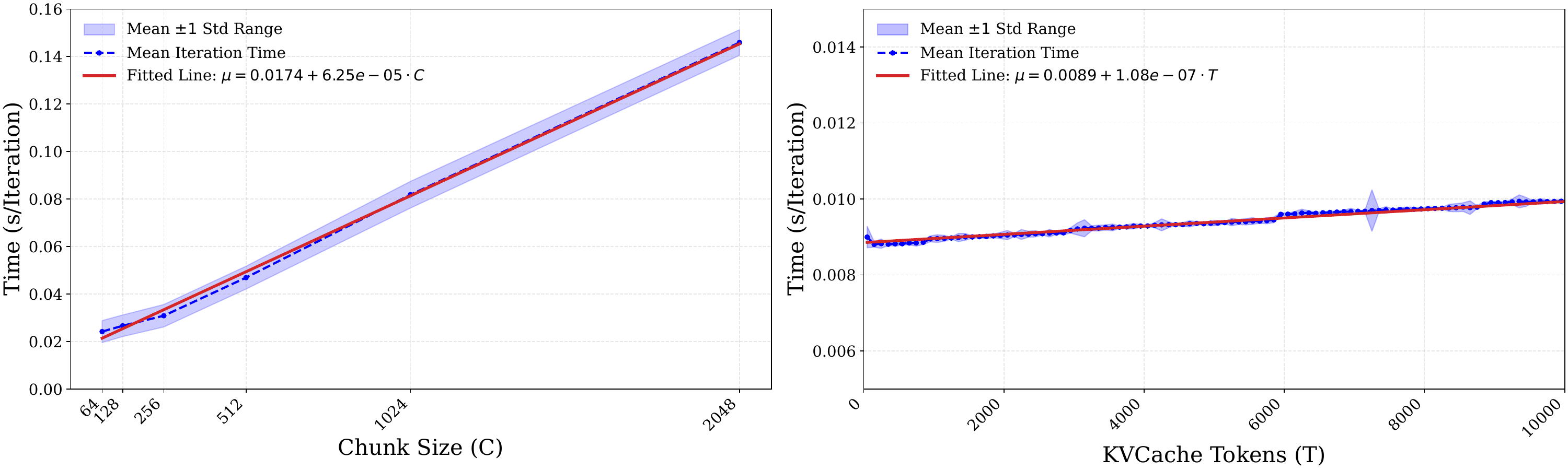}
  \caption{Iteration-time calibration for Qwen3-8B.}
  \label{fig:prefill_calibration}
\end{figure*}

\subsection{Trace-Driven Online Policy Evaluation}
\label{subsec:real-trace-online-eval}

{\color{tc-ins}
We next evaluate the proposed policy in a finite-system setting driven by real workload traces rather than synthetic exponential workloads.
The 2023 replay uses the Azure code and conversation traces released with Splitwise~\citep{Patel2024Splitwise}, including empirical arrival timestamps and request-level prompt/output lengths.
We also evaluate a 2024 Azure replay released with DynamoLLM~\citep{stojkovic2025dynamollm}.
The replays themselves do not impose abandonment; the impatience parameter is used only in the online planning LP.
Each finite-horizon simulation stops at the timestamp of the last prompt arrival in the corresponding replay.

\paragraph{\textbf{Online adaptive algorithm.}}
The online implementation does not assume a fixed arrival-rate estimate: it estimates class-level arrival rates from a rolling window, replans the fluid LP periodically, and updates the target mixed/solo GPU split accordingly.
For planning, the class-level prompt and output lengths are treated as known inputs and set to the empirical means of the native code and conversation classes.
Thus, the online component estimates traffic intensities, not token-length statistics.
The A100 memory budget, batch size, chunk size, and iteration-time primitives follow the calibration procedure in Section~\ref{subsec:calibration}.

The online controller proceeds as follows.
Let $A_i(t)$ be the cumulative number of observed class-$i$ arrivals by time $t$, let $W$ be the rolling-window length, let \(\rho\ge 1\) be the arrival-rate safety factor, and let \(\lambda_{\min}>0\) be a lower bound used during cold-start or low-traffic periods.
The small constant \(\epsilon>0\) initializes the effective window length at the beginning of the replay.
In the reported online benchmark setting, we use \(W=30\) seconds, \(\rho=3\), \(\lambda_{\min}=10^{-6}\), and \(\epsilon=10^{-9}\); replanning is performed every 10 seconds, and the planning LP uses a small common impatience parameter \(\theta_i=\theta=3\times 10^{-4}\) for all \(i\).
Because the trace datasets do not record abandonment, this \(\theta\) is used only as a small regularization parameter in the planning LP to keep it well posed; the replay evaluation itself imposes no abandonment.
At each replanning time $t_k$, it computes
\[
  N_i(t_k)=A_i(t_k)-A_i((t_k-W)^+),
  \qquad
  \bar W(t_k)=\min\{W,\max\{t_k,\epsilon\}\},
\]
and forms the conservative per-GPU arrival estimate
\begin{equation}
  \widehat\lambda_i(t_k)
  =
  \max\left\{
    \rho\,\frac{N_i(t_k)}{n\,\bar W(t_k)},
    \lambda_{\min}
  \right\}.
  \label{eq:main-online-arrival-estimate}
\end{equation}
The LP is then re-solved with $\widehat\lambda_i(t_k)$ replacing $\lambda_i$, while the empirical class-mean prompt/output lengths and calibrated service parameters remain fixed.
If $x_i^\star(t_k)$ is the resulting prefill occupancy target, the desired number of mixed GPUs is
\begin{equation}
  M^\star(t_k)=
  \left\lceil n\sum_i x_i^\star(t_k)\right\rceil.
  \label{eq:main-online-mixed-target}
\end{equation}
At replanning epochs, the simulator updates the target mixed/solo split to track \(M^\star(t_k)\); existing service is not preempted, and the new split governs subsequent placement and admission decisions.
The prefill gate then ranks waiting classes by occupancy deviation from the current LP target and admits the most under-target class first, using the prefill backlog targets \(nq_{p,i}^\star(t_k)\) as tie-breaking guidance.
Decode routing is solo-first and work-conserving: decode jobs are sent to solo GPUs whenever possible, and use mixed-GPU decode slots only when solo capacity is unavailable.

\paragraph{\textbf{Experimental results.}}
We compare against three classes of serving baselines.
The vLLM-style baseline uses prefill-first continuous batching without class-aware admission control~\citep{Kwon2023vLLM}.
The Sarathi-style baseline uses decode-first scheduling without class-aware admission control~\citep{agrawal2024sarathiserve}.
For DistServe-style baselines, we use two best-fixed-split comparators motivated by DistServe~\citep{zhong2024distserve}.
The prefill/solo variant uses the best fixed split between prefill-only GPUs and decode-only GPUs.
The mix/solo variant uses the best fixed split between mixed GPUs, which can serve both prefill and decode, and solo decode-only GPUs.
Table~\ref{tab:trace-policy-definitions} summarizes the main architectural differences among these systems-inspired benchmark policies.
All policies are evaluated on 10-GPU replays with batch size $B=16$, chunk size $C=256$, and seed 42; the 2023 and 2024 Azure interarrival times are uniformly compressed by a factor of $0.1$.
This compression is a load-scaling device: at the calibrated A100/Qwen3-8B primitives the unscaled traces leave the 10-GPU system lightly loaded, so we apply the same compression to every policy to reach the congested, prefill--decode contention regime that the policies target and that is representative of larger production clusters.
The trace-driven evaluator is a calibrated scheduling simulator: it uses measured per-GPU execution primitives and empirical request traces, while abstracting from cluster-level networking, scheduler implementation overheads, and KV-cache migration costs.

\begin{table*}[htbp]
\centering
\footnotesize
\setlength{\tabcolsep}{4pt}
\renewcommand{\arraystretch}{1.18}
\caption{Policy definitions for the trace-driven benchmark.}
\label{tab:trace-policy-definitions}
\begin{tabular}{@{}>{\raggedright\arraybackslash}p{0.23\textwidth}>{\raggedright\arraybackslash}p{0.20\textwidth}>{\raggedright\arraybackslash}p{0.23\textwidth}>{\raggedright\arraybackslash}p{0.30\textwidth}@{}}
\toprule
\textbf{Policy} & \textbf{Resource split} & \textbf{Prefill admission} & \textbf{Decode policy} \\
\midrule
\shortstack[l]{Online gate-and-route\\(Ours)} & \shortstack[l]{Online mixed/solo\\replanning from the LP} & \shortstack[l]{Occupancy-based gate\\around class targets} & \shortstack[l]{Solo-first, work-conserving\\decode routing} \\
\addlinespace[0.20em]
\shortstack[l]{vLLM-style\\\strut} & \shortstack[l]{No fixed\\prefill/decode split} & \shortstack[l]{Prefill-first continuous\\batching} & \shortstack[l]{Decodes share mixed batches\\when prefills are admitted} \\
\addlinespace[0.20em]
\shortstack[l]{Sarathi-style\\\strut} & \shortstack[l]{No fixed\\prefill/decode split} & \shortstack[l]{Admits prefills when\\slots are available} & \shortstack[l]{Decode-first local\\execution after prefill} \\
\addlinespace[0.20em]
\shortstack[l]{DistServe best split\\(prefill/solo)} & \shortstack[l]{Best fixed\\prefill/solo split} & \shortstack[l]{Class-agnostic prefill\\admission} & \shortstack[l]{Decode isolated on\\solo GPUs} \\
\addlinespace[0.20em]
\shortstack[l]{DistServe best split\\(mix/solo)} & \shortstack[l]{Best fixed\\mixed/solo split} & \shortstack[l]{Class-agnostic prefill\\admission on mixed GPUs} & \shortstack[l]{Solo-first decode\\with fixed split} \\
\bottomrule
\end{tabular}
\end{table*}

\begin{table*}[htbp]
\centering
\scriptsize
\setlength{\tabcolsep}{2.2pt}
\renewcommand{\arraystretch}{1.08}
\caption{Trace-driven policy comparison on 10-GPU Azure replays.}
\label{tab:main-real-trace-benchmarks}
\textbf{(a) 2023 Azure replay}
\vspace{0.45em}
\resizebox{\linewidth}{!}{%
\begin{tabular}{lrrrrrrrr}
\toprule
Policy & Revenue rate & Completion rate & TTFT mean & TTFT P95 & TTFT P99 & TPOT mean & TPOT P95 & TPOT P99 \\
\midrule
Online gate-and-route (Ours) & \textbf{677.96} & \textbf{0.4263} & \textbf{52.98} & 248.70 & 268.87 & 0.03254 & 0.03515 & 0.03577 \\
Sarathi-style & 560.34 & 0.3860 & 103.85 & \textbf{198.83} & \textbf{208.37} & 0.02776 & 0.03481 & 0.03541 \\
vLLM-style & 442.94 & 0.3079 & 116.02 & 216.79 & 231.67 & 0.03384 & 0.03538 & 0.03592 \\
\midrule
\textit{DistServe best split (prefill/solo)} & 416.51 & 0.2878 & 125.46 & 225.82 & 236.56 & \textbf{0.01955} & \textbf{0.02021} & \textbf{0.02062} \\
\textit{DistServe best split (mix/solo)} & 418.59 & 0.2899 & 112.04 & 229.62 & 237.17 & 0.02685 & 0.03486 & 0.03552 \\
\bottomrule
\end{tabular}%
}
\vspace{1.25em}

\textbf{(b) 2024 Azure replay}
\vspace{0.45em}
\resizebox{\linewidth}{!}{%
\begin{tabular}{lrrrrrrrr}
\toprule
Policy & Revenue rate & Completion rate & TTFT mean & TTFT P95 & TTFT P99 & TPOT mean & TPOT P95 & TPOT P99 \\
\midrule
Online gate-and-route (Ours) & \textbf{734.71} & 0.1072 & \textbf{40.49} & \textbf{132.67} & 165.17 & 0.03355 & 0.03658 & 0.03767 \\
Sarathi-style & 658.89 & 0.1048 & 78.50 & 150.53 & 158.09 & 0.03311 & 0.03730 & 0.03806 \\
vLLM-style & 696.70 & \textbf{0.1107} & 79.18 & 150.21 & \textbf{157.01} & 0.03616 & 0.03794 & 0.03858 \\
\midrule
\textit{DistServe best split (prefill/solo)} & 407.26 & 0.0640 & 82.84 & 157.58 & 166.18 & \textbf{0.02037} & \textbf{0.02200} & \textbf{0.02267} \\
\textit{DistServe best split (mix/solo)} & 362.05 & 0.0573 & 82.65 & 158.71 & 166.82 & 0.03146 & 0.03766 & 0.03845 \\
\bottomrule
\end{tabular}%
}
\end{table*}

Table~\ref{tab:main-real-trace-benchmarks} is best read as a revenue-tradeoff table: because the gate-and-route controller optimizes a token-weighted completion objective, it leads on revenue while making the implied tail-SLI tradeoffs explicit.
Figure~\ref{fig:main-real-trace-tpot-frontiers} then shows how the same control architecture can move along a TPOT--revenue frontier when stricter tail control is required.

Within this framing, Table~\ref{tab:main-real-trace-benchmarks} and Figure~\ref{fig:main-real-trace-revenue-timeseries} show that online gate-and-route attains the highest revenue rate in both replay slices and maintains a strong revenue trajectory over time.
Sarathi-style immediate decoding keeps a request on the same GPU after prefill and admits new prefills whenever slots are available; this local rule does not match aggregate prefill and decode rates, so it can either leave decode opportunities unused or admit too much prefill work.
vLLM-style prefill-first scheduling similarly lacks class-aware regulation and can overfeed downstream decode.
The DistServe-style rows serve as best-fixed-split comparators rather than adaptive online scheduling policies.
The DistServe prefill/solo variant has the lowest TPOT by isolating decode, but even at its best fixed split it has low completion and revenue; the mix/solo variant adds flexibility but remains static and class-agnostic.
Overall, these replays suggest that class-aware prefill admission and online mixed/solo replanning are important for improving revenue under nonstationary traffic, rather than optimizing a single latency or completion metric in isolation.

\begin{figure*}[htbp]
    \centering
    \includegraphics[width=0.455\textwidth]{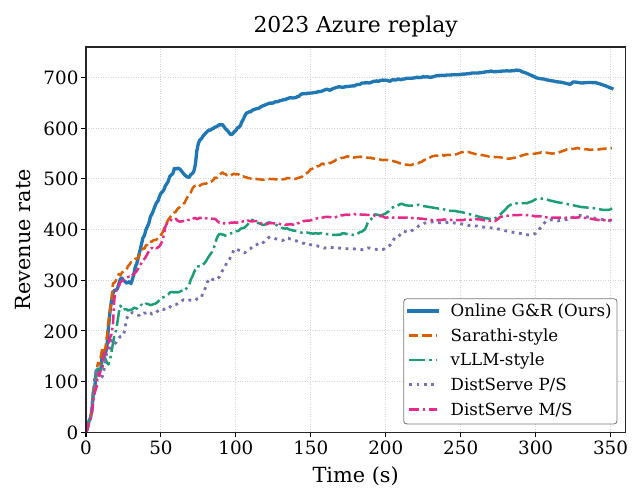}
    \hspace{0.07\textwidth}
    \includegraphics[width=0.455\textwidth]{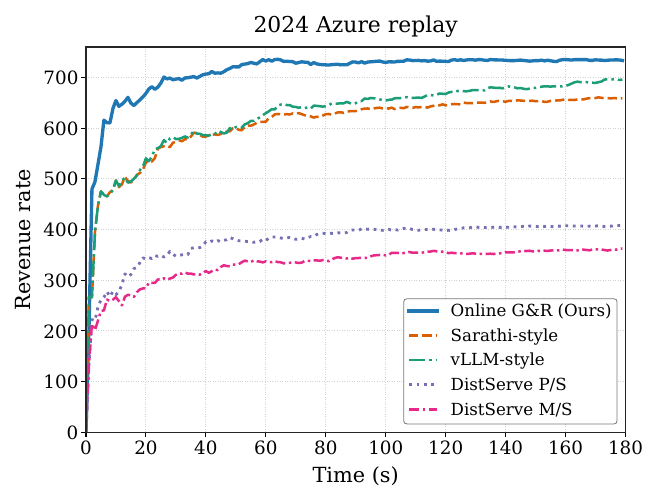}
    \caption{Running revenue rate on 10-GPU Azure replays.}
    \label{fig:main-real-trace-revenue-timeseries}
\end{figure*}

Table~\ref{tab:main-real-trace-benchmarks} also shows that online gate-and-route has a clear advantage in mean TTFT over the tested benchmarks, whereas TPOT is not uniformly minimized.
This indicates that first-token latency and per-token latency need not move together under trace-driven replay.
Figure~\ref{fig:main-real-trace-tpot-frontiers} therefore plots the TPOT--revenue frontier obtained by adding TPOT-aware planning to the same online gate-and-route architecture on the 2023 and 2024 replays.
In both panels, the star-marked Online gate-and-route point is the benchmark with no additional SLI control, and the remaining solid points are generated by varying the TPOT-control parameter \(\eta_3\) within the same controller.
Moving left from the star lowers TPOT at the cost of revenue.
The dashed segment to the right is only a visual reference: the benchmark Online gate-and-route point already represents the highest-revenue operating point within this SLI-aware online-control family, so increasing TPOT further does not produce a better frontier point.
The figure compares against Online gate-and-route, Sarathi-style, vLLM-style, and DistServe best split (mix/solo); we omit DistServe best split (prefill/solo) because it disallows mixed-mode execution and therefore does not represent the same mixed/solo tradeoff.
The benchmark locations also help interpret the tradeoff. A vLLM-style prefill-first rule tends to keep a larger share of GPUs in mixed mode, which raises TPOT, but in these no-abandonment replays it can still sustain relatively high throughput and hence competitive revenue. A Sarathi-style decode-first rule tends to reduce TPOT by draining decode more aggressively, but it can also leave the system short of decode work when the downstream supply is insufficient, which lowers throughput and revenue. The proposed gate-and-route controller can move systematically along the frontier between these regimes, allowing the operator to trade revenue against TPOT according to the service objective.

\begin{figure*}[htbp]
    \centering
    \includegraphics[width=0.455\textwidth]{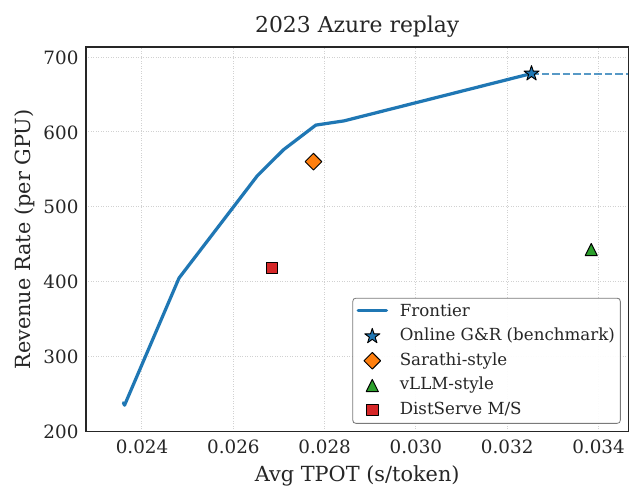}
    \hspace{0.07\textwidth}
    \includegraphics[width=0.455\textwidth]{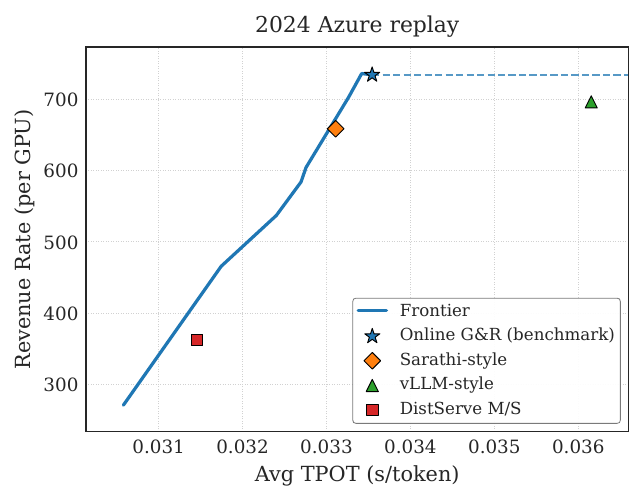}
    \caption{Operating frontiers for average TPOT versus revenue on the 10-GPU Azure replays.}
    \label{fig:main-real-trace-tpot-frontiers}
\end{figure*}
}

\subsection{The Cost of Service Quality and Hardware Resources}

To investigate the cost of service quality and fairness, we analyze the trade-off between total revenue and specific SLIs.
We use the same two-class synthetic instance as in the convergence analysis reported in Appendix~\ref{subsec:convergence}, which is representative since it consists of long-prefill-short-decode and short-prefill-long-decode classes.
We formulate this as a constrained optimization problem where we maximize revenue subject to a strict constraint on exactly one SLI metric at a time: Prefill Fairness, Decode Fairness, or Time Per Output Token (TPOT), while relaxing the others.
This approach allows us to isolate the ``price'' of each specific constraint in terms of lost revenue.

\paragraph{\textbf{The Shadow Price of SLIs}}

Figure \ref{fig:frontiers} illustrates the Pareto frontiers for three operational constraints.
We interpret the slope of these curves as the \textbf{shadow price}, which is the marginal revenue sacrificed to achieve a stricter SLI target.
It reveals distinct economic sensitivities:

\begin{figure}[ht]
    \centering
    \captionsetup{justification=centering}
    \captionsetup[subfigure]{justification=centering, font=small}

    \begin{subfigure}{0.32\textwidth}
        \centering
        \includegraphics[width=\textwidth]{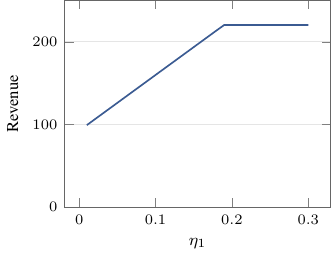}
        \caption{Prefill Fairness ($\eta_1$).}
        \label{fig:frontier1}
    \end{subfigure}
    \hfill
    \begin{subfigure}{0.32\textwidth}
        \centering
        \includegraphics[width=\textwidth]{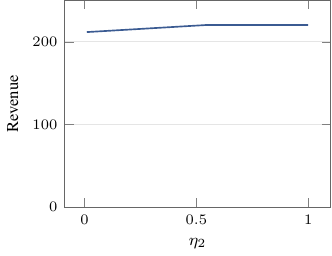}
        \caption{Decode Fairness ($\eta_2$).}
        \label{fig:frontier2}
    \end{subfigure}
    \hfill
    \begin{subfigure}{0.32\textwidth}
        \centering
        \includegraphics[width=\textwidth]{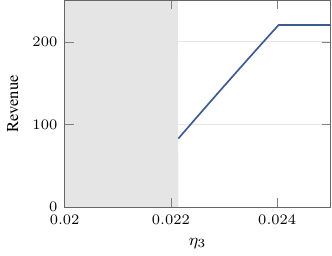}
        \caption{TPOT Limit ($\eta_3$).}
        \label{fig:frontier3}
    \end{subfigure}

    \caption{Pareto frontiers illustrating the shadow price of Service Level Indicators (SLIs).}
    \label{fig:frontiers}
    \vspace{-4mm}
\end{figure}

\begin{itemize}
    \item \textbf{Asymmetric Costs of Fairness (Figs. \ref{fig:frontier1} and \ref{fig:frontier2}):} We observe a stark contrast between the shadow prices of fairness at different stages of the inference pipeline.
    Prefill fairness ($\eta_1$) incurs a steep revenue penalty because the prefill stage acts as the primary bottleneck; imposing rigid class-mix constraints here prevents the scheduler from aligning admissions with the hardware's optimal operating point, leading to a structural mismatch between job arrivals and downstream capacity.
    In contrast, the nearly flat frontier for decode fairness ($\eta_2$) suggests a negligible shadow price.
    This implies that once a request has entered the system, rebalancing its processing speed relative to other classes barely degrades total revenue.
    \item \textbf{The Price of Low TPOT (Fig. \ref{fig:frontier3}):} The shadow price of the TPOT constraint increases significantly as the target latency approaches $0.022\,\text{s}$, a lower bound determined by the solo-decode rate $\gamma$.
    Near this threshold, the feasible region for prefill throughput shrinks rapidly, leading to a substantial decrease in optimal revenue as the system prioritizes meeting the stringent latency requirement over throughput.
\end{itemize}

\paragraph{\textbf{GPU configurations and Revenue}}

We next analyze the sensitivity of the system's performance to the maximum batch size ($B$) and the hyperparameters ($\alpha, \beta, \gamma$).
Figure \ref{fig:gpu_sweeps} illustrates the trade-off between total Revenue (primary objective, left y-axis) and TPOT (latency cost, right y-axis).

\begin{figure}[ht]
\vspace{-3mm}
    \centering
    \includegraphics[width=0.85\textwidth]{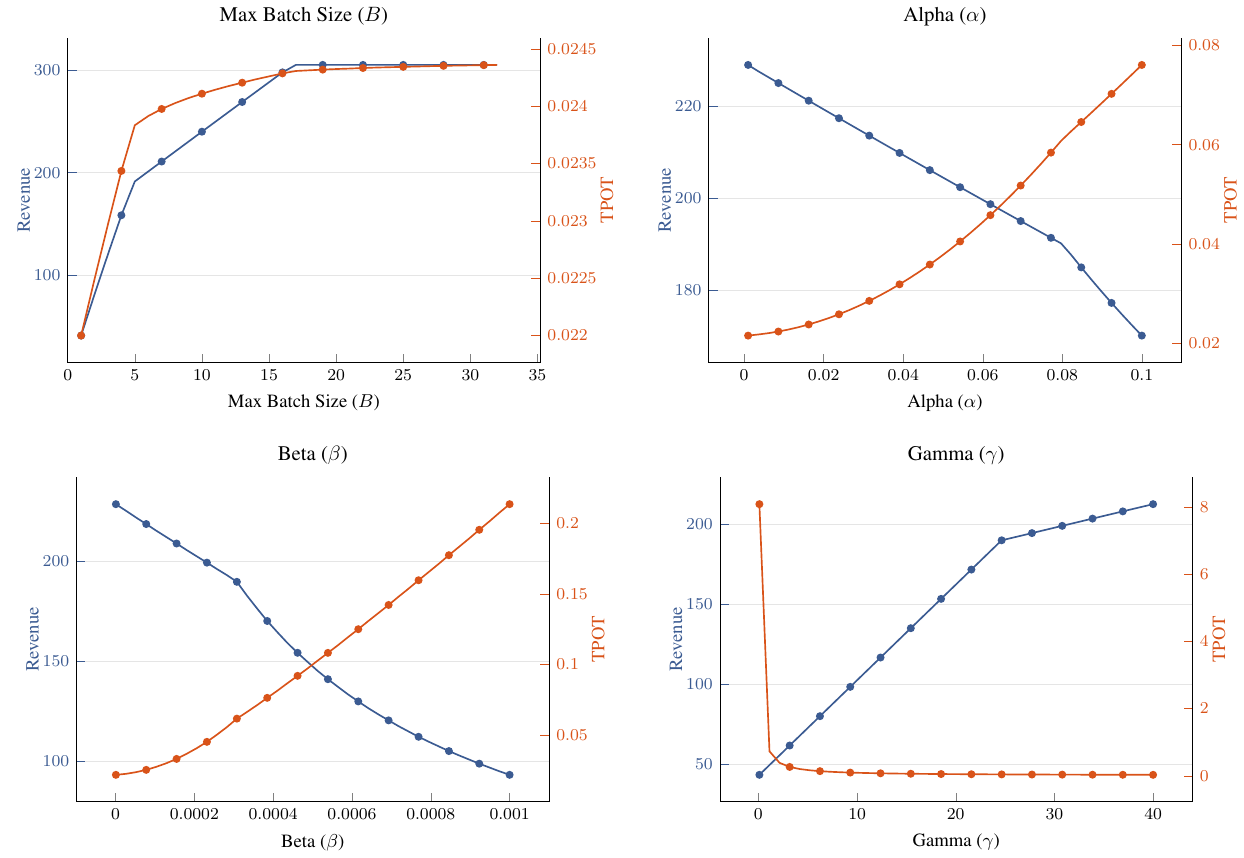} 
    \caption{Parameter sensitivity analysis showing the impact of maximum batch size ($B$), $\alpha$, $\beta$, and $\gamma$ on revenue and TPOT. Blue lines indicate revenue (higher is better); red lines indicate TPOT (lower is better).}
    \label{fig:gpu_sweeps}
\vspace{-3mm}
\end{figure}

The results indicate distinct operational trends.
First, revenue increases with batch size $B$ but saturates around $B=16$, suggesting diminishing returns on memory scaling beyond this point.
Second, the system is highly sensitive to the computational penalty $\beta$; as $\beta$ increases, revenue drops sharply.
Finally, $\gamma$ acts as a strong incentive, positively correlating with higher revenue and lower latency.

Figure \ref{fig:heatmap_tradeoff} visualizes the revenue landscape across the joint configuration space of memory capacity (proxied by $B$) and computational speed (proxied by $\beta$).
This mapping supports infrastructure decisions: operators can overlay GPU prices to identify the highest-ROI configuration, while the gradient reveals whether revenue grows fastest by increasing $B$ or decreasing $\beta$, pinpointing whether memory or compute is the binding bottleneck for upgrades.

\paragraph{\textbf{Optimal Token Pricing}}

We also examine the optimal pricing structure by analyzing the relationship between the prefill price $c_p$ and the decode price $c_d$.
Specifically, if we investigate the common revenue maximization problem subject to a total price constraint $c_p + c_d = k$, the heatmap (Figure~\ref{fig:heatmap_cpcd}) results reveal a striking invariance: regardless of the magnitude of the budget $k$, the revenue-maximizing prices consistently yield a unique, constant ratio $c_p/c_d$.
This indicates that the optimal economic balance between prefill and decode is scale-invariant.
Consequently, for pricing strategy, managers should focus on maintaining this intrinsic cost structure ratio, as the optimal split between prefill and decode prices remains stable even as the total price level varies.

\begin{figure}[ht]
    \centering
    \captionsetup[subfigure]{font=small,justification=centering}
    \begin{subfigure}[b]{0.48\textwidth}
    \centering
    \includegraphics[width=\linewidth]{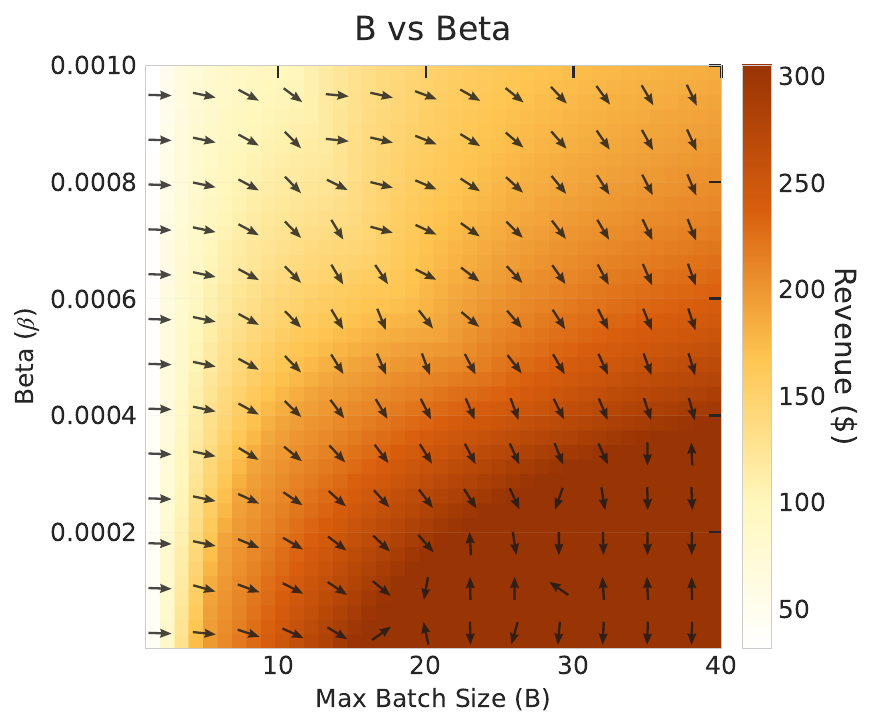}
    \caption{Memory-speed tradeoff ($B$ vs. $\beta$).}
    \label{fig:heatmap_tradeoff}
    \end{subfigure}\hfill
    \begin{subfigure}[b]{0.48\textwidth}
    \centering
    \includegraphics[width=\linewidth]{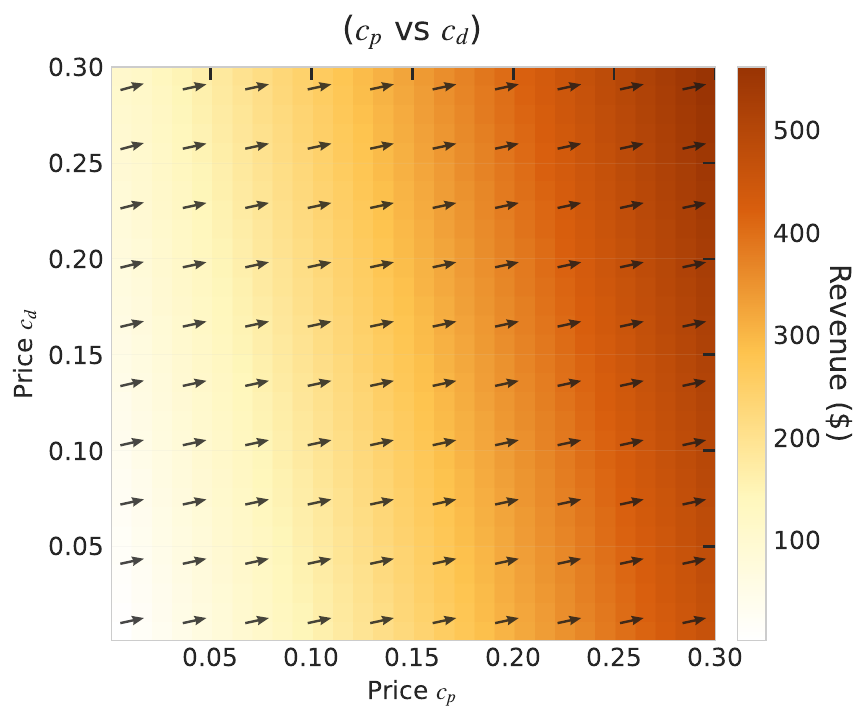}
    \caption{Optimal token pricing ($c_p$ vs. $c_d$).}
    \label{fig:heatmap_cpcd}
    \end{subfigure}
    \caption{Sensitivity analysis of hardware configurations and pricing structures on per-GPU revenue.}
    \label{fig:heatmaps_combined}
    \vspace{-3mm}
    \end{figure}

\section{Conclusions}
\label{sec:conclusion}

Efficient LLM inference at scale hinges on resolving the resource contention between compute-bound prefill and memory-bound decode.
In this work, we connect an empirically grounded iteration-time abstraction with stochastic control to study admission and scheduling in large GPU clusters under token-based revenue objectives and service constraints.
A multiclass many-GPU fluid approximation yields a tractable steady-state linear program that prescribes how to split capacity between mixed and solo modes and how to allocate prefill occupancy across classes.

Building on this planning formulation, we develop a \textit{gate-and-route} control architecture with a prefill admission gate that tracks the fluid occupancy targets and a decode router that keeps downstream capacity work-conserving.
Our analysis establishes a set of structural properties that enable this decomposition, including the existence of an optimal fluid plan with no steady-state decode buffering (Proposition~\ref{prop:qd-zero}) and a corresponding asymptotic optimality guarantee for the proposed policy in the many-GPU limit.

Our numerical evaluation is empirically calibrated and intended to illustrate the mechanisms highlighted by the theory: we calibrate the iteration-time model using real deployments on a modern inference stack, and then run calibrated event-driven simulations to study the resulting control behavior.
In particular, simulations confirm two qualitative predictions from the asymptotic analysis: per-GPU revenue approaches the fluid optimum as the cluster scales, and persistent decode backlogs are avoided under the proposed control.
We further compare against representative heuristic baselines and ablations in the same calibrated setting.

Our results also yield actionable implications for service providers.
First, billing and scheduling objectives need not coincide: while separate charging for prefill and decode tokens is natural for accounting, optimizing the scheduler against a separate objective can encourage overly aggressive prefill admission and shift congestion downstream.
A practical recommendation is therefore to bill by phase if desired, while scheduling against an end-to-end (bundled) completion objective to align incentives with user-perceived performance.
Second, incorporating SLIs as constraints or penalties in the planning problem provides a systematic way to study the revenue implications of fairness and latency requirements; in our setting, enforcing fairness at the prefill stage is typically more revenue-costly than at the decode stage.

Several avenues exist to extend the theoretical depth and practical scope of this work.
First, one can relax the assumption of exponential service times by employing measure-valued processes.
This formulation would accommodate general service distributions, enabling more granular, state-dependent control policies.
Second, to move beyond mean-value analysis, developing diffusion approximations that characterize stochastic variability will provide rigorous guarantees for tail-latency SLIs.
Finally, the model can be generalized to heterogeneous infrastructures, orchestrating inference across clusters composed of diverse GPU generations and distinct agent architectures.

%% file: appendix.tex
\begin{center}
{\Large \textbf{Electronic Companion}}
\end{center}
\vspace{0.2em}

\newcommand{\ecproofheading}[1]{\medskip\noindent\textbf{#1}\par\smallskip}

\section{\textcolor{tc-ins}{Notation and System-to-Model Mapping Tables}}
\label{sec:ec-notation-mapping}

{\color{tc-ins}
This section collects the main notation used in the stochastic model, fluid planning problem, and control policies.
It also provides a mapping from serving-system concepts to the mathematical variables used in the queueing formulation.

\begin{table}[H]
\centering
\scriptsize
\caption{Mapping from LLM serving concepts to model variables.}
\label{tab:system-model-mapping}
\begin{tabularx}{\linewidth}{l X X}
\toprule
Serving-system concept & Modeling representation & Role in the queueing formulation \\
\midrule
GPU cluster & $n$ homogeneous GPU servers & Many-server scale parameter; fluid scaling divides counts by $n$. \\
Request type or workload class & Class $i\in\mathcal I$ with $(P_i,D_i,\lambda_i)$ & Captures heterogeneity in prompt length, decode length, and arrival intensity. \\
Prompt tokens / prefill work & $P_i$ and prefill service rate $\mu_{p,i}=C/(P_i\tau)$ & Determines how much mixed-GPU prefill capacity is consumed by class-$i$ admissions. \\
Generated tokens / decode work & $D_i$ and decode rates $\mu_{m,i},\mu_{s,i}$ & Determines downstream decode load created by completing prefill. \\
Maximum GPU batch size & $B$ & Caps the number of simultaneous decode streams per GPU. \\
Chunked prefill & Chunk size $C$ & One prefill chunk may be processed in a mixed iteration. \\
Decode-only GPU iteration & $\tau_{\mathrm{solo}}$ and $\gamma=1/\tau_{\mathrm{solo}}$ & Determines solo decode speed. \\
Mixed GPU iteration & $\tau_{\mathrm{mix}}(C)=\alpha+\beta C$ & Determines prefill speed and mixed decode speed when prefill and decode share a GPU. \\
GPU running one prefill plus decodes & Mixed mode; prefill mass $x_i$ and mixed decode mass $y_{m,i}$ & Capacity is coupled by $\sum_i y_{m,i}\le (B-1)\sum_i x_i$. \\
GPU running decode only & Solo mode; solo decode mass $y_{s,i}$ & Capacity is $\sum_i y_{s,i}\le B(1-\sum_i x_i)$. \\
Waiting prompt requests & Prefill queue $Q_{p,i}^n$ or fluid $q_{p,i}$ & Source of jobs controlled by the prefill gate. \\
Requests whose prefill is done & Decode queue $Q_{d,i}^n$ or fluid $q_{d,i}$ & Downstream buffer regulated by decode routing and LP planning. \\
Starting a prefill on a GPU & Admission process $U_{p,i}^n$ or $u_{p,i}$ & Control action that consumes mixed capacity and creates future decode work. \\
Assigning decode to a mixed/solo slot & $U_{d,m,i}^n,U_{d,s,i}^n$ or $u_{d,m,i},u_{d,s,i}$ & Decode routing control. \\
Prefill completion changing GPU mode & Mode-switch processes $M_{m\to s,i}^n$ and $M_{s\to m,i}^n$ & Structural coupling: ongoing decodes may change between mixed and solo modes. \\
User cancellation, timeout, or expiration & Abandonment rates $\theta_i$ and processes $B_{p,i}^n,B_{d,i}^n$ & Reduced-form impatience mechanism in the planning model. \\
Completion-based API revenue & Bundled reward $w_i=c_pP_i+c_dD_i$ and objective $R^n$ & Revenue is credited only after full decode completion. \\
Separate token accounting & Separate objective $\tilde R^n$ & Prefill and decode token revenues are credited separately. \\
Operational latency/fairness requirements & SLI constraints or penalties $g(\mathbf x,\mathbf y,\mathbf q)\le\eta$ and $l_k$ & Added to the steady-state LP to form SLI-aware planning targets. \\
Online traffic estimation in trace replay & Rolling estimate $\widehat\lambda_i(t)$ & Replaces fixed $\lambda_i$ in online replanning experiments. \\
\bottomrule
\end{tabularx}
\end{table}
\renewcommand{\arraystretch}{1}
\clearpage

\begin{table}[H]
\centering
\scriptsize
\caption{Summary of main notation.}
\label{tab:notation-summary}
\begin{tabularx}{\linewidth}{l l X}
\toprule
Category & Symbol & Meaning \\
\midrule
Sets and scale & $\mathcal I=\{1,\ldots,I\}$ & Set of request classes. \\
& $i,n,B,C$ & Class index, number of GPUs, maximum decode streams per GPU, and prefill chunk size. \\
\midrule
Workload primitives & $P_i,D_i$ & Representative prompt and decode/output lengths of class $i$. \\
& $\lambda_i,\theta_i$ & Per-GPU class-$i$ arrival rate and patience/abandonment rate. \\
& $c_p,c_d,w_i$ & Prompt/decode token prices and bundled completion reward $w_i=c_pP_i+c_dD_i$. \\
\midrule
Service primitives & $\tau_{\mathrm{solo}},\gamma$ & Decode-only iteration time and solo decode token generation rate. \\
& $\tau_{\mathrm{mix}}(C),\tau$ & Mixed iteration time with a prefill chunk; $\tau$ denotes $\tau_{\mathrm{mix}}(C)$. \\
& $\mu_{p,i},\mu_{m,i},\mu_{s,i}$ & Class-$i$ prefill, mixed decode, and solo decode service rates. \\
\midrule
Stochastic state & $Q_{p,i}^n,X_i^n,Q_{d,i}^n$ & Prefill queue, prefill service, and decode queue counts. \\
& $Y_{m,i}^n,Y_{s,i}^n$ & Class-$i$ decode jobs in mixed and solo service. \\
& $Z_{p,i}^n,Z_{d,i}^n$ & Total class-$i$ prefill-stage and decode-stage contents. \\
\midrule
Cumulative flows & $A_i^n,B_{p,i}^n,B_{d,i}^n$ & Cumulative arrivals and abandonments from prefill/decode queues. \\
& $S_{p,i}^n,S_{d,m,i}^n,S_{d,s,i}^n$ & Cumulative prefill, mixed decode, and solo decode completions. \\
& $U_{p,i}^n,U_{d,m,i}^n,U_{d,s,i}^n$ & Cumulative admissions into prefill, mixed decode, and solo decode. \\
& $M_{s\to m,i}^n,M_{m\to s,i}^n$ & Cumulative decode mode switches induced by GPU-mode changes. \\
\midrule
Fluid and LP variables & $\bar W^n=W^n/n$ & Fluid-scaled version of a stochastic process $W^n$. \\
& $q_{p,i},x_i,q_{d,i},y_{m,i},y_{s,i}$ & Fluid queue and in-service masses; also used as steady-state LP decision variables. \\
& $u_{p,i},u_{d,m,i},u_{d,s,i}$ & Fluid cumulative admission processes. \\
& $s_{p,i},s_{d,m,i},s_{d,s,i},b_{p,i},b_{d,i}$ & Fluid service-completion and abandonment processes. \\
& $(x_i^\star,y_{m,i}^\star,y_{s,i}^\star,q_{p,i}^\star,q_{d,i}^\star)$ & Optimal LP solution used as a planning target. \\
\midrule
Policy and SLI quantities & $\mathcal G_{\mathrm{mix}},\mathcal G_{\mathrm{solo}},M$ & Mixed/solo GPU sets and target number of mixed GPUs. \\
& $\pi^n,\Pi^n$ & A scheduling policy and the set of admissible policies in the $n$-GPU system. \\
& $R^n,\tilde R^n$ & Bundled and separate-charging finite-horizon rewards. \\
& $\eta_1,\eta_2,\eta_3,l_k,\mathcal K,\mathcal F_{\mathcal K}$ & SLI thresholds, penalties, active SLI set, and feasible region. \\
\bottomrule
\end{tabularx}
\end{table}
}
\clearpage


\section{Proof of Proposition~\ref{prop:qd-zero}}


\begin{proof}{Proof of Proposition~\ref{prop:qd-zero}}
We mainly adopt the strategy that given an optimal solution with $q_{d,i} > 0$, we can always construct an optimal solution with $q_{d,i}' = 0$ based on the solution given above.
The proof follows from the following steps:

\textbf{First, we eliminate $q_d$ in the original optimization problem to get a new optimization problem (P$'$) which better characterizes our target.}
Fix a class $i$ with $\theta_i > 0$.
From the second constraint in \eqref{eq:LP-original}, we have:
\[
  \mu_{p,i} x_i - \theta_i q_{d,i}
    = \mu_{m,i} y_{m,i} + \mu_{s,i} y_{s,i},
\]
Considering $q_{d,i} \ge 0$, we obtain the following inequality
\begin{equation}
  \label{eq:decode-ineq}
  \mu_{m,i} y_{m,i} + \mu_{s,i} y_{s,i}
  \;\le\; \mu_{p,i} x_i,
\end{equation}
with
\begin{equation}
  \label{eq:qdi-as-slack}
  q_{d,i}
  \;=\; \frac{\mu_{p,i} x_i - \mu_{m,i} y_{m,i} - \mu_{s,i} y_{s,i}}{\theta_i}
  \;\ge\; 0.
\end{equation}

Here $q_{d,i}$ is exactly the slack variable of \eqref{eq:decode-ineq}.
Without loss of generality, we focus on $\theta_i > 0$.

Define the problem (P$'$):
\begin{equation}
  \label{eq:LP-reduced}
  \begin{aligned}
    \max_{\{x_i,y_{m,i},y_{s,i},q_{p,i}\}_{i=1}^I}\quad
      & \sum_{i=1}^I w_i\bigl(\mu_{m,i} y_{m,i} + \mu_{s,i} y_{s,i}\bigr) \\
    \text{s.t.}\quad
      & \sum_{i=1}^I x_i \;\le\; 1, \\[0.2em]
      & \sum_{i=1}^I y_{m,i} \;\le\; (B-1)\sum_{i=1}^I x_i, \\[0.2em]
      & \sum_{i=1}^I y_{s,i} \;\le\; B\Bigl(1-\sum_{i=1}^I x_i\Bigr), \\[0.2em]
      & \lambda_i - \theta_i q_{p,i} = \mu_{p,i} x_i, \qquad \forall i, \\[0.2em]
      & \mu_{m,i} y_{m,i} + \mu_{s,i} y_{s,i} \;\le\; \mu_{p,i} x_i,
        \qquad \forall i, \\[0.2em]
      & x_i, y_{m,i}, y_{s,i}, q_{p,i} \;\ge\; 0, \qquad \forall i.
  \end{aligned}
\end{equation}
For any feasible solution of \eqref{eq:LP-reduced} we can recover a feasible solution of \eqref{eq:LP-original} by defining $q_{d,i}$ via \eqref{eq:qdi-as-slack}.
Conversely, any feasible solution of \eqref{eq:LP-original} satisfies \eqref{eq:decode-ineq} and thus yields a feasible solution of \eqref{eq:LP-reduced} by dropping $q_{d,i}$.
Hence the two problems are equivalent and share the same optimal value.

In particular, $q_{d,i}=0$ for a given class $i$ is equivalent to the inequality \eqref{eq:decode-ineq} being \emph{tight}:
\[
  q_{d,i} = 0
  \quad\Longleftrightarrow\quad
  \mu_{m,i} y_{m,i} + \mu_{s,i} y_{s,i} = \mu_{p,i} x_i.
\]

\textbf{Second, we characterize the above optimization problem by using the KKT condition:}

Because \eqref{eq:LP-reduced} is a linear program with a nonempty relative interior (Slater point exists: e.g., take $x_i=y_{m,i}=y_{s,i}=0$ and $q_{p,i}=\lambda_i/\theta_i$), the Karush--Kuhn--Tucker (KKT) conditions are necessary and sufficient for optimality.

The Lagrangian of \eqref{eq:LP-reduced} is
\begin{align*}
  \mathcal{L}
  &= \sum_{i=1}^I w_i\bigl(\mu_{m,i} y_{m,i} + \mu_{s,i} y_{s,i}\bigr)
   + \alpha\Bigl(1 - \sum_{i=1}^I x_i\Bigr)
   + \beta\Bigl((B-1)\sum_{i=1}^I x_i - \sum_{i=1}^I y_{m,i}\Bigr) \\
  &\quad
   + \delta\Bigl(B - \sum_{i=1}^I y_{s,i} - B\sum_{i=1}^I x_i\Bigr)
   + \sum_{i=1}^I \phi_i\bigl(\lambda_i - \theta_i q_{p,i} - \mu_{p,i} x_i\bigr) \\
  &\quad
   + \sum_{i=1}^I \eta_i\bigl(\mu_{p,i} x_i - \mu_{m,i} y_{m,i} - \mu_{s,i} y_{s,i}\bigr)
   + \sum_{i=1}^I \bigl(\sigma_{x,i} x_i + \sigma_{m,i} y_{m,i}
                        + \sigma_{s,i} y_{s,i} + \sigma_{p,i} q_{p,i}\bigr).
\end{align*}

The KKT conditions at an optimal primal--dual pair $\bigl(x_i^\star,y_{m,i}^\star,y_{s,i}^\star,q_{p,i}^\star; \alpha^\star,\beta^\star,\delta^\star,\phi_i^\star,\eta_i^\star, \sigma_{x,i}^\star,\sigma_{m,i}^\star,\sigma_{s,i}^\star,\sigma_{p,i}^\star\bigr)$ are:

\paragraph{(i) Stationarity.}
For each $i$,
\begin{align}
  \frac{\partial\mathcal{L}}{\partial y_{m,i}}
  &= w_i\mu_{m,i} - \beta - \eta_i \mu_{m,i} + \sigma_{m,i} = 0, \label{eq:stat-ym} \\
  \frac{\partial\mathcal{L}}{\partial y_{s,i}}
  &= w_i\mu_{s,i} - \delta - \eta_i \mu_{s,i} + \sigma_{s,i} = 0, \label{eq:stat-ys} \\
  \frac{\partial\mathcal{L}}{\partial x_i}
  &= -\alpha + (B-1)\beta - B\delta
     - \phi_i\mu_{p,i} + \eta_i\mu_{p,i} + \sigma_{x,i} = 0, \label{eq:stat-x} \\
  \frac{\partial\mathcal{L}}{\partial q_{p,i}}
  &= -\phi_i\theta_i + \sigma_{p,i} = 0. \label{eq:stat-qp}
\end{align}

\paragraph{(ii) Complementary slackness.}
\begin{align}
  &\alpha\Bigl(1 - \sum_{j=1}^I x_j\Bigr) = 0, \tag{C1} \\
  &\beta\Bigl(\sum_{j=1}^I y_{m,j} - (B-1)\sum_{j=1}^I x_j\Bigr) = 0, \tag{C2} \\
  &\delta\Bigl(\sum_{j=1}^I y_{s,j} + B\sum_{j=1}^I x_j - B\Bigr) = 0, \tag{C3} \\
  &\eta_i\bigl(\mu_{m,i} y_{m,i} + \mu_{s,i} y_{s,i} - \mu_{p,i} x_i\bigr) = 0, \quad \forall i, \tag{C4} \\
  &\sigma_{x,i} x_i = 0,\;
   \sigma_{m,i} y_{m,i} = 0,\;
   \sigma_{s,i} y_{s,i} = 0,\;
   \sigma_{p,i} q_{p,i} = 0, \quad \forall i. \tag{C5}
\end{align}

\paragraph{(iii) Primal feasibility.}
All constraints in \eqref{eq:LP-reduced} hold.

\paragraph{(iv) Dual feasibility.}
\[
  \alpha,\beta,\delta,\eta_i,\sigma_{x,i},\sigma_{m,i},\sigma_{s,i},\sigma_{p,i} \;\ge\; 0,
  \quad \forall i,\qquad
  \phi_i \in \mathbb{R},\ \forall i.
\]

\textbf{Third, starting from the characterization above, we can begin to construct a satisfying optimal solution.}

Let
\[
  (x_i^\star,y_{m,i}^\star,y_{s,i}^\star,q_{p,i}^\star)
\]
be any optimal solution of \eqref{eq:LP-reduced} (hence also of \eqref{eq:LP-original}) and let $(\alpha^\star,\beta^\star,\delta^\star,\phi_i^\star,\eta_i^\star, \dots)$ be a corresponding dual optimal solution satisfying the KKT conditions above.

Suppose that there exists an index $i_0$ such that the following constraint is \emph{slack} at the optimum, i.e.,
\begin{equation}
  \label{eq:slack-i0}
  \mu_{m,i_0} y_{m,i_0}^\star + \mu_{s,i_0} y_{s,i_0}^\star
    < \mu_{p,i_0} x_{i_0}^\star.
\end{equation}
Then, by complementary slackness (C4), we must have
\[
  \eta_{i_0}^\star = 0.
\]

We now show that, under the assumption $\gamma\tau \ge (B-1)/B$, we can \emph{reallocate} prefill and decode occupancy across modes so that:
\begin{itemize}
  \item the global capacity constraints remain feasible,
  \item the per-class decode completion rates $\mu_{m,i} y_{m,i}^\star + \mu_{s,i} y_{s,i}^\star$ remain unchanged for all $i$,
  \item and the slack in \eqref{eq:slack-i0} is reduced, ultimately to $0$,
\end{itemize}
without changing the objective value.
This yields a new optimal solution in which the $i_0$-th decode constraint is tight.
By repeating the same procedure for every class with slack inequality \eqref{eq:slack-i0}, we obtain an optimal solution with
\[
  \mu_{m,i} y_{m,i}^\star + \mu_{s,i} y_{s,i}^\star
  = \mu_{p,i} x_i^\star,\quad \forall i,
\]
which is equivalent to $q_{d,i}^\star = 0$ in the original formulation.

For each $i$ with $\mu_{m,i} y_{m,i}^\star + \mu_{s,i} y_{s,i}^\star < \mu_{p,i} x_i^\star$ (i.e., $q_{d,i}^\star > 0$ in the original variables), define the gap
\[
  \Delta_i := \frac{\mu_{p,i} x_i^\star
                     - \mu_{m,i} y_{m,i}^\star - \mu_{s,i} y_{s,i}^\star}
                    {\mu_{p,i}} \;\ge\; 0.
\]
Consider the modified active prefill and prefill queue
\[
  \tilde x_i := x_i^\star - \Delta_i
  = \frac{\mu_{m,i} y_{m,i}^\star + \mu_{s,i} y_{s,i}^\star}{\mu_{p,i}},
  \qquad
  \tilde q_{p,i} := q_{p,i}^\star + \frac{\mu_{p,i}\Delta_i}{\theta_i}
                  = q_{p,i}^\star + \frac{\mu_{p,i} x_i^\star
                                       - \mu_{m,i} y_{m,i}^\star
                                       - \mu_{s,i} y_{s,i}^\star}{\theta_i}.
\]
(For classes where \eqref{eq:decode-ineq} already holds with equality, set $\Delta_i=0$ and $\tilde x_i = x_i^\star$, $\tilde q_{p,i} = q_{p,i}^\star$.)

By construction,
\begin{align*}
  \mu_{p,i}\tilde x_i
  &= \mu_{p,i}\bigl(x_i^\star - \Delta_i\bigr)
   = \mu_{p,i}x_i^\star
     - \bigl(\mu_{p,i}x_i^\star
             - \mu_{m,i} y_{m,i}^\star - \mu_{s,i} y_{s,i}^\star\bigr) \\
  &= \mu_{m,i} y_{m,i}^\star + \mu_{s,i} y_{s,i}^\star.
\end{align*}
Hence the inequality\eqref{eq:slack-i0} becomes tight:
\[
  \mu_{m,i} y_{m,i}^\star + \mu_{s,i} y_{s,i}^\star
    = \mu_{p,i}\tilde x_i.
\]
Moreover,
\begin{align*}
  \lambda_i - \theta_i \tilde q_{p,i} - \mu_{p,i} \tilde x_i
  &= \lambda_i
     - \theta_i\Bigl(q_{p,i}^\star
                    + \frac{\mu_{p,i}x_i^\star
                            - \mu_{m,i}y_{m,i}^\star
                            - \mu_{s,i}y_{s,i}^\star}{\theta_i}\Bigr)
     - \mu_{p,i}\bigl(x_i^\star - \Delta_i\bigr) \\
  &= (\lambda_i - \theta_i q_{p,i}^\star - \mu_{p,i} x_i^\star)
     + \mu_{m,i} y_{m,i}^\star + \mu_{s,i} y_{s,i}^\star
     - \mu_{m,i} y_{m,i}^\star - \mu_{s,i} y_{s,i}^\star \\
  &= 0,
\end{align*}
so the first constraint remains satisfied.
Nonnegativity holds because $x_i^\star \ge \Delta_i$ is equivalent to $\mu_{p,i} x_i^\star \ge \mu_{m,i} y_{m,i}^\star + \mu_{s,i} y_{s,i}^\star$, which already holds.

At this stage we have decreased $\sum_i x_i$ to $\sum_i \tilde x_i = \sum_i x_i^\star - \sum_i \Delta_i$ while keeping $(y_{m,i},y_{s,i})$ unchanged.
Thus:
\begin{itemize}
  \item The constraint $\sum_i x_i \le 1$ remains feasible;
  \item The solo-decode capacity $\sum_i y_{s,i} \le B\bigl(1-\sum_i x_i\bigr)$ becomes \emph{less} restrictive, since the right-hand side increases when $\sum_i x_i$ decreases;
  \item The mixed-decode capacity
        \[
          \sum_{i=1}^I y_{m,i} \;\le\; (B-1)\sum_{i=1}^I x_i
        \]
        may become \emph{more} restrictive, because the right-hand side decreases as $\sum_i x_i$ decreases.
\end{itemize}
If the mixed-capacity constraint is still satisfied with the new $\tilde x_i$, we are done with this step.
Otherwise, let
\[
  \Delta := \sum_{i=1}^I y_{m,i}^\star - (B-1)\sum_{i=1}^I \tilde x_i > 0
\]
denote the gap of mixed-mode occupancy between the new capacity.

We now show that, thanks to the assumption $\gamma\tau \ge (B-1)/B$, we can reduce the total mixed occupancy by $\Delta$ while increasing solo occupancy accordingly, without violating capacity and without changing decode completion rates.

For each $i$, define perturbations $(\delta y_{m,i},\delta y_{s,i})$ such that
\[
  \sum_{i=1}^I \delta y_{m,i} = -\Delta, \qquad
  \delta y_{m,i} \le 0,\ \forall i.
\]
We choose $\delta y_{s,i}$ to preserve per-class decode completion rates:
\begin{equation}
  \mu_{m,i}\,\delta y_{m,i} + \mu_{s,i}\,\delta y_{s,i} = 0,
  \qquad \forall i.
  \label{eq:per-class-preserve}
\end{equation}
Equation \eqref{eq:per-class-preserve} implies
\[
  \delta y_{s,i}
  = -\frac{\mu_{m,i}}{\mu_{s,i}}\delta y_{m,i}.
\]
From the speed abstraction, we have
\[
  \mu_{m,i} = \frac{1}{D_i\tau}, \qquad
  \mu_{s,i} = \frac{\gamma}{D_i},
  \qquad \Rightarrow\qquad
  \frac{\mu_{m,i}}{\mu_{s,i}} = \frac{1}{\gamma\tau},
\]
which is \emph{independent} of $i$. Hence
\begin{equation}
  \sum_{i=1}^I \delta y_{s,i}
  = -\frac{1}{\gamma\tau}\sum_{i=1}^I \delta y_{m,i}
  = \frac{\Delta}{\gamma\tau}. \label{eq:sum-delta-ys}
\end{equation}
Thus we have decreased total mixed occupancy by $\Delta$ and increased total solo occupancy by $\Delta/(\gamma\tau)$.

\smallskip
\emph{Mixed-decode capacity.}
By construction, the new mixed occupancy is
\[
  \sum_{i=1}^I \bigl(y_{m,i}^\star + \delta y_{m,i}\bigr)
  = \sum_{i=1}^I y_{m,i}^\star - \Delta
  = (B-1)\sum_{i=1}^I \tilde x_i,
\]
so the mixed capacity is now exactly tight and hence feasible.

\smallskip
\emph{Solo-decode capacity.}
Let $X^\star := \sum_{i=1}^I x_i^\star$ and $\tilde X := \sum_{i=1}^I \tilde x_i$.
Then
\[
  \tilde X = X^\star - \sum_{i=1}^I \Delta_i.
\]
Originally, the solo capacity constraint was
\[
  \sum_{i=1}^I y_{s,i}^\star \;\le\; B(1 - X^\star).
\]
Conducting the steps above, with $y_{s,i}$ unchanged and $x_i$ replaced by $\tilde x_i$, we had
\[
  \sum_{i=1}^I y_{s,i}^\star \;\le\; B(1 - X^\star)
  \;\le\; B(1 - \tilde X),
\]
so solo capacity was slack.
After applying $(\delta y_{m,i},\delta y_{s,i})$, the new solo occupancy becomes, using \eqref{eq:sum-delta-ys},
\[
  \sum_{i=1}^I \bigl(y_{s,i}^\star + \delta y_{s,i}\bigr)
  = \sum_{i=1}^I y_{s,i}^\star + \frac{\Delta}{\gamma\tau}.
\]
We want to ensure
\begin{equation}
  \sum_{i=1}^I y_{s,i}^\star + \frac{\Delta}{\gamma\tau}
  \;\le\; B(1 - \tilde X).
  \label{eq:solo-capacity-check}
\end{equation}
Observe that
\begin{align*}
  \Delta
  &= \sum_{i=1}^I y_{m,i}^\star - (B-1)\sum_{i=1}^I \tilde x_i \\
  &= \biggl[\sum_{i=1}^I y_{m,i}^\star - (B-1)\sum_{i=1}^I x_i^\star\biggr]
     + (B-1)\sum_{i=1}^I (x_i^\star - \tilde x_i).
\end{align*}
By feasibility of the original solution,
$\sum_i y_{m,i}^\star \le (B-1)\sum_i x_i^\star$, hence
\[
  \Delta \;\le\; (B-1)\sum_{i=1}^I (x_i^\star - \tilde x_i)
  = (B-1)\sum_{i=1}^I \Delta_i.
\]
Therefore,
\[
  \frac{\Delta}{\gamma\tau}
  \;\le\; \frac{B-1}{\gamma\tau}\sum_{i=1}^I \Delta_i.
\]
Using the assumption $\gamma\tau \ge (B-1)/B$, we obtain
\[
  \frac{B-1}{\gamma\tau} \;\le\; B,
  \qquad\Rightarrow\qquad
  \frac{\Delta}{\gamma\tau}
  \;\le\; B\sum_{i=1}^I \Delta_i.
\]
Finally, note that
\[
  B(1 - \tilde X)
  = B(1 - X^\star) + B\sum_{i=1}^I \Delta_i.
\]
Combining these inequalities,
\begin{align*}
  \sum_{i=1}^I y_{s,i}^\star + \frac{\Delta}{\gamma\tau}
  &\le \sum_{i=1}^I y_{s,i}^\star + B\sum_{i=1}^I \Delta_i \\
  &\le B(1 - X^\star) + B\sum_{i=1}^I \Delta_i \\
  &= B(1 - \tilde X),
\end{align*}
which is exactly \eqref{eq:solo-capacity-check}.
Thus the solo-decode capacity constraint remains feasible after the reallocation.

By \eqref{eq:per-class-preserve}, for each $i$,
\[
  \mu_{m,i}\bigl(y_{m,i}^\star + \delta y_{m,i}\bigr)
  + \mu_{s,i}\bigl(y_{s,i}^\star + \delta y_{s,i}\bigr)
  = \mu_{m,i} y_{m,i}^\star + \mu_{s,i} y_{s,i}^\star.
\]
Hence all class-level decode completion rates are unchanged, and the objective
\[
  \sum_{i=1}^I w_i\bigl(\mu_{m,i} y_{m,i} + \mu_{s,i} y_{s,i}\bigr)
\]
remains the same.
Consequently, we have constructed a new feasible solution with the \emph{same} objective value, but with mixed occupancy reduced to exactly match the capacity available under the updated $\tilde x_i$.

Moreover, for any class $i$ with initial slack in \eqref{eq:decode-ineq}, we now have
\[
  \mu_{m,i} y_{m,i}^\star + \mu_{s,i} y_{s,i}^\star = \mu_{p,i} \tilde x_i,
\]
i.e., its decode inequality is tight.

Applying the above procedure to each class $i$ whose decode inequality \eqref{eq:decode-ineq} is slack at an optimal solution, we obtain another optimal solution of \eqref{eq:LP-reduced} such that
\[
  \mu_{m,i} y_{m,i}^\star + \mu_{s,i} y_{s,i}^\star
    = \mu_{p,i} x_i^\star, \qquad \forall i.
\]
Returning to the original formulation \eqref{eq:LP-original}, this is equivalent (via \eqref{eq:qdi-as-slack}) to
\[
  q_{d,i}^\star = 0, \qquad \forall i.
\]
Thus the steady-state fluid LP admits an optimal solution in which the decode buffer is empty in the fluid limit, completing the proof.
\end{proof}

\section{Proof of Theorem~\ref{thm:fluid-limit}}

\begin{proof}{Proof of Theorem~\ref{thm:fluid-limit}}
Fix $T>0$.
We denote the scaled processes as $\bar W^n:=W^n/n$.
From the model above, we have:
\[
A_i^n(t)=N_{a,i}(\lambda_i n t),\quad
B_{p,i}^n(t)=N_{b_p,i}\Big(\!\int_0^t \theta_{i} Q_{p,i}^n(s)\,ds\Big),\quad
B_{d,i}^n(t)=N_{b_d,i}\Big(\!\int_0^t \theta_{i} Q_{d,i}^n(s)\,ds\Big),
\]
\[
S_{p,i}^n(t)=N_{s_p,i}\Big(\!\int_0^t \mu_{p,i} X_i^n(s)\,ds\Big),\quad
S_{d,m,i}^n(t)=N_{s_m,i}\Big(\!\int_0^t \mu_{m,i} Y_{m,i}^n(s)\,ds\Big),\quad
\]
\[
S_{d,s,i}^n(t)=N_{s_s,i}\Big(\!\int_0^t \mu_{s,i} Y_{s,i}^n(s)\,ds\Big)
\]

First we prove the tightness.
The GPU capacity implies that $0\le X^n\le n$, $0\le Y_m^n\le (B-1)X^n\le (B-1)n$, $0\le Y_s^n\le B(n-X^n)\le Bn$, so the time-changes for $S_{p,i}^n,S_{d,m,i}^n,S_{d,s,i}^n$ are bounded.
Since $Q_{p,i}^n(s)\le Q_{p,i}^n(0)+A_i^n(s)$ and $Q_{d,i}^n(s)\le Q_{d,i}^n(0)+S_{p,i}^n(s)$, we have:
\[
\frac{1}{n}\!\int_0^t \theta_{p,i} Q_{p,i}^n(s)\,ds \;\le\; \theta_{p,i}\Big(t\,\bar Q_{p,i}^n(0)+t\,\bar A_i^n(t)\Big),\quad
\frac{1}{n}\!\int_0^t \theta_{d,i} Q_{d,i}^n(s)\,ds \;\le\; \theta_{d,i}\Big(t\,\bar Q_{d,i}^n(0)+t\,\bar S_{p,i}^n(t)\Big),
\]
hence these time-changes are stochastically bounded on $[0,T]$.
Controls are nondecreasing and satisfy
\[
U_{p,i}^n(t)\le Q_{p,i}^n(0)+A_i^n(t),\qquad
U_{d,m,i}^n(t)+U_{d,s,i}^n(t)\le Q_{d,i}^n(0)+S_{p,i}^n(t),
\]
and structural transfers obey $M_{m\to s}^n\le B S_p^n$, $M_{s\to m}^n\le B U_p^n$.
By the FSLLN for Poisson processes, the family $\{\bar A_i^n\}_n$ is tight.

Then by the random time-change theorem, together with the bounds above, we have the tightness of all other time-changed Poisson coordinates.
Also, by Billingsley's criterion for nondecreasing processes, the full vector $\bar{\mathcal{X}}^n$ is tight in $\mathbb{D}([0,T],\mathbb{R}^d)$ under the $J_1$-topology.

Second, we focus on the subsequence convergence and continuity.
By tightness, we may pick a subsequence $\bar{\mathcal{X}}^n\Rightarrow \bar{\mathcal{X}}$.
By Skorokhod representation theorem, we may assume a.s. convergence u.o.c. on $[0,T]$.
Since all jumps are $\frac{1}{n}$, the limit $\bar{\mathcal{X}}$ is a.s. continuous.

Finally, we prove that the stochastic processes satisfy the fluid model constraints.
The FSLLN gives $\bar A_i^n\to a_i(t)=\lambda_i t$ u.o.c.
For the time-changed Poisson processes, if $L^n(t):=\frac1n\int_0^t \ell^n(s)\,ds \to L(t)$ u.o.c., then $N(L^n(\cdot)n)/n\to L(\cdot)$ u.o.c.
Hence
\[
\bar B_{p,i}^n \To b_{p,i}(t)=\int_0^t \theta_{p,i}\,q_{p,i}(s)\,ds,\quad
\bar B_{d,i}^n \To b_{d,i}(t)=\int_0^t \theta_{d,i}\,q_{d,i}(s)\,ds,
\]
\[
\bar S_{p,i}^n \To s_{p,i}(t)=\int_0^t \mu_{p,i}\,x_i(s)\,ds,\quad
\bar S_{d,m,i}^n \To s_{d,m,i}(t)=\int_0^t \mu_{m,i}\,y_{m,i}(s)\,ds,\quad
\]
\[
\bar S_{d,s,i}^n \To s_{d,s,i}(t)=\int_0^t \mu_{s,i}\,y_{s,i}(s)\,ds.
\]
Taking limits in the flow balances yields \eqref{eq:fluid-qp}–\eqref{eq:fluid-ys} with \eqref{eq:fluid-a}–\eqref{eq:fluid-dd}.

The remaining inequalities are easy to check since the prelimit constraints satisfy:
\[
0\le \bar X^n\le 1,\quad 0\le \bar Y_m^n\le (B-1)\bar X^n,\quad 0\le \bar Y_s^n\le B(1-\bar X^n),
\]
and the inequalities are preserved under the limit. Hence \eqref{eq:fluid-cap-x}–\eqref{eq:fluid-cap-ys} hold.
\end{proof}
\section{Proof of Theorem~\ref{thm:asymptotic_optimality_occ}}
\ecproofheading{Prefill convergence for Theorem~\ref{thm:asymptotic_optimality_occ}.}

We focus on the LP solutions where $q_{p,i}^*>0$.
The proof for $q_{p,i}^*=0$ is straightforward.
\begin{lemma}
    \label{lem:xi-upper}
    There exists time $t_0$, such that for $t \geq t_0$, we have: $x_i(t) \leq x_i^*$ for all $i$.
\end{lemma}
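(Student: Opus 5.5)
We work with a fluid limit $(q_{p,i},x_i,\dots)$ of the scaled system under the occupancy-based gate, obtained as in Theorem~\ref{thm:fluid-limit}. The limit is almost surely continuous, so the processes $x_i(\cdot)$ are absolutely continuous. In the limit, the gate becomes the following fluid rule. Prefill capacity on the mixed pool is fixed at $\sum_j x_j^\star$, since $M/n\to\sum_j x_j^\star$. At a regular time $t$, admission capacity freed by prefill completions goes to the classes minimizing $\xi_j=(x_j(t)-x_j^\star)/x_j^\star$ among those with $q_{p,j}(t)>0$. Since we assume $q_{p,i}^\star>0$, the binding-capacity case is the relevant one.

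\emph{Step 1: the fluid gate condition.} Suppose $x_i(t)>x_i^\star$. Then $\xi_i(t)>0$. Because $\sum_j x_j(t)\le\sum_j x_j^\star$, some class $j$ has $x_j(t)<x_j^\star$, so $\xi_j(t)<0$. If that class has $q_{p,j}(t)>0$, then by continuity class $i$ receives no admissions on a neighbourhood of $t$. This is a prelimit statement: for large $n$, on that neighbourhood every idle prefill slot is given to a class with strictly smaller index. It passes to the limit, giving $\dot u_{p,i}(t)=0$ and hence $\dot x_i(t)=-\mu_{p,i}x_i(t)<0$.

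\emph{Step 2: when no under-target class has a queue.} It remains to handle times where every under-target class $j$ has $q_{p,j}(t)=0$. Here the prefill stage is not saturated by queued work, and the gate may admit class $i$. In this case I would use $\theta_i>0$ and the flow balance of the LP. An under-target class with an empty queue only admits at its arrival rate $\lambda_j$. From the LP, $\lambda_j=\mu_{p,j}x_j^\star+\theta_jq_{p,j}^\star>\mu_{p,j}x_j^\star$, so $x_j$ is pushed upward at rate at least $\lambda_j-\mu_{p,j}x_j>0$. Such a class therefore cannot remain under target with an empty queue for long, and it must in fact build a positive queue. Once it does, Step 1 applies.

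\emph{Step 3: a Lyapunov function.} To make Step 2 quantitative, set $V(t):=\max_i (x_i(t)-x_i^\star)^+/x_i^\star$ and show $V$ is nonincreasing with strictly negative derivative whenever $V>0$. Suppose $V(t)>0$ and the maximum is attained at class $i$. The index ordering means class $i$ is served last among waiting classes. Combining Steps 1 and 2 should give $\dot V\le -c\min_i\mu_{p,i}\,V$ for some $c>0$ almost everywhere on $\{V>0\}$. This yields exponential decay of $V$. The lemma asks for exact attainment at a finite $t_0$, so I would strengthen this to a constant negative drift $\dot V\le-\epsilon$ on $\{V>0\}$, using the strict inequality $\lambda_j>\mu_{p,j}x_j^\star$ guaranteed by $q_{p,j}^\star>0$ and $\theta_j>0$. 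That forces $V$ to hit $0$ by time $t_0=V(0)/\epsilon$. Once $V=0$, the same drift argument shows $V$ cannot leave $0$, which gives $x_i(t)\le x_i^\star$ for all $t\ge t_0$.

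The main obstacle is Step 2 and its interface with Step 1. Specifically, one must rule out the scenario where class $i$ sits above target and keeps receiving admissions because every under-target class has an instantaneously empty queue. Resolving it requires a careful argument about fluid-limit derivatives at regular points, showing that the prelimit tie-breaking and idleness structure survive in the limit. Obtaining a constant drift rather than only asymptotic decay, so that $t_0$ is finite, is the delicate part. If that fails, I would settle for $\limsup_{t\to\infty}(x_i(t)-x_i^\star)\le 0$, which suffices for the long-run optimality argument in Theorem~\ref{thm:asymptotic_optimality_occ}.
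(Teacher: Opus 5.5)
The paper's proof is a last-exit-time contradiction. It rests on the unconditional assertion that the gate gives $\dot x_i(t)<0$ whenever $x_i(t)>x_i^\star$. Your Step~1 is that assertion with the correct caveat that some under-target class must have a positive queue; the paper only obtains queue positivity afterwards, in Lemma~\ref{lem:qp-positive-eventually}, whose proof itself uses the present lemma. So you have located the real difficulty, but Steps~2--3 do not resolve it. The decisive inequality $\dot V\le-\epsilon$ a.e.\ on $\{V>0\}$ is announced (``should give'', ``I would strengthen'') rather than derived, and for $V=\max_i(x_i-x_i^\star)^+/x_i^\star$ it is false. Hence the proposal does not prove the statement. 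Your $\limsup$ fallback would not give the finite $t_0$ either.

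Concretely, the drift claim fails in two ways.

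(a) While the mixed pool is unsaturated, i.e.\ $\sum_k x_k(t)<x^\star:=\sum_k x_k^\star$, idle reserved slots force every prefill queue to vanish. Each class is then admitted at rate $\lambda_k$, so $\dot x_i=\lambda_i-\mu_{p,i}x_i$. Since $\lambda_i/\mu_{p,i}=x_i^\star+\theta_iq_{p,i}^\star/\mu_{p,i}>x_i^\star$ (the very inequality you want to use), a class with $x_i^\star\le x_i<\lambda_i/\mu_{p,i}$ moves further above target. So $V$ can grow, and it can leave $0$: start from $x_1(0)=x_1^\star$, the other occupancies small, all queues empty.

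(b) Even at saturation the maximum can grow. Take three classes, and let the maximizer $i$ have a positive queue while a class $j$ below target and a class $k$ with $0<\xi_k<\xi_i$ have empty queues. At regular times $\dot u_{p,j}=\lambda_j$ and $\dot u_{p,k}=\lambda_k$, and class $i$ absorbs the remaining freed capacity. This gives $\dot x_i=(\mu_{p,k}x_k-\lambda_k)-(\lambda_j-\mu_{p,j}x_j)$, which is positive when $x_k$ lies well above $\lambda_k/\mu_{p,k}$.

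Your Step~2 claim that an under-target class with an empty queue ``must build a positive queue'' is also unjustified. Having priority, it can climb to its target with its queue still empty, and then it no longer supplies the under-target waiting class that Step~1 needs.

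The missing idea is to exploit the conservation law $\sum_k x_k=x^\star$ through an aggregate potential rather than a max-type one. Take $\Phi:=\sum_k(x_k-x_k^\star)^+$, which under saturation equals the total deficit $\sum_k(x_k^\star-x_k)^+$. There are two cases:
\begin{itemize}
\item If some under-target class has a positive queue, no over-target class is admitted, so $\dot\Phi\le-\min_k\mu_{p,k}x_k^\star$.
\item Otherwise every under-target class grows at rate at least $\theta_kq_{p,k}^\star$, so the deficit, and hence $\Phi$, falls at least that fast.
\end{itemize}
This yields finite-time extinction of $\Phi$ and invariance of $\{\Phi=0\}$, but only on intervals where the pool is saturated. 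You would still need a separate argument that after a finite time the pool is saturated for good.
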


\begin{proof}{Proof of Lemma~\ref{lem:xi-upper}}
    Suppose not, then for any $t_0$, there exists $t_1$, such that $x_i(t_1) > x_i^*$.
    Take $\Tilde{t} = \sup\{\Tilde{t}< t_0: x_i(\Tilde{t}) \leq x_i^*\}$.
    Then it is clear that $x_i(\Tilde{t}) = x_i^*$ and $x_i(t)> x_i^*$ for $t\in (\Tilde{t},t_1]$.

    But by our policy, if $x_i(t)>x_i^*$, we must have $x_i'(t)<0$.
    This means that $x_i(t_1)= x_i(\Tilde{t}) + \int_{\Tilde{t}}^{t_1}dx_i(t)dt < x_i(\Tilde{t}) = x_i^*$, contradiction.
\end{proof}

\begin{lemma}\label{lem:qp-positive-eventually}
    In the fluid system, there exists time $t_i$, such that $q_{p,i}(t_i)>0$.
    And $q_{p,i}(t)>0$ for all $t \geq t_i$ and $i \in I$.
\end{lemma}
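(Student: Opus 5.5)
Throughout I work with the fluid limit of the gate-and-route system under the running assumption of this section, namely that the relevant LP solution has $q_{p,i}^\star>0$ for every class. By the prefill flow-balance constraint, this means $\mu_{p,i}x_i^\star=\lambda_i-\theta_i q_{p,i}^\star<\lambda_i$. The key consequence is that the class-$i$ prefill throughput at target occupancy is strictly smaller than the class-$i$ arrival rate. Combined with Lemma~\ref{lem:xi-upper}, which gives $x_i(t)\le x_i^\star$ for $t\ge t_0$, the plan is to show that class-$i$ fluid must accumulate in the prefill queue. I would argue in two stages: first that $q_{p,i}$ becomes positive at some time, then that it stays positive.

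\emph{Stage 1 (reaching positivity).} Fix $i$ and suppose, for contradiction, that $q_{p,i}(t)=0$ for all $t\ge t_0$. The fluid balance equations give
\[
q_{p,i}(t)+x_i(t)=q_{p,i}(t_0)+x_i(t_0)+\lambda_i(t-t_0)-b_{p,i}(t_0,t)-s_{p,i}(t_0,t).
\]
On $[t_0,t]$ we have $b_{p,i}(t_0,t)=0$, since $q_{p,i}\equiv 0$. We also have $s_{p,i}(t_0,t)=\int_{t_0}^t\mu_{p,i}x_i(s)\,ds\le \mu_{p,i}x_i^\star(t-t_0)$ by Lemma~\ref{lem:xi-upper}. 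Hence
\[
q_{p,i}(t)+x_i(t)\ge x_i(t_0)+(\lambda_i-\mu_{p,i}x_i^\star)(t-t_0),
\]
which grows linearly without bound because $\lambda_i-\mu_{p,i}x_i^\star=\theta_iq_{p,i}^\star>0$. But $x_i(t)\le x_i^\star\le 1$, so $q_{p,i}$ must eventually be positive, a contradiction. This yields the required $t_i\ge t_0$ with $q_{p,i}(t_i)>0$.

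\emph{Stage 2 (staying positive).} Suppose $q_{p,i}$ returns to zero after $t_i$, and let $t_2=\inf\{t>t_i:q_{p,i}(t)=0\}$. Just before $t_2$ the queue is positive and decreasing to zero, so I would look at the derivative of $q_{p,i}$ at regular points. The fluid equations give
\[
\dot q_{p,i}=\lambda_i-\theta_iq_{p,i}-\dot u_{p,i},\qquad \dot u_{p,i}=\dot x_i+\mu_{p,i}x_i.
\]
Near $t_2$, $q_{p,i}$ is small, so $\theta_iq_{p,i}<\tfrac12\theta_iq_{p,i}^\star$. It remains to bound $\dot u_{p,i}$ from above by $\mu_{p,i}x_i^\star$, up to a vanishing error.

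This last bound is the main obstacle. There are two cases.
\begin{itemize}
\item If $x_i<x_i^\star$ on a left-neighbourhood of $t_2$, then $\dot u_{p,i}$ could be large, because the gate may push class $i$ upward. In that case, however, $x_i$ increasing is compensated in the sum $q_{p,i}+x_i$.
\item If $x_i=x_i^\star$, then $\dot x_i=0$ and $\dot u_{p,i}=\mu_{p,i}x_i^\star$.
\end{itemize}
So I would instead track $z_i:=q_{p,i}+x_i$, whose derivative is
\[
\dot z_i=\lambda_i-\theta_iq_{p,i}-\mu_{p,i}x_i\ge \lambda_i-\theta_iq_{p,i}-\mu_{p,i}x_i^\star>0
\]
whenever $q_{p,i}<q_{p,i}^\star$. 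Therefore $z_i$ is nondecreasing whenever the queue is below its target. For $q_{p,i}$ to hit zero at $t_2$, we would then need $x_i(t_2)\ge z_i(t_2)\ge z_i(t_i)>x_i(t_i)$, so $x_i$ must grow to absorb all the queue mass. I expect to close the argument by combining this with Lemma~\ref{lem:xi-upper}, which caps $x_i\le x_i^\star$. This gives $z_i(t_2)\le x_i^\star$, so it suffices to show that $z_i$ eventually exceeds $x_i^\star$. That follows from the same linear-growth estimate as in Stage 1, applied on intervals where $q_{p,i}<q_{p,i}^\star$.

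Concretely, I would pick $t_i$ in Stage 1 as a time at which $z_i(t_i)>x_i^\star$. Such a time exists by the unbounded-growth estimate. Monotonicity of $z_i$ below target, together with the inequality $z_i\ge\min\{z_i(t_i),\,x_i^\star+q_{p,i}^\star\wedge\cdot\}$, then keeps $q_{p,i}=z_i-x_i\ge z_i-x_i^\star>0$ for all $t\ge t_i$. Taking the maximum of these times over the finitely many classes gives a common time after which every prefill queue is positive.
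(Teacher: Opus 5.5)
Your Stage 1 is correct. It is in fact more careful than the paper's own first step, which inserts the steady-state identity $\lambda_i=\theta_iq_{p,i}+\mu_{p,i}x_i$ into a transient trajectory. Integrating the balance equations with $b_{p,i}(t_0,t)=0$ and $s_{p,i}(t_0,t)\le\mu_{p,i}x_i^\star(t-t_0)$ does force $x_i$ past $x_i^\star$.

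The gap is in Stage 2. There you use only the fluid balance equations and the bound $x_i\le x_i^\star$ from Lemma~\ref{lem:xi-upper}, and these cannot yield persistence. Consider an admission rule that withholds class $i$ while $q_{p,i}$ rises toward $\lambda_i/\theta_i$ and $x_i$ decays toward $0$. Once $q_{p,i}<x_i^\star-x_i$, it admits the whole queue in a short burst. This rule never violates $x_i\le x_i^\star$, yet it returns $q_{p,i}$ to $0$ infinitely often whenever $\lambda_i/\theta_i<x_i^\star$.

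Your individual steps fail accordingly:
\begin{itemize}
\item The comparison $z_i(t_2)\ge z_i(t_i)$ needs $q_{p,i}<q_{p,i}^\star$ on all of $[t_i,t_2]$. But $q_{p,i}$ may overshoot $q_{p,i}^\star$, and then $z_i$ can decrease.
\item A time with $z_i(t_i)>x_i^\star$ is not supplied by the Stage 1 estimate, because that estimate relied on $b_{p,i}\equiv 0$. Once the queue is positive, $\dot z_i\ge\theta_i(q_{p,i}^\star-q_{p,i})$ together with $q_{p,i}\le z_i$ only gives $\liminf_{t\to\infty} z_i(t)\ge q_{p,i}^\star$.
\item The invariant you can actually derive is $z_i\ge\min\{z_i(t_i),q_{p,i}^\star\}$. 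One involving $x_i^\star+q_{p,i}^\star$ does not follow, since $z_i<x_i^\star+q_{p,i}^\star$ does not imply $q_{p,i}<q_{p,i}^\star$ when $x_i<x_i^\star$. With the correct invariant, $q_{p,i}\ge z_i-x_i^\star$ is positive only if $q_{p,i}^\star>x_i^\star$.
\end{itemize}

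The missing idea is the work conservation of the gate. This is exactly what excludes the case $x_i<x_i^\star$ that you identified as the obstacle. For $t\ge t_0$ no class is above its target, so $x_i(t)<x_i^\star$ would give $\sum_j x_j(t)<x^\star$. That means an order-$n$ number of idle reserved prefill slots in the prelimit. Since the gate fills an idle reserved slot whenever some class is waiting, this cannot coexist with $q_{p,i}(t)>0$. Hence $q_{p,i}>0$ forces $x_i=x_i^\star$, which is the step the paper's proof relies on.

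Then on a left neighbourhood of any time $t_2$ at which $q_{p,i}$ would return to zero, you get $\dot u_{p,i}=\mu_{p,i}x_i^\star$. This gives $\dot q_{p,i}=\theta_i(q_{p,i}^\star-q_{p,i})>0$, contradicting that $q_{p,i}$ decreases to $0$. With this fact, your auxiliary quantity $z_i$ is not needed at all.
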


\begin{proof}{Proof of Lemma~\ref{lem:qp-positive-eventually}}
    Suppose not, then for some $i$, we have $q_{p,i}(t)=0$ for all $t$.
    In which case, $x_i(t) \leq x_i^*$.
    By the balance equation, we have:
    \begin{align*}
        \lambda_i &= \theta_iq_{p,i}(t) + \mu_{p,i}x_i(t)\\
        &= \mu_{p,i}x_i(t)\\
        &\leq \mu_{p,i}x_i^*
    \end{align*}
    However, by the structure of the LP solution, $\lambda_i = \theta_iq_{p,i}^* + \mu_{p,i}x_i^*$, contradiction.

    If there exists some index $i$ and $\Tilde{t}_i$ such that $q_{p,i}(\Tilde{t}_i) = 0$, we can take a small enough interval $(\Tilde{t}-\epsilon, \Tilde{t})$ such that $q_{p,i}(t)$ is decreasing in the interval and $q_{p,i}(t) < q_{p,i}^*$.
    Since in the interval $q_{p,i}(t)>0$, $x_i(t) = x_i^*$.
    Then by the balance equations, we have:
    \begin{align*}
        \dot q_{p,i}(t) = \theta_i[q_{p,i}^* - q_{p,i}(t)] > 0
    \end{align*}
    contradiction!
    Then $q_{p,i}(t)>0$ for all $t \geq t_i$ and $i \in I$.
\end{proof}

\begin{lemma}\label{lem:prefill-convergence-fluid}
    Under our policy, $q_{p,i}(t) \rightarrow q_{p,i}^*$ and $x_i(t) \rightarrow x_i^*$ as $t \rightarrow \infty$.
\end{lemma}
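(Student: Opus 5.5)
The plan is to combine the two preceding lemmas, which already reduce the fluid prefill dynamics to a scalar linear ODE for each class after a finite time. Fix $i$ and take $t^\ast:=\max\{t_0,\max_j t_j\}$, where $t_0$ is from Lemma~\ref{lem:xi-upper} and $t_j$ from Lemma~\ref{lem:qp-positive-eventually}. For $t\ge t^\ast$, every class has $x_j(t)\le x_j^\star$ and $q_{p,j}(t)>0$. First I will show that positive prefill queues for all classes force each class to sit exactly at its target, $x_j(t)=x_j^\star$, for $t\ge t^\ast$. Then I will solve the resulting linear ODE for $q_{p,i}$.

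\textbf{Step 1 (prefill occupancies lock at target).} In the prelimit, a reserved prefill slot on a mixed GPU idles only when $\mathcal{I}_{\mathrm{wait}}=\emptyset$. When all fluid queues are positive, the gate is therefore work-conserving on $\mathcal{G}_{\mathrm{mix}}$. This gives $\sum_j x_j(t)=\lim_n M/n=\sum_j x_j^\star$ in the fluid limit. I will obtain this by passing to the limit via Theorem~\ref{thm:fluid-limit} and the standard argument that the cumulative idleness of the reserved slots only increases when some queue is empty. Combining $\sum_j x_j(t)=\sum_j x_j^\star$ with $x_j(t)\le x_j^\star$ for every $j$ gives $x_j(t)=x_j^\star$ for all $j$ and all $t\ge t^\ast$. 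Hence $x_i(t)\to x_i^\star$ trivially.

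\textbf{Step 2 (queue ODE).} With $x_i\equiv x_i^\star$ on $[t^\ast,\infty)$, the fluid equations \eqref{eq:fluid-qp}, \eqref{eq:fluid-x}, \eqref{eq:fluid-a}, \eqref{eq:fluid-b} and \eqref{eq:fluid-dp} give $\dot x_i=0$, hence $\dot u_{p,i}=\mu_{p,i}x_i^\star$. This yields
\[
\dot q_{p,i}(t)=\lambda_i-\theta_i q_{p,i}(t)-\mu_{p,i}x_i^\star=\theta_i\bigl(q_{p,i}^\star-q_{p,i}(t)\bigr),
\]
where the last equality uses the LP prefill balance $\lambda_i=\theta_i q_{p,i}^\star+\mu_{p,i}x_i^\star$. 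Since $\theta_i>0$, the solution is $q_{p,i}(t)=q_{p,i}^\star+(q_{p,i}(t^\ast)-q_{p,i}^\star)e^{-\theta_i(t-t^\ast)}\to q_{p,i}^\star$. Absolute continuity of the fluid paths, which follows because the prelimit rates are bounded, justifies differentiating almost everywhere.

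\textbf{Main obstacle.} The delicate step is Step 1, namely identifying the fluid limit of the gate, and in particular the claim that $\sum_j x_j=\sum_j x_j^\star$ whenever all queues are positive. The issue is that $M=\lceil n\sum x_j^\star\rceil$ only fixes capacity, while the ODE-like behaviour of individual $x_j$ under the argmin rule is a discontinuous feedback. I would handle this through the aggregate quantity only: the idle time $I^n(t)$ of reserved slots satisfies $\int \mathbf 1\{\sum_j Q^n_{p,j}>0\}\,dI^n=0$, and this complementarity passes to the limit on intervals where $\min_j q_{p,j}>0$. I would then rely on Lemma~\ref{lem:xi-upper}, taken as given, for the per-class upper bound, so the discontinuous argmin never needs to be analysed directly. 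If the case $q_{p,i}^\star=0$ for some $i$ arises, Lemma~\ref{lem:qp-positive-eventually} may fail for that class. There I would instead argue $x_i\le x_i^\star$ with $\lambda_i=\mu_{p,i}x_i^\star$, so $\dot q_{p,i}\ge -\theta_i q_{p,i}$ gives no growth beyond $e^{-\theta_i t}$ decay, and a $\liminf$ argument on $x_i$ through the aggregate constraint finishes the proof.
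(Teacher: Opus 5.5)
Your argument is correct and is essentially the paper's proof: after the time supplied by the two preceding lemmas the prefill occupancies lock at $x_i\equiv x_i^\star$, and the prefill balance reduces to $\dot q_{p,i}=\theta_i(q_{p,i}^\star-q_{p,i})$. The only difference is that you actually justify the locking step. Aggregate work conservation of the gate gives $\sum_j x_j=\sum_j x_j^\star$, and together with the per-class bound of Lemma~\ref{lem:xi-upper} this forces $x_j=x_j^\star$ for every $j$; the paper simply asserts that $q_{p,i}>0$ implies $x_i=x_i^\star$.

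Your aside on classes with $q_{p,i}^\star=0$ is muddled, because a lower bound of the form $\dot q_{p,i}\ge-\theta_i q_{p,i}$ cannot yield decay. It is also unnecessary as long as some class has $q_{p,j}^\star>0$: that class's queue stays positive, so your Step~1 already gives $x_i=x_i^\star$ for every class, hence $\dot q_{p,i}=-\theta_i q_{p,i}$ and $q_{p,i}\to 0$ for those classes.
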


\begin{proof}{Proof of Lemma~\ref{lem:prefill-convergence-fluid}}
    By the above lemma, we can assume that $q_{p,i}(t)>0$.
    $x_i(t) = x_i^*$ as a result.
    Then the convergence of $x_i(t)$ follows.
    By the balance equation,
    \begin{align*}
        \dot q_{p,i}(t) &= \lambda_i - \mu_{p,i}x_i^* - \theta_i q_{p,i}(t) \\
        &= \theta_i (q_{p,i}^* - q_{p,i}(t))
    \end{align*}
    Solving the above ODE, the convergence of $q_{p,i}(t)$ follows in standard.
\end{proof}

\ecproofheading{Decode convergence for Theorem~\ref{thm:asymptotic_optimality_occ}.}
 
We now show that, under these conditions, the aggregate decode buffer $q_d(t) := \sum_{i=1}^I q_{d,i}(t)$ converges to zero along any fluid trajectory of the gate-and-route policy.

\medskip

\begin{lemma}\label{lem:decode-cap-sat}
Consider the fluid model under the gate-and-route policy with the static GPU partition induced by the LP solution: a fraction \(x^\star := \sum_{i=1}^I x_i^\star\) of GPUs are mixed and a fraction \(1-x^\star\) are solo.
Let
\[
  y_m(t) := \sum_{i=1}^I y_{m,i}(t),\qquad
  y_s(t) := \sum_{i=1}^I y_{s,i}(t).
\]
If at some regular time \(t\) we have \(q_d(t) > 0\), then decode capacity is fully utilized:
\[
  y_m(t) = x^\star (B-1),\qquad
  y_s(t) = (1-x^\star) B.
\]
\end{lemma}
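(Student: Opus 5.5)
The plan is to read Lemma~\ref{lem:decode-cap-sat} as the fluid counterpart of the work-conserving decode router, and to derive it from the stochastic model by passing the router's work-conservation property through the fluid limit of Theorem~\ref{thm:fluid-limit}.

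\textbf{Step 1: prelimit complementarity.} In the $n$-th system the downstream rule sends a decode-ready job to a free solo slot if one exists, otherwise to a free mixed slot, and only queues it when neither exists. With the static partition, at most $(B-1)M$ decodes can sit on mixed GPUs and at most $B(n-M)$ on solo GPUs. Hence at every time
\[
  Q_d^n(t)\,\bigl[(B-1)M - Y_m^n(t)\bigr] = 0,\qquad
  Q_d^n(t)\,\bigl[B(n-M) - Y_s^n(t)\bigr] = 0 .
\]
This holds because a freed slot is refilled instantly from the buffer whenever $Q_d^n>0$, and FCFS across classes is irrelevant for this identity. Integrating over $[s,t]$ gives the integrated forms
\[
  \int_s^t Q_d^n(u)\bigl[(B-1)M-Y_m^n(u)\bigr]\,du=0,
\]
and the analogous identity for solo.

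\textbf{Step 2: pass to the limit.} Divide by $n^2$, use $M/n\to x^\star$ (since $M=\lceil n x^\star\rceil$), and apply Theorem~\ref{thm:fluid-limit} along a Skorokhod-represented subsequence with u.o.c.\ convergence. The integrands are products of uniformly bounded-on-compacts, convergent processes, so
\[
  \int_s^t q_d(u)\bigl[(B-1)x^\star - y_m(u)\bigr]\,du = 0,
\]
and similarly $\int_s^t q_d(u)\bigl[(1-x^\star)B - y_s(u)\bigr]du=0$. The capacity constraints in the limit are $y_m\le (B-1)x^\star$ and $y_s\le B(1-x^\star)$; these follow from the prelimit caps under the partition, which are sharper than \eqref{eq:fluid-cap-ym}--\eqref{eq:fluid-cap-ys}. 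So both integrands are nonnegative. Therefore each product vanishes for almost every $u$, and by continuity of the fluid paths it vanishes at every $u$.

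\textbf{Step 3: conclude at a regular time.} At any $t$ with $q_d(t)>0$, Step 2 forces $y_m(t)=(B-1)x^\star$ and $y_s(t)=(1-x^\star)B$. Regularity of $t$ is only needed so that this pointwise statement can be used later in derivative computations. Mixed slots are defined with respect to $\mathcal G_{\mathrm{mix}}$ regardless of whether a prefill is currently running, because the reserved prefill slot never hosts decodes. As a result there are no mode switches under the static partition, so the cap $(B-1)x^\star$ is fixed rather than $(B-1)x(t)$.

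\textbf{Main obstacle.} The delicate point is Step 1's claim that no decode slot is idle while the buffer is nonempty. Under the event-driven admissibility (ii), routing occurs exactly at prefill and decode completions, and a decode completion frees a slot that is immediately refilled from the buffer. I would verify this by checking each event type: prefill completion, decode completion and abandonment. In each case the invariant ``$Q_d^n>0 \Rightarrow$ all decode slots full'' is preserved, starting from $t=0$ (or after the first event, if the initial state violates it; the initial transient does not affect the integrated identity on $[s,t]$ with $s>0$ in the limit). The only other subtlety is the $O(1/n)$ rounding in $M$, which disappears in the limit.
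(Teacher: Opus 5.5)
Your argument is correct, and it reaches the paper's conclusion by a more careful route. The paper's proof stays entirely at the fluid level. It records the partition caps $y_m\le x^\star(B-1)$ and $y_s\le(1-x^\star)B$, and then asserts that, because the router is work-conserving, no decode slot can be idle while $q_d(t)>0$. It never explains why work conservation in the $n$-th system survives the limit, where $q_d(t)>0$ only says that $Q_d^n$ is of order $n$. You supply that missing link. You take the prelimit complementarity $Q_d^n\,[(B-1)M-Y_m^n]\equiv 0$ and its solo analogue, integrate it, scale by $n^{-2}$, and pass it through the u.o.c.\ limit of Theorem~\ref{thm:fluid-limit}; nonnegativity and continuity then give pointwise complementarity. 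This is the standard way such fluid-level work-conservation conditions are justified. It also makes explicit a modelling point the paper leaves implicit: under the static partition, $y_m$ counts decodes on $\mathcal G_{\mathrm{mix}}$ and is capped by $(B-1)x^\star$, not $(B-1)x(t)$. What the paper's version buys is brevity; yours costs the Skorokhod/joint-convergence machinery but is actually a proof.

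Two small points need tidying, though neither breaks the argument. First, the mixed partition cap is \emph{looser}, not sharper, than \eqref{eq:fluid-cap-ym}. Prefills run only on the $M$ mixed GPUs, so $x(t)\le x^\star$. If $y_m\le(B-1)x(t)$ were kept in force, your conclusion would force $x(t)=x^\star$ whenever $q_d(t)>0$, and the policy does not guarantee that. So you should say that the partition cap \emph{replaces} that constraint, rather than claim it refines it. Second, your parenthetical claim that an initial violation of the invariant does not matter for $s>0$ is not justified. Under the event-driven routing rule, pre-existing idle slots are filled only by fresh prefill completions, so an order-$n$ violation can persist for positive fluid time. The clean fix is your primary option: use admissibility (ii) to let the router fill all free slots from the buffer at $t=0$, so the identity holds on all of $[0,T]$. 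The paper's statement implicitly needs this too.
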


\begin{proof}{Proof of Lemma~\ref{lem:decode-cap-sat}}
By construction of the static planning step, exactly a fraction \(x^\star\) of the GPUs are permanently assigned to the mixed group, and the remaining fraction \(1-x^\star\) to the solo group.
Each mixed GPU holds at most \((B-1)\) decodes, while each solo GPU holds at most \(B\) decodes.
After fluid scaling by \(n\), the maximal aggregate decode occupancies are
\[
  y_m(t) \le x^\star (B-1),\qquad
  y_s(t) \le (1-x^\star) B.
\]

The gate-and-route decode router is work-conserving: as long as the global decode buffer is nonempty (i.e., \(q_d(t) > 0\)), any free decode slot in either group is immediately filled with a waiting job.
Hence no decode slot can be idle whenever \(q_d(t)>0\).
Therefore,
\[
  y_m(t) = x^\star (B-1),\qquad
  y_s(t) = (1-x^\star) B,
\]
whenever \(q_d(t)>0\).
\end{proof}

\medskip

We next introduce a Lyapunov function that measures the total remaining decode work in ``mixed-mode time units.''

\begin{definition}[Weighted decode work]\label{def:Wd}
Define
\[
  W_d(t)
  := \sum_{i=1}^I \frac{q_{d,i}(t) + y_{m,i}(t) + y_{s,i}(t)}{\mu_{m,i}}.
\]
\end{definition}

Here \(1/\mu_{m,i}\) is the mean decode service time of a type-\(i\) job in mixed mode.
Thus \(W_d(t)\) is the total remaining decode work expressed in mixed-mode time units.

We assume the mode speed ratio
\[
  \mu_{s,i} = \gamma\tau\,\mu_{m,i} =: \kappa\,\mu_{m,i}
  \qquad \text{for all } i \in \{1,\dots,I\},
\]
with \(\kappa = \gamma\tau > 0\) constant across classes.

\begin{lemma}\label{lem:Wd-drift}
Along any fluid trajectory, for \(t \ge 0\),
\begin{equation}\label{eq:Wd-drift-general}
  \dot W_d(t)
  = \sum_{i=1}^I \frac{\mu_{p,i}}{\mu_{m,i}}\,x_i(t)
    \;-\;\bigl( y_m(t) + \kappa y_s(t) \bigr)
    \;-\;\sum_{i=1}^I \frac{\theta_i}{\mu_{m,i}}\,q_{d,i}(t).
\end{equation}
\end{lemma}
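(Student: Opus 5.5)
The plan is to differentiate $W_d$ term by term using the fluid equations \eqref{eq:fluid-qd}--\eqref{eq:fluid-ys} together with \eqref{eq:fluid-b}--\eqref{eq:fluid-dd}. Fluid limits from Theorem~\ref{thm:fluid-limit} are Lipschitz: each coordinate is dominated by Poisson-driven increments with bounded rates. Hence $W_d$ is absolutely continuous, and the identity is to be read at almost every $t$, namely at regular points where all cumulative processes are differentiable. Since $W_d$ is a fixed positive linear combination of the class-level decode contents $z_{d,i}(t):=q_{d,i}(t)+y_{m,i}(t)+y_{s,i}(t)$, it suffices to compute $\dot z_{d,i}(t)$ for each class and then sum with weights $1/\mu_{m,i}$.

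\textbf{Step 1: a per-class balance equation for $z_{d,i}$.} Add \eqref{eq:fluid-qd}, \eqref{eq:fluid-ym} and \eqref{eq:fluid-ys}. This is the fluid analogue of \eqref{eq:bal-mode-invariance}, and it cancels two pairs of terms. The mode-switch terms $m_{s\to m,i}-m_{m\to s,i}$ appear with opposite signs in the two decode equations. The admissions $u_{d,m,i}+u_{d,s,i}$ leave the buffer and enter service. What remains is
\[
z_{d,i}(t)=z_{d,i}(0)+s_{p,i}(t)-b_{d,i}(t)-s_{d,m,i}(t)-s_{d,s,i}(t).
\]
This step is the important one. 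Neither the uncontrolled mode-switch processes nor the decode-routing controls appear, so the identity holds along \emph{any} fluid trajectory, whatever the policy.

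\textbf{Step 2: differentiate using the integral representations.} By \eqref{eq:fluid-b}--\eqref{eq:fluid-dd}, at almost every $t$,
\[
\dot z_{d,i}(t)=\mu_{p,i}x_i(t)-\theta_i q_{d,i}(t)-\mu_{m,i}y_{m,i}(t)-\mu_{s,i}y_{s,i}(t).
\]
Divide by $\mu_{m,i}$ and use the class-independent speed ratio $\mu_{s,i}/\mu_{m,i}=\gamma\tau=\kappa$, which follows from \eqref{eq:mu-def}. Then summing over $i$ gives $\sum_i y_{m,i}+\kappa\sum_i y_{s,i}=y_m+\kappa y_s$, and this yields \eqref{eq:Wd-drift-general}.

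\textbf{Expected obstacle.} The algebra is mechanical, so the only delicate point is regularity. The right-hand side of \eqref{eq:Wd-drift-general} need not be continuous, because $x_i$, $y_{m,i}$, $y_{s,i}$ and $q_{d,i}$ can have kinks under the gate and router. For this reason I would state the identity as holding at regular (Lebesgue-almost-every) times, in integrated form
\[
W_d(t)-W_d(0)=\int_0^t(\cdots)\,ds,
\]
and justify absolute continuity from the Lipschitz property of fluid limits. This is the same convention as the notion of ``regular time'' already used in Lemma~\ref{lem:decode-cap-sat}.
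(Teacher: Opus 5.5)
Your proposal is correct and is essentially the paper's argument: the paper performs the same flow accounting, with a prefill-completion inflow $\mu_{p,i}x_i/\mu_{m,i}$ and drains $y_m$, $\kappa y_s$ (via $\mu_{s,i}=\kappa\mu_{m,i}$) and $\theta_i q_{d,i}/\mu_{m,i}$. Your Step~1 makes explicit the cancellation of decode admissions and mode switches, which the paper leaves implicit. One small correction to your regularity remark: kinks in $x_i,y_{m,i},y_{s,i},q_{d,i}$ do not make the right-hand side discontinuous, because it involves only these coordinates, which are continuous by Theorem~\ref{thm:fluid-limit}, and not their derivatives; hence your Step~1 and the fundamental theorem of calculus make $W_d$ continuously differentiable, and the identity holds at every $t\ge 0$, not merely almost everywhere.
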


\begin{proof}{Proof of Lemma~\ref{lem:Wd-drift}}
We use the fluid balance equations and the random time-change representation for arrivals, completions, and abandonments; all fluid coordinates are absolutely continuous, so their derivatives exist.

\emph{Arrivals into decode.}
For class \(i\), prefills complete at rate \(\mu_{p,i} x_i(t)\).
Each completion creates one decode job of that class, with mixed-mode work \(1/\mu_{m,i}\).
Hence the instantaneous inflow into \(W_d\) from class \(i\) is
\[
  \mu_{p,i} x_i(t) \cdot \frac{1}{\mu_{m,i}},
\]
and summing over \(i\) yields the first term on the right-hand side of \eqref{eq:Wd-drift-general}.

\emph{Service in mixed mode.}
For class \(i\), mixed decodes are in service at rate \(\mu_{m,i} y_{m,i}(t)\), each completion removing \(1/\mu_{m,i}\) units of \(W_d\).
Thus the mixed-mode drain is
\[
  \sum_{i=1}^I \mu_{m,i} y_{m,i}(t) \cdot \frac{1}{\mu_{m,i}}
  = \sum_{i=1}^I y_{m,i}(t) = y_m(t).
\]

\emph{Service in solo mode.}
Using \(\mu_{s,i} = \kappa \mu_{m,i}\), solo-mode completions for class \(i\) occur at rate \(\mu_{s,i} y_{s,i}(t)\), each also removing \(1/\mu_{m,i}\) units of work.
Thus
\[
  \sum_{i=1}^I \mu_{s,i} y_{s,i}(t) \cdot \frac{1}{\mu_{m,i}}
  = \sum_{i=1}^I \kappa \mu_{m,i} y_{s,i}(t) \cdot \frac{1}{\mu_{m,i}}
  = \kappa \sum_{i=1}^I y_{s,i}(t) = \kappa y_s(t),
\]
which gives the second term in \eqref{eq:Wd-drift-general}.

\emph{Abandonments.}
Class-\(i\) decode abandonments occur at rate \(\theta_i q_{d,i}(t)\).
Each abandonment removes \(1/\mu_{m,i}\) units of \(W_d\).
Thus the total drain from abandonments is
\[
  \sum_{i=1}^I \theta_i q_{d,i}(t) \cdot \frac{1}{\mu_{m,i}},
\]
which yields the third term in \eqref{eq:Wd-drift-general} with a minus sign.

Combining these three contributions establishes \eqref{eq:Wd-drift-general}.
\end{proof}

\begin{lemma}\label{lem:Wd-drift-positive}
Assume that under the gate-and-route policy, the prefill occupancies satisfy \(x_i(t) \to x_i^\star\) as \(t\to\infty\) for each \(i\), and let \(x^\star := \sum_{i=1}^I x_i^\star\).
Define the per-GPU decode capacity (in mixed-mode time units)
\[
  C^\star := \kappa (1-x^\star) B + x^\star (B-1).
\]
Then there exists \(T_0<\infty\) such that for all \(t\ge T_0\),
\[
  q_d(t) > 0 \;\Longrightarrow\;
  \dot W_d(t)
  \le - \sum_{i=1}^I \frac{\theta_i}{\mu_{m,i}}\,q_{d,i}(t).
\]
\end{lemma}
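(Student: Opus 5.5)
Plan: combine the drift identity of Lemma~\ref{lem:Wd-drift} with the saturation property of Lemma~\ref{lem:decode-cap-sat}, and then use the LP structure at the optimum to show that the prefill inflow term is asymptotically dominated by the decode capacity $C^\star$.

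First, fix a regular time $t$ with $q_d(t)>0$. Lemma~\ref{lem:decode-cap-sat} gives $y_m(t)=x^\star(B-1)$ and $y_s(t)=(1-x^\star)B$, so in \eqref{eq:Wd-drift-general} the service drain equals $y_m(t)+\kappa y_s(t)=C^\star$. Hence
\[
\dot W_d(t)=\sum_{i}\frac{\mu_{p,i}}{\mu_{m,i}}x_i(t)-C^\star-\sum_i\frac{\theta_i}{\mu_{m,i}}q_{d,i}(t),
\]
and the claim reduces to showing $\sum_i(\mu_{p,i}/\mu_{m,i})x_i(t)\le C^\star$ for all $t\ge T_0$.

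Second, I would bound the inflow by its LP value. By Proposition~\ref{prop:qd-zero} we may take the optimal solution with $q_{d,i}^\star=0$, so $\mu_{p,i}x_i^\star=\mu_{m,i}y_{m,i}^\star+\mu_{s,i}y_{s,i}^\star$. Dividing by $\mu_{m,i}$, using $\mu_{s,i}=\kappa\mu_{m,i}$, and summing over $i$ gives $\sum_i(\mu_{p,i}/\mu_{m,i})x_i^\star=y_m^\star+\kappa y_s^\star$. The LP capacity constraints then yield $y_m^\star+\kappa y_s^\star\le x^\star(B-1)+\kappa B(1-x^\star)=C^\star$. It remains to pass from $x_i^\star$ to $x_i(t)$. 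For this I would use Lemma~\ref{lem:xi-upper} rather than only the convergence $x_i(t)\to x_i^\star$: beyond some $t_0$ it gives $x_i(t)\le x_i^\star$ for every $i$. Since all the coefficients $\mu_{p,i}/\mu_{m,i}$ are positive, this implies $\sum_i(\mu_{p,i}/\mu_{m,i})x_i(t)\le\sum_i(\mu_{p,i}/\mu_{m,i})x_i^\star\le C^\star$. Taking $T_0:=t_0$ finishes the argument.

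The main obstacle is this last step. The statement only assumes convergence $x_i(t)\to x_i^\star$, and convergence by itself gives the inflow bound only up to an $\epsilon$. If $y_m^\star+\kappa y_s^\star<C^\star$ strictly, convergence alone suffices, because the strict slack absorbs the $\epsilon$. In the critically loaded case, however, the one-sided bound from Lemma~\ref{lem:xi-upper} is essential, and I would cite it explicitly. As a fallback, Lemmas~\ref{lem:qp-positive-eventually} and \ref{lem:prefill-convergence-fluid} show that eventually $q_{p,i}>0$ and the gate pins $x_i(t)=x_i^\star$ exactly, which also gives the bound. A smaller point is that $M=\lceil n x^\star\rceil$ makes the fluid mixed fraction exactly $x^\star$, so Lemma~\ref{lem:decode-cap-sat} applies with the LP value itself.
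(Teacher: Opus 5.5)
Your proposal is correct and takes essentially the same route as the paper. You substitute the saturated occupancies from Lemma~\ref{lem:decode-cap-sat} into the drift identity, bound the LP inflow $A^\star$ by $C^\star$ using the $q_{d,i}^\star=0$ solution from Proposition~\ref{prop:qd-zero} together with the capacity constraints, and then use the one-sided bound $x_i(t)\le x_i^\star$ from Lemma~\ref{lem:xi-upper}, rather than mere convergence, to get $A(t)\le C^\star$ for $t\ge t_0$. Your remark that convergence alone only gives the bound up to an $\epsilon$ in the critically loaded case is exactly why the paper also invokes Lemma~\ref{lem:xi-upper} at that step.
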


\begin{proof}{Proof of Lemma~\ref{lem:Wd-drift-positive}}
Since \(x_i(t) \to x_i^\star\), we have
\[
  A(t) := \sum_{i=1}^I \frac{\mu_{p,i}}{\mu_{m,i}}\,x_i(t)
  \;\longrightarrow\;
  A^\star := \sum_{i=1}^I \frac{\mu_{p,i}}{\mu_{m,i}}\,x_i^\star.
\]
We claim that \(A^\star \le C^\star\).
Indeed, by Proposition~\ref{prop:qd-zero} we can choose an LP-optimal solution with \(q_{d,i}^\star=0\) for all \(i\).
Summing the LP decode flow-balance constraints and using \(\mu_{s,i}=\kappa\mu_{m,i}\) gives
\[
  A^\star
  = \sum_{i=1}^I y_{m,i}^\star + \kappa \sum_{i=1}^I y_{s,i}^\star
  \le x^\star(B-1) + \kappa (1-x^\star)B
  = C^\star,
\]
where the inequality is exactly the LP mixed/solo decode capacity constraints.

By Lemma~\ref{lem:xi-upper}, there exists \(t_0\) such that \(x_i(t)\le x_i^\star\) for all \(i\) and all \(t\ge t_0\).
Hence
\[
  A(t) \le A^\star \le C^\star,\qquad t\ge t_0.
\]

By Lemma~\ref{lem:decode-cap-sat}, if \(q_d(t)>0\) then decode capacity is fully utilized and
\(
  y_m(t) = x^\star (B-1),\;
  y_s(t) = (1-x^\star)B.
\)
Substituting into \eqref{eq:Wd-drift-general} gives, for such \(t\),
\[
  \dot W_d(t)
  = A(t)
    - \Bigl[x^\star (B-1) + \kappa (1-x^\star)B\Bigr]
    - \sum_{i=1}^I \frac{\theta_i}{\mu_{m,i}} q_{d,i}(t).
\]

Then by using the inequality $A(t) \le A^\star \le C^\star$, we have:
\[
  \dot W_d(t)
  \le - \sum_{i=1}^I \frac{\theta_i}{\mu_{m,i}} q_{d,i}(t),
\]
as claimed.
\end{proof}

\medskip

We are now ready to show that the decode buffer vanishes in the fluid limit.

\begin{proposition}\label{prop:qd-to-zero}
Assume \(\theta_i>0\) for all \(i\), and that under the gate-and-route policy the prefill occupancies satisfy \(x_i(t)\to x_i^\star\) as \(t\to\infty\), with the associated capacity constant $C^\star := \kappa (1-x^\star) B + x^\star (B-1)$.
Then along any fluid trajectory,
\[
  \lim_{t\to\infty} q_d(t) = 0.
\]
\end{proposition}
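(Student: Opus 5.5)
We use $W_d$ from Definition~\ref{def:Wd} as a Lyapunov function, exploiting the drift bound of Lemma~\ref{lem:Wd-drift-positive}. The difficulty is that this bound controls $\dot W_d$ only while $q_d(t)>0$, and $W_d$ itself need not vanish: it includes in-service decode mass. The plan is therefore to obtain integrability of the abandonment-weighted buffer, $\int_{T_0}^\infty \sum_i \theta_i q_{d,i}(t)/\mu_{m,i}\,dt<\infty$, and then upgrade integrability to pointwise convergence via Lipschitz continuity of fluid paths.

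\textbf{Step 1 (bound on $W_d$ after $T_0$).} Set $\underline c:=\min_i \theta_i/\mu_{m,i}>0$, which uses $\theta_i>0$. All fluid coordinates are Lipschitz, since rates are bounded by capacity and by $\lambda_i$, so $W_d$ is absolutely continuous. On $\{t\ge T_0: q_d(t)>0\}$, Lemma~\ref{lem:Wd-drift-positive} gives $\dot W_d\le -\underline c\, q_d(t)\le 0$. On $\{q_d=0\}$, $W_d$ may increase, but there $W_d\le \sum_i (y_{m,i}+y_{s,i})/\mu_{m,i}\le K:=\max_i\mu_{m,i}^{-1}\,(B-1+B)$. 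Hence $W_d(t)\le \max\{W_d(T_0),K\}=:\bar W$ for all $t\ge T_0$. The argument is: if $W_d$ exceeds $K$, then $q_d>0$, so $W_d$ is nonincreasing until it drops to $K$.

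\textbf{Step 2 (integrability).} The crude drift bound from Lemma~\ref{lem:Wd-drift} does not telescope across periods when $q_d=0$, so I work directly with a decode-queue functional. Let $V(t):=\sum_i q_{d,i}(t)/\mu_{m,i}$. From the fluid equations,
\[
\dot V=\sum_i\frac{\mu_{p,i}x_i - \theta_i q_{d,i}}{\mu_{m,i}}-\sum_i\frac{\dot u_{d,m,i}+\dot u_{d,s,i}}{\mu_{m,i}}.
\]
When $q_d>0$, Lemma~\ref{lem:decode-cap-sat} shows that capacity is saturated, so admissions into decode equal completions, $\sum_i(\dot u_{d,m,i}+\dot u_{d,s,i})/\mu_{m,i}=y_m+\kappa y_s$ in work units. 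This holds because within FCFS every freed slot is refilled. Combining this with $A(t)\le C^\star$ (which uses Lemma~\ref{lem:xi-upper} and Proposition~\ref{prop:qd-zero}) gives $\dot V\le -\sum_i\theta_i q_{d,i}/\mu_{m,i}\le-\underline c\,\mu_{\min}V$ wherever $V>0$, with $\mu_{\min}:=\min_i \mu_{m,i}$. At points where $V=0$, $V$ is a nonnegative absolutely continuous function at its minimum, so $\dot V=0$ a.e.\ there. Thus $\dot V\le -cV$ a.e.\ on $[T_0,\infty)$, and Gronwall gives $V(t)\le V(T_0)e^{-c(t-T_0)}\to0$. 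This yields $q_d(t)\to0$ directly, making the integrability route a fallback.

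\textbf{Main obstacle.} The delicate point is the identity ``decode admissions equal completions in work units'' when $q_d>0$. FCFS across classes means the job admitted into a freed slot may be of a different class than the one that departed, so the work-unit inflow to service is not exactly $y_m+\kappa y_s$. If this fails, I would fall back on $W_d$: by Step 1 and $\dot W_d\le-\underline c\,q_d$ on $\{q_d>0\}$, the integral $\int q_d$ over excursions is controlled. Lipschitz continuity of $q_d$ then forces the excursion heights to shrink, provided the number of excursions with height bounded below is finite. To show that finiteness, I would argue that each such excursion consumes at least a fixed amount of $W_d$ decrease, while increases of $W_d$ occur only at $q_d=0$ and are bounded by $K$. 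This accounting is what I expect to require the most care.
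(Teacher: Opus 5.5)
Your Step 2 fails for the reason you flag, and the failure is not cosmetic. On a saturated interval the fluid balances give equality of \emph{job counts}, $\sum_i(\dot u_{d,m,i}+\dot u_{d,s,i})=\sum_i(\mu_{m,i}y_{m,i}+\mu_{s,i}y_{s,i})$. The work-weighted identity you need is equivalent to $\tfrac{d}{dt}\sum_i(y_{m,i}+y_{s,i})/\mu_{m,i}=0$, and FCFS across classes with class-dependent $D_i$ does not provide this.

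In fact $\dot V\le -cV$ would forbid $V$ from ever leaving $0$, yet the buffer can be triggered from $q_d=0$. If the slots are full of long-decode jobs, the job completion rate $\sum_i(\mu_{m,i}y_{m,i}+\mu_{s,i}y_{s,i})$ can lie strictly below the job inflow $\sum_i\mu_{p,i}x_i^\star$ even when $A^\star<C^\star$. Step 2 is valid when all $\mu_{m,i}$ coincide (e.g.\ $I=1$), but not in general.

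The fallback does not close either. A decrease of at least $c\eta$ in $W_d$ per excursion, together with the \emph{level} bound $W_d\le K$ on $\{q_d=0\}$, is consistent with infinitely many excursions. On $\{q_d=0\}$ one has $\dot W_d=A(t)-(y_m+\kappa y_s)$, which can be positive, so $W_d$ may be refilled toward $K$ between every pair of excursions. What is missing is either a bound on the \emph{cumulative} increase of $W_d$ on $\{q_d=0\}$, or an argument that the buffer eventually cannot be re-triggered. Both require control of the class composition of jobs in decode service under FCFS, or a different Lyapunov function that does not increase when $q_d=0$. That is a genuinely new ingredient, not bookkeeping.

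For comparison, the paper's proof is your fallback run as a contradiction. It picks $t_k\to\infty$ with $q_d(t_k)\ge\delta$, uses Lipschitz continuity to obtain intervals of length at least $\eta$ on which $q_d\ge\delta/2$, and concludes that $W_d$ drops by at least $c\eta$ on each, ``contradicting $W_d\ge 0$.'' This rests on its assertion that $W_d$ is nonincreasing whenever $q_d(t)\ge 0$, i.e.\ for all large $t$. That is stronger than Lemma~\ref{lem:Wd-drift-positive}, which only covers $q_d>0$, and it is false in general: with an empty decode stage, $\dot W_d=A(t)>0$. So you have located exactly the weak point of the paper's argument, and the paper does not supply the missing ingredient either.
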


\begin{proof}{Proof of Proposition~\ref{prop:qd-to-zero}}
Define
\[
  \underline{\theta}
  := \min_{1\le i\le I} \frac{\theta_i}{\mu_{m,i}} > 0,
\]
using the assumption \(\theta_i>0\) for all \(i\). Then
\[
  \sum_{i=1}^I \frac{\theta_i}{\mu_{m,i}} q_{d,i}(t)
  \;\ge\;
  \underline{\theta} \sum_{i=1}^I q_{d,i}(t)
  = \underline{\theta}\, q_d(t).
\]

Then by Lemma~\ref{lem:Wd-drift-positive}, we have:

\begin{equation}\label{eq:Wd-drift-ineq}
  \dot W_d(t)
  \le - \underline{\theta}\, q_d(t).
\end{equation}
which means that $W_d(t)$ is nonincreasing whenever $q_d(t) \geq 0$.

We now argue by contradiction.
Suppose \(q_d(t)\) does not converge to zero.
Then there exists \(\delta>0\) and a sequence \(t_k\to\infty\) such that \(q_d(t_k) \ge \delta\) for all \(k\).
Because of the Lipschitz property of the fluid processes, there exists \(\eta>0\) and subintervals \([t'_k, t''_k]\) with
\[
  t'_k \le t_k \le t''_k,\qquad
  t''_k - t'_k \ge \eta,
\]

Then for all \(t\in[t'_k,t''_k]\), \eqref{eq:Wd-drift-ineq} gives
\(
  \dot W_d(t) \le -\underline{\theta}\,\delta/2 =: -c<0
\).
Take the integral of the inequality, we have:
\[
  W_d(t''_k) - W_d(t'_k)
  \le -c (t''_k - t'_k)
  \le -c \eta < 0,
\]

which means that in each interval determined by the sequence $t_k$, the decrease of $W_d(t)$ is at least $c \eta$, which contradicts the fact that $W_d(t) \geq 0$.

Therefore, our assumption is false, and we must have
\[
  \lim_{t\to\infty} q_d(t) = 0.
\]
\end{proof}

\medskip

\begin{proof}{Proof of Theorem~\ref{thm:asymptotic_optimality_occ}}

Combining Proposition~\ref{prop:qd-to-zero} with the previously established convergence of the prefill occupancies $x_i(t)\to x_i^\star$ and the fixed decode capacities in Lemma~\ref{lem:decode-cap-sat}, the aggregate decode workload is stable and the decode buffer vanishes.
Although the solo-first router need not reproduce the class-by-mode LP split $(y_{m,i}^\star, y_{s,i}^\star)$, it attains the same aggregate decode completion rate and hence the same fluid reward $R^\star$.
Therefore the per-GPU reward under the gate-and-route policy converges to the LP optimum, which is the conclusion of Theorem~\ref{thm:asymptotic_optimality_occ}.

\end{proof}

\section{Proof of Theorem~\ref{thm:sep-optimality}}

\begin{proof}{Proof of Theorem~\ref{thm:sep-optimality}}
Let $\tilde\pi^{n,\star}$ be the prioritize-and-route policy defined in Section~\ref{sec:separate-charging}, parameterized by an optimal solution of the steady-state fluid LP \eqref{eq:LP-separate}.

Under separate charging, the steady-state objective \eqref{eq:LP-separate} by which the reward only depends on aggregate occupancies $(\sum_i x_i,\sum_i y_{m,i},\sum_i y_{s,i})$.
What is left is to prove that the prioritize-and-route policy reaches the maximal workload of the system.

The prioritize-and-route policy is work-conserving at the prefill side: whenever at least one prefill queue is nonempty, every reserved prefill slot is filled.
If the fluid system is underloaded so that prefill queues drain in the limit, no admissible policy can keep more capacity busy, and the LP optimum is attained on the slack of the prefill flow-balance constraint; the argument below focuses on the binding-capacity case in which the priority rule has nontrivial bite, so we may assume prefill occupancy is full.
The priority index $\phi_i=D_i/P_i$ is proportional to the decode-work generation rate per unit prefill occupancy, since
\[
  \frac{\mu_{p,i}}{\mu_{m,i}}
  \;=\;
  \frac{C/(\tau P_i)}{1/(\tau D_i)}
  \;=\;
  C\,\frac{D_i}{P_i}
  \;=\;
  C\,\phi_i.
\]
Therefore, among all feasible ways to allocate a given amount of prefill occupancy across classes, the static priority rule maximizes the instantaneous inflow of downstream decode workload.
This ensures that the aggregate decode occupancies are maximized whenever sufficient workload is available; if workload is insufficient, no policy can keep more decode capacity busy.

Consequently, the limiting fluid reward under the prioritize-and-route policy achieves the optimal value $\tilde R^\star$ of the steady-state LP \eqref{eq:LP-separate}, which yields the asymptotic optimality of the policy, i.e.
$$
\liminf_{T\to\infty}\;\liminf_{n\to\infty}\; \tilde R_n(T;\tilde \pi^{n,\star}) \;=\; \tilde R^\star.
$$
which is exactly Theorem~\ref{thm:sep-optimality}.
\end{proof}

\section{Proof of Theorem~\ref{thm:slo_optimality}}

In this section, we prove the optimality of the SLI-aware policy.
Since the SLI-aware policy deals with the prefill stage the same as the Gate-and-Route policy, we will only prove the optimality of the SLI-aware policy for the decode stage.
The optimality of the prefill stage follows from the optimality of the Gate-and-Route policy.

\ecproofheading{Decode-occupancy convergence under the SLI-aware policy.}
\label{sec:EC-sli-aware-decode-conv}

We focus here on the convergence of the decode occupancies under the SLI-aware policy.

\begin{proposition}
\label{prop:sli-aware-y-conv}
Fix an optimal solution $(x_i^\star,y_{m,i}^\star,y_{s,i}^\star,q_{p,i}^\star,q_{d,i}^\star)_{i\in\mathcal I}$ of the SLI-aware steady-state program \eqref{eq:slo_objective} with $q_{d,i}^\star=0$ for all $i$.
Let $x^\star:=\sum_{i=1}^I x_i^\star$ and consider the corresponding static planning of the system.
Denote $q_{d,m,i}(t)$ and $q_{d,s,i}(t)$ for the mixed and solo decode buffers under the virtual buffer split of Section~\ref{subsec:sli-aware-policy} respectively.
Then along any fluid trajectory of the SLI-aware policy,
\[
  \lim_{t\to\infty} y_{m,i}(t) = y_{m,i}^\star,
  \qquad
  \lim_{t\to\infty} y_{s,i}(t) = y_{s,i}^\star,
  \qquad
  \lim_{t\to\infty} q_{d,s,i}(t) = 0,
  \qquad
  \lim_{t\to\infty} q_{d,m,i}(t) = 0,
  \qquad
  \qquad \forall i\in\mathcal I.
\]
\end{proposition}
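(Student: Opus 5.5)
The plan is to reduce Proposition~\ref{prop:sli-aware-y-conv} to a per-class, per-group analysis in which each decode group behaves like a many-server queue fed by a known, eventually constant inflow.

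\textbf{Step 1: convergence of the routed inflows.} The prefill gate is identical to the Gate-and-Route gate, so Lemmas~\ref{lem:xi-upper}--\ref{lem:prefill-convergence-fluid} apply verbatim and give $x_i(t)\to x_i^\star$. The class-$i$ prefill completion flow therefore converges to $\mu_{p,i}x_i^\star$. Since $q_{d,i}^\star=0$, the SLI-aware decode flow balance gives
\[
\mu_{p,i}x_i^\star=\mu_{m,i}y_{m,i}^\star+\mu_{s,i}y_{s,i}^\star .
\]
Routing is an independent Bernoulli($p_{s,i}$) thinning applied at each routing epoch. The FSLLN for thinned counting processes, applied inside the fluid limit of Theorem~\ref{thm:fluid-limit}, then shows that the inflow into the solo class-$i$ subsystem is $p_{s,i}\mu_{p,i}x_i(t)\to\mu_{s,i}y_{s,i}^\star$, and the inflow into the mixed subsystem tends to $\mu_{m,i}y_{m,i}^\star$. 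If the statement's ``upon decode completion'' rerouting is read literally (finished jobs leave the system), this clause adds nothing. I will treat rerouting as a relabeling that does not create new work.

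\textbf{Step 2: per-group dynamics.} For the solo group the fluid equations are
\[
\dot q_{d,s,i}+\dot y_{s,i}=p_{s,i}\mu_{p,i}x_i-\mu_{s,i}y_{s,i}-\theta_i q_{d,s,i},
\]
with capacity $\sum_i y_{s,i}\le B(1-x^\star)$. Work conservation within the group holds, in the sense of Lemma~\ref{lem:decode-cap-sat}: if $\sum_i q_{d,s,i}>0$, then $\sum_i y_{s,i}=B(1-x^\star)$. The mixed group satisfies the analogous equations with capacity $(B-1)x^\star$. The mode-switch terms $m_{s\to m}$ and $m_{m\to s}$ vanish under the static partition because GPUs never change group.

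\textbf{Step 3: buffers vanish, via a Lyapunov argument as in Proposition~\ref{prop:qd-to-zero}.} For the solo group take the work function $W_s=\sum_i (q_{d,s,i}+y_{s,i})/\mu_{s,i}$. Its drift is $\sum_i p_{s,i}\mu_{p,i}x_i/\mu_{s,i}-y_s-\sum_i\theta_i q_{d,s,i}/\mu_{s,i}$. When the buffer is positive, $y_s=B(1-x^\star)$, and the limiting inflow term equals $\sum_i y^\star_{s,i}\le B(1-x^\star)$ by LP feasibility. Then $\theta_i>0$ yields $\dot W_s\le -c\,q_{d,s}+o(1)$, so $q_{d,s}\to0$ by the same contradiction argument. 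The mixed group is handled identically.

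\textbf{Step 4: occupancies converge.} Once the buffers vanish, each class obeys $\dot y_{s,i}=\text{inflow}_i(t)-\mu_{s,i}y_{s,i}+o(1)$. This is a stable linear ODE with forcing tending to $\mu_{s,i}y_{s,i}^\star$, so $y_{s,i}\to y_{s,i}^\star$, and likewise $y_{m,i}\to y_{m,i}^\star$.

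\textbf{Main obstacle.} The delicate point is Step 3 when the capacity constraint is tight. There the inflow only approaches capacity from below asymptotically: Lemma~\ref{lem:xi-upper} gives $x_i\le x_i^\star$ eventually, which settles the sign. Buffer positivity is tied to the aggregate capacity, while Step 4 needs per-class information. In particular, a class could in principle sit in the buffer while others occupy its slots. This is ruled out only after $q_{d,s}\to0$ is shown in aggregate, and that aggregate statement needs the uniform lower bound $\min_i\theta_i/\mu_{s,i}>0$. I would first handle the aggregate, then show the per-class buffers are dominated by it.
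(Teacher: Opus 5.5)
Your proposal is essentially the paper's proof. It uses the same per-pool work functions $W_s=\sum_i (q_{d,s,i}+y_{s,i})/\mu_{s,i}$ and $W_m$. It derives the drift bound $\dot W_s\le -\min_i(\theta_i/\mu_{s,i})\,q_{d,s}$ whenever the pool buffer is positive, from within-pool work conservation plus $\sum_i y^\star_{s,i}\le B(1-x^\star)$, with the sign of $x_i-x_i^\star$ settled by Lemma~\ref{lem:xi-upper}. It then invokes the contradiction argument of Proposition~\ref{prop:qd-to-zero} to make the buffers vanish, and finishes with a stable linear ODE with convergent forcing for each $y_{m,i},y_{s,i}$. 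Your handling of the Bernoulli thinning and of the $o(1)$ terms is, if anything, more careful than the paper's.

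One step you inherit from the paper is weaker than it looks. The drift bound holds only when $q_{d,s}>0$. When the buffer is empty and the pool is below capacity, $\dot W_s=\sum_i p_{s,i}\mu_{p,i}x_i/\mu_{s,i}-y_s$ can be positive; at critical load it equals $B(1-x^\star)-y_s$ in the limit. So $W_s$ is not monotone, and the decrements of size $c\eta$ on the intervals around the $t_k$ need not accumulate into a contradiction with $W_s\ge 0$.

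With a common $\mu_s$ across classes this is harmless. At the start of an excursion $W_s\le B(1-x^\star)/\mu_s$, while during it $W_s=(q_{d,s}+B(1-x^\star))/\mu_s$ is nonincreasing, so no excursion can begin. With heterogeneous $\mu_{s,i}=\gamma/D_i$, however, you need an additional argument ruling out infinitely many buffer excursions of height $\delta$. Such excursions can be triggered by a class mix skewed toward slow classes.

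Separately, in Step~4 $\dot q_{d,s,i}$ need not be $o(1)$ even though $q_{d,s,i}\to 0$. Run the linear-ODE argument on $y_{s,i}+q_{d,s,i}$ instead. Its dynamics are $-\mu_{s,i}(y_{s,i}+q_{d,s,i})$ plus the forcing $p_{s,i}\mu_{p,i}x_i+(\mu_{s,i}-\theta_i)q_{d,s,i}$, and this forcing does converge to $\mu_{s,i}y^\star_{s,i}$.
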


\begin{proof}{Proof of Proposition~\ref{prop:sli-aware-y-conv}}
Denote the total mixed and solo decode capacities as
\[
  C_m := x^\star(B-1),\qquad C_s := (1-x^\star)B.
\]

First we show that the decode buffer converges to zero as $t\to\infty$.
Define the weighted decode work:
\[
  W_m(t) := \sum_{i=1}^I \frac{q_{d,m,i}(t)+y_{m,i}(t)}{\mu_{m,i}},
  \qquad
  W_s(t) := \sum_{i=1}^I \frac{q_{d,s,i}(t)+y_{s,i}(t)}{\mu_{s,i}}.
\]
Take the derivative of the formulas above, we have:
\begin{align}
  \dot W_m(t)
  &= \sum_{i=1}^I \frac{(1-p_{s,i})\,\mu_{p,i}}{\mu_{m,i}}\,x_i(t)
     \;-\; y_m(t)
     \;-\;\sum_{i=1}^I \frac{\theta_i}{\mu_{m,i}}\,q_{d,m,i}(t),
  \label{eq:EC-sli-Wm-drift}\\
  \dot W_s(t)
  &= \sum_{i=1}^I \frac{p_{s,i}\,\mu_{p,i}}{\mu_{s,i}}\,x_i(t)
     \;-\; y_s(t)
     \;-\;\sum_{i=1}^I \frac{\theta_i}{\mu_{s,i}}\,q_{d,s,i}(t).
  \label{eq:EC-sli-Ws-drift}
\end{align}
Here $y_m(t):=\sum_i y_{m,i}(t)$ and $y_s(t):=\sum_i y_{s,i}(t)$.

Whenever $q_{d,m}(t)>0$, the mixed group is work-conserving, so all mixed decode slots are busy and $y_m(t)=C_m$; similarly, $q_{d,s}(t)>0$ implies $y_s(t)=C_s$.
Since $x_i(t)\to x_i^\star$ and the LP feasibility constraints include $\sum_i y_{m,i}^\star\le C_m$ and $\sum_i y_{s,i}^\star\le C_s$, the constants
\[
  A_m^\star := \sum_{i=1}^I \frac{(1-p_{s,i})\,\mu_{p,i}}{\mu_{m,i}}\,x_i^\star,
  \qquad
  A_s^\star := \sum_{i=1}^I \frac{p_{s,i}\,\mu_{p,i}}{\mu_{s,i}}\,x_i^\star
\]
satisfy $A_m^\star\le C_m$ and $A_s^\star\le C_s$, since they are both the admission limit of the policy which should be less than the total decode capacity.
Therefore,
\[
  q_{d,m}(t)>0 \ \Longrightarrow\ \dot W_m(t)\le  - \underline\theta_m\,q_{d,m}(t),
  \qquad
  q_{d,s}(t)>0 \ \Longrightarrow\ \dot W_s(t)\le  - \underline\theta_s\,q_{d,s}(t),
\]
where $\underline\theta_m:=\min_i \theta_i/\mu_{m,i}>0$ and $\underline\theta_s:=\min_i \theta_i/\mu_{s,i}>0$.
The same contradiction argument as in Proposition~\ref{prop:qd-to-zero} implies
\[
  \lim_{t\to\infty} q_{d,m}(t)=0,\qquad \lim_{t\to\infty} q_{d,s}(t)=0.
\]

Now we are ready to show the convergence of the decode occupancies.
By our policy, we can compute the instant admission rates of the mixed and solo decode pools:
\[
  \dot u_{d,m,i}(t) = (1-p_{s,i})\,(\mu_{p,i}\,x_i(t)-\theta_iq_{d,s,i}(t)),\qquad
  \dot u_{d,s,i}(t) = p_{s,i}\,(\mu_{p,i}\,x_i(t)-\theta_iq_{d,m,i}(t))
\]

Take the derivative of the decode occupancies of the balance equations, we have:
\[
  \dot y_{m,i}(t) = (1-p_{s,i})\,\mu_{p,i}\,x_i(t) - \mu_{m,i}\,y_{m,i}(t),
  \qquad
  \dot y_{s,i}(t) = p_{s,i}\,\mu_{p,i}\,x_i(t) - \mu_{s,i}\,y_{s,i}(t),
\]

Since $x_i(t)\to x_i^\star$ and $q_{d,m,i}(t)\to 0$ and $q_{d,s,i}(t)\to 0$ and the LP feasibility condition:
$$\mu_{p,i}x_i^\star=\mu_{m,i}y_{m,i}^\star+\mu_{s,i}y_{s,i}^\star$$
then for any $\epsilon>0$, there exists $T_\epsilon$ such that for all $t\ge T_\epsilon$,
\begin{align*}
   & |\dot \mu_{d,s,i}(t)-\mu_{s,i}y_{s,i}^\star| \le \epsilon \\
   & |\dot \mu_{d,m,i}(t)-\mu_{m,i}y_{m,i}^\star| \le \epsilon 
\end{align*}

Then we have:
\begin{align*}
   & \mu_{m,i}y_{m,i}^* - \epsilon \le \dot y_{s,i}(t) + \mu_{m,i}y_{m,i}(t) \le \mu_{m,i}y_{m,i}^* + \epsilon \\
   & \mu_{s,i}y_{s,i}^* - \epsilon \le \dot y_{s,i}(t) + \mu_{s,i}y_{s,i}(t) \le \mu_{s,i}y_{s,i}^* + \epsilon 
\end{align*}

Solving the above inequalities and let $\epsilon\to 0$ and $t\to\infty$, we have:
\[
  \lim_{t\to\infty} y_{m,i}(t) = y_{m,i}^\star,
  \qquad
  \lim_{t\to\infty} y_{s,i}(t) = y_{s,i}^\star,
  \qquad \forall i\in\mathcal I.
\]
which means that the decode occupancies converge to the LP targets.
\end{proof}

\section{General SLI-aware Extension} \label{sec:sli-aware-policy-general}

In this section, we analyze the general SLI-aware setting under mild regularity on the optimization problem (bounded penalties, etc.), in which case the SLI-aware LP may admit an optimal solution with $q_{d,i}^\star>0$ for some $i$.

\ecproofheading{Policy definition.}
Fix an optimal solution of the SLI-aware LP $(x_i^\star,y_{m,i}^\star,y_{s,i}^\star,q_{p,i}^\star,q_{d,i}^\star)_{i\in\mathcal I}$, allowing $q_{d,i}^\star>0$.
As in Section~\ref{subsec:sli-aware-policy}, we partition GPUs into $G_{\mathrm{mix}}^{(n)}$ and $G_{\mathrm{solo}}^{(n)}$ according to the static planning.
We split the decode buffer into mixed buffer and solo buffer and track the queue lengths $Q_{d,m,i}^n(t)$ and $Q_{d,s,i}^n(t)$ accordingly.
Upon each class-$i$ prefill completion, we route the resulting decode job to the solo buffer with probability $p_{s,i}$ and to the mixed buffer with probability $1-p_{s,i}$.

We set the pool-selection probabilities
\[
p_{s,i}
\;:=\;
\begin{cases}
\dfrac{\mu_{s,i}\, y_{s,i}^\star}{\mu_{m,i}\, y_{m,i}^\star + \mu_{s,i}\, y_{s,i}^\star},
& \text{if }\mu_{m,i} y_{m,i}^\star+\mu_{s,i}y_{s,i}^\star>0,\\[8pt]
1, & \text{otherwise.}
\end{cases}
\]
When $q_{d,i}^\star>0$, the LP flow-balance implies
\[
\mu_{m,i}y_{m,i}^\star+\mu_{s,i}y_{s,i}^\star
\;=\;
\mu_{p,i}x_i^\star-\theta_i q_{d,i}^\star,
\]
so the definition of $p_{s,i}$ above can be rewritten as
\[
p_{s,i}
\;=\;
\frac{\mu_{s,i}y_{s,i}^\star}{\mu_{p,i}x_i^\star-\theta_i q_{d,i}^\star},
\]
i.e., $p_{s,i}$ is the solo share of the \emph{net} decode completion rate. Accordingly, we have the consistent pool-level queue split
\[
q_{d,s,i}^\star := p_{s,i}\,q_{d,i}^\star,\qquad
q_{d,m,i}^\star := (1-p_{s,i})\,q_{d,i}^\star.
\]

We additionally define pool class selection weights (with the convention $0/0:=0$)
\[
\varpi_{m,i}
\;:=\;
\frac{\mu_{m,i}y_{m,i}^\star}{\sum_{j\in\mathcal I}\mu_{m,j}y_{m,j}^\star},
\qquad
\varpi_{s,i}
\;:=\;
\frac{\mu_{s,i}y_{s,i}^\star}{\sum_{j\in\mathcal I}\mu_{s,j}y_{s,j}^\star}.
\]

Different from the baseline SLI-aware policy, in the general SLI-aware policy, we introduce a pool class selection rule within each pool.

\textbf{(i) Solo-pool class selection.}
Whenever a solo decode slot becomes available and $\sum_{i}Q_{d,s,i}^n(t^-)>0$, the decode scheduler selects a class according to the weights $\{\varpi_{s,i}\}_{i\in\mathcal I}$ restricted to currently nonempty solo decode buffer:
\[
\mathbb{P}\!\left(\text{select class }i \,\middle|\, \{Q_{d,s,j}^n(t^-)\}_j\right)
\;=\;
\frac{\varpi_{s,i}\,\mathbf{1}\{Q_{d,s,i}^n(t^-)>0\}}
{\sum_{j\in\mathcal I}\varpi_{s,j}\,\mathbf{1}\{Q_{d,s,j}^n(t^-)>0\}}.
\]
It then routes the head-of-line job of that class to the freed solo slot.
If the solo decode buffer is empty, the slot idles.

\textbf{(ii) Mixed-pool class selection.}
Whenever a \emph{mixed} decode slot becomes available and $\sum_{i}Q_{d,m,i}^n(t^-)>0$, the scheduler selects a class according to $\{\varpi_{m,i}\}$ restricted to nonempty mixed decode buffer (defined analogously) and routes the corresponding head-of-line job to the freed mixed slot.

This policy is work-conserving within each pool and preserves FCFS within each class buffer.
It differs from the baseline SLI-aware router only in the cross-class selection rule \emph{within} the mixed/solo decode pools.
Then we focus on the fluid dynamics induced by the general SLI-aware policy.
We focus on the case where the SLI-aware LP solution satisfies $q_{d,i}^\star>0$ for all $i$ and both decode capacity constraints bind.
The proof for boundary cases (some $q_{d,i}^\star=0$, etc.) is analogous.

First, we introduce some conventions for the general SLI-aware policy.
Let $q_{d,m,i}(t)$ and $q_{d,s,i}(t)$ denote the mixed/solo decode queue contents, and $u_{d,m,i}(t)$ and $u_{d,s,i}(t)$ denote the cumulative admissions into mixed/solo decode service.
Abandonments occur from each pool buffer at rate $\theta_i$, i.e.,
\[
  \dot b_{d,m,i}(t)=\theta_i q_{d,m,i}(t),\qquad \dot b_{d,s,i}(t)=\theta_i q_{d,s,i}(t),
\]
and the total queue length is $q_{d,i}(t)=q_{d,m,i}(t)+q_{d,s,i}(t)$.

\begin{lemma}
\label{lem:sli_qpos_fluid_adm}
Fix a regular time $t$ at which all fluid coordinates are differentiable.
Define the aggregate decode admission and completion rates
\[
\dot u_{d,m}(t):=\sum_{i\in\mathcal I}\dot u_{d,m,i}(t),\quad
\dot u_{d,s}(t):=\sum_{i\in\mathcal I}\dot u_{d,s,i}(t),\quad
\dot d_{d,m}(t):=\sum_{i\in\mathcal I}\mu_{m,i}y_{m,i}(t),\quad
\dot d_{d,s}(t):=\sum_{i\in\mathcal I}\mu_{s,i}y_{s,i}(t).
\]
Define the pool-level decode backlogs
\[
q_{d,m}(t):=\sum_{i\in\mathcal I} q_{d,m,i}(t),\qquad
q_{d,s}(t):=\sum_{i\in\mathcal I} q_{d,s,i}(t).
\]
If the corresponding pool buffer is positive at time $t$, then that pool is work-conserving:
\[
q_{d,m}(t)>0 \ \Longrightarrow\ \dot u_{d,m}(t)=\dot d_{d,m}(t),\qquad
q_{d,s}(t)>0 \ \Longrightarrow\ \dot u_{d,s}(t)=\dot d_{d,s}(t).
\]
Moreover, under the within-pool class selection rule,
\begin{align*}
q_{d,m}(t)>0
&\Longrightarrow\ 
\dot u_{d,m,i}(t)
=
\dot u_{d,m}(t)\,
\frac{\varpi_{m,i}\,\mathbf{1}\{q_{d,m,i}(t)>0\}}
{\sum_{j\in\mathcal I}\varpi_{m,j}\,\mathbf{1}\{q_{d,m,j}(t)>0\}},\\
q_{d,s}(t)>0
&\Longrightarrow\ 
\dot u_{d,s,i}(t)
=
\dot u_{d,s}(t)\,
\frac{\varpi_{s,i}\,\mathbf{1}\{q_{d,s,i}(t)>0\}}
{\sum_{j\in\mathcal I}\varpi_{s,j}\,\mathbf{1}\{q_{d,s,j}(t)>0\}}.
\end{align*}
\end{lemma}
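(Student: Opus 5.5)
The plan is to derive both claims at the prelimit level and then pass to the fluid limit along a converging subsequence, in the same way as in Theorem~\ref{thm:fluid-limit}. By the Skorokhod representation, the scaled processes $\bar U^n_{d,m,i}$, $\bar S^n_{d,m,i}$, $\bar Q^n_{d,m,i}$ and their solo counterparts may be taken to converge u.o.c.\ to Lipschitz limits. The time $t$ is regular, so every coordinate is differentiable there. Fix a pool, say the mixed pool; the solo pool is handled identically. Suppose $q_{d,m}(t)>0$. By continuity there is a neighbourhood $[t-h,t+h]$ on which $q_{d,m}\ge q_{d,m}(t)/2>0$. Uniform convergence then gives $Q^n_{d,m}(s)\ge n\,q_{d,m}(t)/4>0$ on that interval for all large $n$.

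For the first identity (work conservation), the argument goes as follows. While the mixed buffer is nonempty, the rule in (ii) fills every freed mixed decode slot immediately. In addition, under the static partition each mixed GPU keeps its decode slots separate from the prefill slot. A mixed decode slot frees only at a mixed decode completion, so on $[t-h,t+h]$ every such completion is matched by one admission. The only mismatches are mode-switch boundary events, which are excluded under the static partition, and an $O(1)$ count of initial slack that fills instantly. This gives $U^n_{d,m}(t+h)-U^n_{d,m}(t-h)=S^n_{d,m}(t+h)-S^n_{d,m}(t-h)+O(1)$. Dividing by $n$, passing to the limit, dividing by $2h$ and letting $h\downarrow0$ yields $\dot u_{d,m}(t)=\dot d_{d,m}(t)$, because $\bar S^n_{d,m,i}\to\int\mu_{m,i}y_{m,i}$.

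For the class-split formula, the plan distinguishes nonempty from empty classes. Shrink $h$ so that the set $J:=\{j:q_{d,m,j}(t)>0\}$ consists of classes with $q_{d,m,j}\ge c>0$ on the interval. Each class $k\notin J$ has $q_{d,m,k}(t)=0$. For $i\in J$, the buffer indicators are all positive on the interval for large $n$, so each admission selects class $i$ independently with probability $p_i:=\varpi_{m,i}\mathbf 1\{i\in J\}/\sum_j\varpi_{m,j}\mathbf 1\{j\in J\}$, conditional only on those classes $k\notin J$ that may be intermittently empty. The plan is to represent the selections by i.i.d.\ uniform marks and apply the functional law of large numbers (a thinning of the admission counting process). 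This gives $\bar U^n_{d,m,i}(t\pm h)$ increments $=p_i\,(\bar U^n_{d,m}\text{ increments})+o(1)$ for $i\in J$, as long as the admissions taken by classes $k\notin J$ are negligible.

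The main obstacle is controlling these boundary classes $k\notin J$, which have zero fluid buffer. The first thing to try is the following. At a regular time, $q_{d,m,k}(t)=0$ and $q_{d,m,k}\ge0$ force $\dot q_{d,m,k}(t)=0$. Hence $\dot u_{d,m,k}(t)$ equals the class-$k$ inflow $(1-p_{s,k})\mu_{p,k}x_k(t)$, not $0$, and the displayed formula, read literally with indicator $0$ for such $k$, could fail. The plan therefore adopts the convention that the indicators in the statement refer to classes with positive buffer in a neighbourhood. It then shows that the displayed identity holds for all $i\in J$ after the total is reduced by the boundary-class inflow. Equivalently, the proportionality among the classes in $J$ holds exactly, via the ratio $\dot u_{d,m,i}/\dot u_{d,m,j}=\varpi_{m,i}/\varpi_{m,j}$ for $i,j\in J$, which follows directly from the thinning argument. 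In the case actually used downstream, $q^\star_{d,i}>0$ for all $i$, and after convergence $J$ is the full set $\mathcal I$, so the formula holds as stated. I would state and prove it in that interior regime first, and treat the boundary as the analogous degenerate case, as the paper indicates.
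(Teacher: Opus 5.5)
Your proposal is correct, including its diagnosis that the statement cannot be proved as written, and it takes a more careful route than the paper. The paper's proof is two sentences. It asserts that work conservation gives $\dot u_{d,\bullet}(t)=\dot d_{d,\bullet}(t)$. It then asserts that the class split ``inherits the fixed weights $\varpi_{\bullet,i}$ restricted to nonempty class buffers'' at the fluid scale.

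You derive both claims from the prelimit. You use the Skorokhod representation, match each freed mixed slot with an admission while $Q^n_{d,m}$ is of order $n$, and apply an i.i.d.-mark thinning to the $J$-admissions. That derivation exposes a point the paper's proof misses. The second display is false as written for a class $k$ with $q_{d,m,k}(t)=0$ and positive routed inflow. At a regular time the minimum forces $\dot q_{d,m,k}(t)=0$, so $\dot u_{d,m,k}(t)=(1-p_{s,k})\mu_{p,k}x_k(t)>0$. The lemma instead gives $0$ for class $k$, and correspondingly overstates the rates of the classes in $J$.

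This failure actually occurs. In the boundary case $q^\star_{d,k}=0<y^\star_{m,k}$, which the paper calls analogous, the LP equilibrium itself has class $k$ with an empty mixed buffer admitted at rate $\mu_{m,k}y^\star_{m,k}$.

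Your corrected statement is the right one. Among $i\in J$, the rates split in proportions $\varpi_{m,i}/\sum_{j\in J}\varpi_{m,j}$ of $\dot u_{d,m}(t)-\sum_{k\notin J}\dot u_{d,m,k}(t)$, and each $\dot u_{d,m,k}(t)$ with $k\notin J$ equals its inflow. This coincides with the displayed formula exactly when every zero-buffer class has zero inflow at $t$, in particular when $J=\mathcal I$. Reading the indicators as ``positive on a neighbourhood'' does not by itself repair the formula for $k\notin J$; what repairs it is the reduced total.

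One caveat concerns your closing remark. In Proposition~\ref{prop:sli_qpos_global}, the formula with $J=\mathcal I$ is needed on all of $[T_0,\infty)$ in order to obtain the linear system $\dot y_m=A_m y_m$, and hence convergence. So justifying $J=\mathcal I$ ``after convergence'' is circular. The paper's own argument for $q_{d,m,i}(t)>0$ on $[T_0,\infty)$ infers $\dot u_{d,m,i}(t)=0$ from $q_{d,m,i}(t)=0$, which is exactly the case your analysis shows to be wrong.

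With the corrected lemma, eventual positivity of every class buffer therefore needs an independent argument. One route is to compare the inflow $\alpha_{m,k}$ with class $k$'s maximal prelimit share $\varpi_{m,k}\dot u_{d,m}(t)/\sum_{j\in J\cup\{k\}}\varpi_{m,j}$. That gap belongs to the downstream proposition, not to your proof of the lemma.
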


\begin{proof}{Proof of Lemma~\ref{lem:sli_qpos_fluid_adm}}
If $q_{d,\bullet}(t)>0$ for $\bullet\in\{m,s\}$, then the corresponding pool buffer is nonempty and the policy is work-conserving on that pool: If a decode slot is free, then a job will be routed to the slot immediately.
Hence $\dot u_{d,\bullet}(t)=\dot d_{d,\bullet}(t)$.
The instant admission formulas follow directly from the definition of pool selection rule: at the fluid scale, each pool's admissions inherit the fixed weights $\varpi_{\bullet,i}$ restricted to nonempty class buffers.
\end{proof}

\begin{lemma}\label{lem:rank_one_convergence}
  Fix $\mu\in\mathbb{R}^I_{++}$ and $v\in\mathbb{R}^I_{++}$ with $\sum_{i=1}^I v_i=1$. Let
  $D:=\mathrm{diag}(\mu_1,\ldots,\mu_I)$ and $A:=v\mu^\top-D$. Then:
  \begin{enumerate}
      \item[(i)] $0$ is a simple eigenvalue of $A$ and all other eigenvalues have strictly negative real
      parts.
      \item[(ii)] For the ODE $\dot y(t)=Ay(t)$, $\mathbf 1^\top y(t)$ is conserved and
      \[
        y(t)\ \longrightarrow\ y^\infty
        :=\bigl(\mathbf 1^\top y(0)\bigr)\,
        \frac{D^{-1}v}{\mathbf 1^\top D^{-1}v}
        \qquad \text{as } t\to\infty.
      \]
  \end{enumerate}
\end{lemma}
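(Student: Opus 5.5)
Part (i) is the heart of the matter, and part (ii) follows from it together with a conservation law. I work with the matrix $A=v\mu^\top-D$ directly.

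\textbf{Conservation.} Compute $\mathbf 1^\top A=(\mathbf 1^\top v)\mu^\top-\mu^\top=0$, since $\sum_i v_i=1$. Hence $\mathbf 1$ is a left null vector, so $0$ is an eigenvalue of $A$, and $\frac{d}{dt}\mathbf 1^\top y(t)=0$ along $\dot y=Ay$. For the right null vector, $Ay=0$ means $Dy=v(\mu^\top y)$, that is, $y=(\mu^\top y)D^{-1}v$. So the kernel is spanned by $D^{-1}v$, which has strictly positive entries.

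\textbf{Spectrum.} The main step is to show that every eigenvalue $\lambda\neq0$ has $\operatorname{Re}\lambda<0$, and that $0$ is algebraically simple. I would use the secular-equation trick for rank-one perturbations of a diagonal matrix.

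Let $\lambda$ be an eigenvalue that is not equal to any $-\mu_i$. Then $\lambda I+D$ is invertible, and $\det(\lambda I-A)=\det(\lambda I+D)\bigl(1-\mu^\top(\lambda I+D)^{-1}v\bigr)$. So $\lambda$ must solve
\[
\sum_{i=1}^I \frac{\mu_i v_i}{\lambda+\mu_i}=1 .
\]
Now suppose $\operatorname{Re}\lambda\ge 0$ and $\lambda\ne0$. Then $|\lambda+\mu_i|>\mu_i$ for each $i$: write $\lambda=a+bi$; if $a>0$ this is immediate, and if $a=0$ then $b\neq0$. It follows that $\bigl|\sum_i \mu_i v_i/(\lambda+\mu_i)\bigr|<\sum_i v_i=1$, a contradiction.

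The remaining case is an eigenvalue $\lambda=-\mu_i$, which has negative real part anyway. This is the one delicate point, since it can occur when several $\mu_i$ coincide, but it causes no problem.

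For simplicity of $0$, define $f(\lambda)=1-\sum_i \mu_iv_i/(\lambda+\mu_i)$. Then $f(0)=0$ and $f'(0)=\sum_i v_i/\mu_i>0$. Since $\det(\lambda I+D)\neq0$ near $\lambda=0$, the zero of the characteristic polynomial at $0$ is simple.

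An alternative route is to note that $A$ is a Metzler matrix with zero column sums, i.e.\ the generator (transposed) of an irreducible Markov chain. Each of its off-diagonal entries $v_i\mu_j>0$, so Perron--Frobenius and Gershgorin arguments give the same conclusion.

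\textbf{Convergence.} By (i), decompose $\mathbb C^I=\ker A\oplus \operatorname{range}A$. This works because $0$ is simple, so the spectral projection onto $\ker A$ is $P=\frac{D^{-1}v\,\mathbf 1^\top}{\mathbf 1^\top D^{-1}v}$, using that $\mathbf 1^\top$ is the left eigenvector. On $\operatorname{range}A$ all eigenvalues have negative real part, so $e^{At}(I-P)\to0$ exponentially. Hence
\[
y(t)=e^{At}y(0)\to Py(0)=(\mathbf 1^\top y(0))\,D^{-1}v/(\mathbf 1^\top D^{-1}v),
\]
which is the claimed limit $y^\infty$.

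I expect the only real obstacle to be the eigenvalue localization. The secular-equation inequality handles it cleanly, and the degenerate eigenvalues $-\mu_i$ are harmless.
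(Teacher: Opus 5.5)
Your proposal is correct and follows the paper's proof essentially step for step. Both arguments use the matrix determinant lemma to reduce the spectrum to the secular equation $\sum_i v_i\mu_i/(\lambda+\mu_i)=1$, rule out $\mathrm{Re}\,\lambda\ge 0$, $\lambda\neq 0$ by the modulus bound, and obtain simplicity of $0$ from the nonzero derivative at $0$; convergence then comes from $\mathbf 1^\top A=0$, $\ker A=\mathrm{span}\{D^{-1}v\}$ and the spectral projector $P$. Your explicit treatment of possible eigenvalues $\lambda=-\mu_i$, where $\lambda I+D$ is singular, is a small improvement, since the paper's ``if and only if'' characterization silently assumes $\lambda I+D$ is invertible.
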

  
\begin{proof}{Proof of Lemma~\ref{lem:rank_one_convergence}}
  (i): By the matrix determinant lemma,
  \[
  \det(\lambda I-A)
  =\det(\lambda I+D-v\mu^\top)
  =\det(\lambda I+D)\Bigl(1-\mu^\top(\lambda I+D)^{-1}v\Bigr).
  \]
  Hence $\lambda$ is an eigenvalue of $A$ if and only if
  \[
  1=\mu^\top(\lambda I+D)^{-1}v=\sum_{i=1}^I v_i\,\frac{\mu_i}{\lambda+\mu_i}.
  \]
  At $\lambda=0$, the right-hand side equals $\sum_i v_i=1$, so $\lambda=0$ is an eigenvalue.
  Its simplicity follows because the derivative of the right-hand side at $\lambda=0$ equals $-\sum_i v_i/\mu_i\neq 0$.
  Next, let $\lambda$ satisfy $\mathrm{Re}(\lambda)\ge 0$ and $\lambda\neq 0$.
  Then for each $i$,
  \[
  \left|\frac{\mu_i}{\lambda+\mu_i}\right|
  <\frac{\mu_i}{\mathrm{Re}(\lambda)+\mu_i}\le 1,
  \]
  with strict inequality because $\lambda\neq 0$.
  Therefore
  \[
  \left|\sum_{i=1}^I v_i\,\frac{\mu_i}{\lambda+\mu_i}\right|
  \le \sum_{i=1}^I v_i \left|\frac{\mu_i}{\lambda+\mu_i}\right|
  <\sum_{i=1}^I v_i=1,
  \]
  so the eigenvalue equation cannot hold.
  This shows that $0$ is the only eigenvalue with nonnegative real part.
  
  (ii): 
  Since $A=v\mu^\top-D$ and $\sum_i v_i=1$, we have
  \[
    \mathbf 1^\top A
    = \mathbf 1^\top (v\mu^\top) - \mathbf 1^\top D
    = (\mathbf 1^\top v)\mu^\top - \mu^\top
    = \mathbf 0^\top.
  \]
  Therefore, along any absolutely continuous solution of $\dot y(t)=Ay(t)$,
  \[
    \frac{d}{dt}\,\mathbf 1^\top y(t)
    = \mathbf 1^\top \dot y(t)
    = \mathbf 1^\top A y(t)
    = 0
  \]
  so the total mass $M:=\mathbf 1^\top y(t)$ is conserved: $M=\mathbf 1^\top y(0)$ for all $t$.
  The equilibrium set is the kernel of $A$.
  If $Ay=0$, then
  \[
    0 = Ay = v(\mu^\top y) - Dy
    \quad\Longleftrightarrow\quad
    Dy = (\mu^\top y)\,v.
  \]
  Let $s:=\mu^\top y$; since $D$ is invertible, this implies
  \[
    y = s\,D^{-1}v.
  \]
  Hence $\mathrm{Ker}(A)=\mathrm{span}\{r\}$ where $r:=D^{-1}v\in\mathbb{R}^I_{++}$.
  Imposing the conserved mass $M=\mathbf 1^\top y$ determines $s$ uniquely:
  \[
    M = \mathbf 1^\top y = s\,\mathbf 1^\top D^{-1}v
    \quad\Longrightarrow\quad
    s = \frac{M}{\mathbf 1^\top D^{-1}v}.
  \]
  Therefore the hyperplane $\{\mathbf 1^\top y=M\}$ contains a \emph{unique} equilibrium, namely
  \[
    y^\infty
    = M\,\frac{D^{-1}v}{\mathbf 1^\top D^{-1}v}
    = \bigl(\mathbf 1^\top y(0)\bigr)\,\frac{D^{-1}v}{\mathbf 1^\top D^{-1}v}.
  \]
  
  By part (i), all eigenvalues of $A$ other than $0$ have strictly negative real parts, and $0$ is a simple eigenvalue.
  Let $r:=D^{-1}v$ be the (right) eigenvector associated with eigenvalue $0$, and note that $\mathbf 1^\top$ is a (left) eigenvector since $\mathbf 1^\top A=\mathbf 0^\top$.
  The corresponding spectral projector is
  \[
    P := \frac{r\,\mathbf 1^\top}{\mathbf 1^\top r}
    = \frac{D^{-1}v\,\mathbf 1^\top}{\mathbf 1^\top D^{-1}v}.
  \]
  Standard linear-systems theory then implies $e^{At}\to P$ as $t\to\infty$.
  Consequently,
  \[
    y(t)=e^{At}y(0)\ \longrightarrow\ Py(0)
    = \bigl(\mathbf 1^\top y(0)\bigr)\,\frac{D^{-1}v}{\mathbf 1^\top D^{-1}v}
    = y^\infty,
  \]
\end{proof}

\begin{proposition}
\label{prop:sli_qpos_global}
Assume $\theta_i>0$ for all $i\in\mathcal I$.
Under the refined SLI-aware policy above, every fluid solution satisfies
\[
\lim_{t\to\infty} x_i(t)=x_i^\star,\qquad
\lim_{t\to\infty} y_{m,i}(t)=y_{m,i}^\star,\qquad
\lim_{t\to\infty} y_{s,i}(t)=y_{s,i}^\star,\qquad
\lim_{t\to\infty} q_{d,i}(t)=q_{d,i}^\star,
\]
for all $i\in\mathcal I$.
\end{proposition}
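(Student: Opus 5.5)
The plan is to follow the same three-stage structure as the proof of Theorem~\ref{thm:asymptotic_optimality_occ}: prefill first, then the aggregate pool workloads, then the class-level decode occupancies.

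\textbf{Prefill stage.} The prefill gate coincides with the Gate-and-Route gate. Lemmas~\ref{lem:xi-upper}--\ref{lem:prefill-convergence-fluid} therefore give $x_i(t)\to x_i^\star$ and $q_{p,i}(t)\to q_{p,i}^\star$ without change. The decode-side routing is irrelevant to these arguments.

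\textbf{Pool-level backlogs.} I treat the mixed and solo pools separately. In each pool I use the same weighted work functions as in Proposition~\ref{prop:sli-aware-y-conv}, now with the queues $q_{d,m,i}$ and $q_{d,s,i}$. The inflow into the mixed pool is $\sum_i (1-p_{s,i})\mu_{p,i}x_i(t)$ and the inflow into the solo pool is $\sum_i p_{s,i}\mu_{p,i}x_i(t)$.

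Because both capacity constraints bind, Lemma~\ref{lem:sli_qpos_fluid_adm} shows that whenever a pool backlog is positive the pool runs at full capacity $C_m$ or $C_s$. At the equilibrium, the pool-level flow balance reads: inflow $=$ completions $+\,\theta$-abandonment of $q^\star_{d,\bullet,i}=p\,q_{d,i}^\star$. I would use the identity $\mu_{p,i}x_i^\star-\theta_i q_{d,i}^\star=\mu_{m,i}y_{m,i}^\star+\mu_{s,i}y_{s,i}^\star$ together with the definition of $p_{s,i}$ to check that the split queues $q^\star_{d,s,i},q^\star_{d,m,i}$ balance each pool separately.

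For each pool I would then take the Lyapunov function $V_\bullet(t)=\sum_i |q_{d,\bullet,i}(t)-q^\star_{d,\bullet,i}|$, or more simply the pool's total buffer content $q_{d,\bullet}(t)$. Its drift is (inflow) $-$ (capacity) $-\sum_i\theta_i q_{d,\bullet,i}$. Asymptotically this is a strictly decreasing affine function of the total backlog, with zero at $\sum_i q^\star_{d,\bullet,i}$. Hence the scalar total $q_{d,\bullet}(t)$ converges to its target by a one-dimensional comparison ODE. This uses $\theta_i>0$, and requires handling the fact that abandonment is class-weighted, so it is not exactly a function of the total. For that I would bound $\sum_i\theta_iq_i$ between $\underline\theta$ and $\bar\theta$ times the total, first to get boundedness and positivity of $q_{d,\bullet}$ eventually.

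\textbf{Class-level convergence (main obstacle).} Once each pool backlog is eventually positive, every class buffer is eventually positive as well. I would argue this by contradiction: if some class buffer hits zero, its admissions vanish while its inflow stays positive, so it cannot remain at zero. With all indicators equal to one, admissions of class $i$ equal $\varpi_{\bullet,i}\sum_j\mu_{\bullet,j}y_{\bullet,j}$. The occupancy vector in each pool then obeys, asymptotically, $\dot y=(v\mu^\top-D)y$ with $v=\varpi_\bullet$ and $D=\mathrm{diag}(\mu_{\bullet,i})$, plus a vanishing perturbation. The total $\mathbf 1^\top y_\bullet$ stays at the full capacity $C_\bullet$.

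Lemma~\ref{lem:rank_one_convergence} gives convergence to $C_\bullet D^{-1}\varpi_\bullet/(\mathbf 1^\top D^{-1}\varpi_\bullet)$. Substituting the definition of $\varpi_{\bullet,i}$ shows that this equals $y^\star_{\bullet,i}$ when the capacity binds. Since the exponential stability in Lemma~\ref{lem:rank_one_convergence}(i) holds on the hyperplane of fixed total mass, the asymptotically autonomous perturbation is handled by a standard variation-of-constants argument.

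Finally, each class queue satisfies $\dot q_{d,\bullet,i}=\text{inflow}_i-\text{admission}_i-\theta_iq_{d,\bullet,i}$. The inflow and admission terms converge, so this linear ODE gives $q_{d,\bullet,i}\to q^\star_{d,\bullet,i}$, and summing over the two pools yields $q_{d,i}\to q_{d,i}^\star$.

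The delicate step is the third one: showing that each class buffer stays positive eventually, so that the indicator functions freeze and the rank-one linear system applies.
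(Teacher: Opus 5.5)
Your skeleton matches the paper's proof. The steps are: prefill lock-in from Lemmas~\ref{lem:xi-upper}--\ref{lem:prefill-convergence-fluid}; eventual saturation of each pool; positivity of every class buffer, so the selection indicators freeze; Lemma~\ref{lem:rank_one_convergence}, where your identification $C_\bullet D^{-1}\varpi_\bullet/\mathbf 1^\top D^{-1}\varpi_\bullet=y^\star_\bullet$ is correct; and the exponential-kernel ODE for the class queues.

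The gap is in your pool-level step. A saturated pool admits at its \emph{completion rate}, not at a constant ``capacity''. The drift of the total backlog is therefore
\[
\dot q_{d,\bullet}(t)=\sum_i\alpha_{\bullet,i}-\sum_i\mu_{\bullet,i}\,y_{\bullet,i}(t)-\sum_i\theta_i\,q_{d,\bullet,i}(t).
\]
On the face $\{y_\bullet=C_\bullet\}$ the middle term depends on the class composition in service, which is exactly what your third step is meant to control. (Note also that $C_\bullet$ is an occupancy, not a rate.) So the one-dimensional comparison ODE does not close. The $\underline\theta,\bar\theta$ sandwich does not give eventual positivity either. If the in-service mix is dominated by classes with large $\mu_{\bullet,i}$, completions can exceed $\sum_i\alpha_{\bullet,i}$, and the backlog shrinks even near zero. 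As written, the argument is circular: positivity of the pool backlog presupposes a composition near $y^\star_\bullet$, while composition convergence presupposes positivity.

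The missing idea is to measure overload in work units rather than job counts. The pool balance $\alpha_{\bullet,i}=\mu_{\bullet,i}y^\star_{\bullet,i}+\theta_iq^\star_{d,\bullet,i}$, which you plan to verify, gives $\sum_i\alpha_{\bullet,i}/\mu_{\bullet,i}>y^\star_\bullet=C_\bullet$.

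The paper uses this by contradiction. If $q_{d,\bullet}\equiv 0$ after some time, every arrival is admitted, so $\dot y_{\bullet,i}=\alpha_{\bullet,i}-\mu_{\bullet,i}y_{\bullet,i}$. Then $y_\bullet(t)\to\sum_i\alpha_{\bullet,i}/\mu_{\bullet,i}>C_\bullet$, which violates capacity.

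Equivalently, you could keep the weighted work $W_\bullet=\sum_i(q_{d,\bullet,i}+y_{\bullet,i})/\mu_{\bullet,i}$ that you mention. Its drift is $\sum_i\alpha_{\bullet,i}/\mu_{\bullet,i}-y_\bullet(t)-\sum_i\theta_iq_{d,\bullet,i}/\mu_{\bullet,i}$. This is bounded below by a positive constant whenever the backlog is small, and $W_\bullet$ is bounded there, so the backlog cannot stay small for long.

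Either version only rules out a (near-)empty buffer persisting. Passing to ``positive for all large $t$'' needs a further argument, which the paper also only asserts. This point is at least as delicate as the class-level step you flagged.

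Two simplifications are available. First, you do not need convergence of the scalar total here; it follows at the end by summing the class-level queue limits. Second, once the pool is saturated with all class buffers positive, $\dot y_{\bullet,i}=\varpi_{\bullet,i}\sum_j\mu_{\bullet,j}y_{\bullet,j}-\mu_{\bullet,i}y_{\bullet,i}$ contains no inflow term. The $y$-system is then exactly autonomous, and no variation-of-constants perturbation argument is needed.
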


\begin{proof}{Proof of Proposition~\ref{prop:sli_qpos_global}}
The prefill part mainly follows from the proof of Theorem~\ref{thm:asymptotic_optimality_occ}.
Therefore the prefill argument in the proof of Theorem~\ref{thm:asymptotic_optimality_occ} applies verbatim, and there exists $T_p<\infty$ such that for all $t\ge T_p$ and all $i\in\mathcal I$,
\begin{equation}\label{eq:sli_qpos_prefill_lock_ec}
q_{p,i}(t)>0,\qquad x_i(t)=x_i^\star.
\end{equation}
We analyze the decode dynamics on $[T_p,\infty)$ and shift the time origin so that \eqref{eq:sli_qpos_prefill_lock_ec} holds for all $t\ge 0$.

Then the resulting constant pool arrival rates are
\[
\alpha_{s,i}:=p_{s,i}\mu_{p,i}x_i^\star,\qquad
\alpha_{m,i}:=(1-p_{s,i})\mu_{p,i}x_i^\star,\qquad i\in\mathcal I,
\]
and the pool capacities
\[
y_s^\star:=\sum_{i\in\mathcal I}y_{s,i}^\star,\qquad
y_m^\star:=\sum_{i\in\mathcal I}y_{m,i}^\star,
\]
which coincide with the binding decode capacity constraints at the LP optimum.

The differential version of the balance equations give:
\[
  \dot q_{d,s,i}(t)=\alpha_{s,i}-\dot u_{d,s,i}(t)-\theta_i q_{d,s,i}(t),\qquad
  \dot q_{d,m,i}(t)=\alpha_{m,i}-\dot u_{d,m,i}(t)-\theta_i q_{d,m,i}(t),
\]
together with the corresponding in-service dynamics
\(
\dot y_{s,i}(t)=\dot u_{d,s,i}(t)-\mu_{s,i}y_{s,i}(t)
\)
and
\(
\dot y_{m,i}(t)=\dot u_{d,m,i}(t)-\mu_{m,i}y_{m,i}(t)
\)
under the static planning.

We first show that there exists $T_0<\infty$ such that for all $t\ge T_0$,
\begin{equation}\label{eq:sli_qpos_pool_sat_ec}
q_{d,s}(t)>0,\ q_{d,m}(t)>0,\qquad
y_s(t)=y_s^\star,\ y_m(t)=y_m^\star,
\end{equation}
where $q_{d,s}(t):=\sum_i q_{d,s,i}(t)$ and $q_{d,m}(t):=\sum_i q_{d,m,i}(t)$, and
$y_s(t):=\sum_i y_{s,i}(t)$ and $y_m(t):=\sum_i y_{m,i}(t)$.

Fix the solo pool (the mixed pool is identical).
Using the LP balances and the definitions of $q_{d,s,i}^\star:=p_{s,i}q_{d,i}^\star$, we have for each $i$,
\[
\alpha_{s,i}=\mu_{s,i}y_{s,i}^\star+\theta_i q_{d,s,i}^\star
\quad\Rightarrow\quad
\frac{\alpha_{s,i}}{\mu_{s,i}}\ge y_{s,i}^\star,
\]
with strict inequality for any $i$ with $p_{s,i}>0$ (equivalently $y_{s,i}^\star>0$).
Summing gives
\begin{equation}\label{eq:sli_qpos_overload_sum_s_ec}
\sum_{i\in\mathcal I}\frac{\alpha_{s,i}}{\mu_{s,i}} > \sum_{i\in\mathcal I} y_{s,i}^\star = y_s^\star.
\end{equation}

Suppose, toward a contradiction, that $q_{d,s}(t)=0$ for $t\ge T'$ for some $T'<\infty$.
Then $q_{d,s,i}(t)=0$ for $t\ge T'$ and all $i$.
The solo-buffer balance implies, for $t\ge T'$,
\[
0=\dot q_{d,s,i}(t)=\alpha_{s,i}-\dot u_{d,s,i}(t)-\theta_i q_{d,s,i}(t)=\alpha_{s,i}-\dot u_{d,s,i}(t),
\]
so $\dot u_{d,s,i}(t)=\alpha_{s,i}$ for $t\ge T'$.
Therefore for $t\ge T'$,
\[
\dot y_{s,i}(t)=\dot u_{d,s,i}(t)-\mu_{s,i}y_{s,i}(t)
=\alpha_{s,i}-\mu_{s,i}y_{s,i}(t).
\]
Solving this linear ODE yields $y_{s,i}(t)\to \alpha_{s,i}/\mu_{s,i}$ as $t\to\infty$.
Taking sums and using \eqref{eq:sli_qpos_overload_sum_s_ec} implies
\(
y_s(t)=\sum_i y_{s,i}(t)\to \sum_i \alpha_{s,i}/\mu_{s,i} > y_s^\star
\),
contradicting the capacity constraint $y_s(t)\le y_s^\star$ for all $t$.
Therefore the solo-pool buffer cannot be eventually empty.
Moreover, by \eqref{eq:sli_qpos_overload_sum_s_ec}, there exists $T_s<\infty$ such that
\[
q_{d,s}(t)>0,\qquad \forall\,t\ge T_s.
\]

Applying the same reasoning to the mixed pool yields $T_m<\infty$ such that $q_{d,m}(t)>0$ for all $t\ge T_m$.
Let $T_0:=\max\{T_s,T_m\}$.
Then for all $t\ge T_0$, both pool buffers are positive.
By the work-conserving property of the policy within each pool, $y_s(t)=y_s^\star$ and $y_m(t)=y_m^\star$ for all $t\ge T_0$, proving \eqref{eq:sli_qpos_pool_sat_ec}.

Second, we focus on the convergence of the decode occupancies.
We consider $t\ge T_0$ for which \eqref{eq:sli_qpos_pool_sat_ec} holds.
Since $q_{d,m}(t)>0$, Lemma~\ref{lem:sli_qpos_fluid_adm} gives $\dot u_{d,m}(t)=\dot d_{d,m}(t)$.
Moreover, since $q_{d,m}(t)>0$ for all $t\ge T_0$ by \eqref{eq:sli_qpos_pool_sat_ec}, the selection rule is active for all such $t$.
For any class $i$ with $\alpha_{m,i}>0$, if $q_{d,m,i}(t)=0$ when $t\ge T_0$, then no class-$i$ job can be pulled from the mixed buffer at time $t$, so $\dot u_{d,m,i}(t)=0$ and the mixed-queue balance gives
\[
\dot q_{d,m,i}(t)=\alpha_{m,i}-\dot u_{d,m,i}(t)-\theta_i q_{d,m,i}(t)=\alpha_{m,i}>0.
\]
Therefore $q_{d,m,i}(t)>0$ for $t\ge T_0$ whenever $\alpha_{m,i}>0$.
Hence the class-level admission rate simplifies to the unconditional proportion $\varpi_{m,i}$ for almost every $t\ge T_0$.
\[
\dot u_{d,m,i}(t)=\varpi_{m,i}\,\dot u_{d,m}(t)
=\varpi_{m,i}\sum_{j\in\mathcal I}\mu_{m,j}y_{m,j}(t).
\]
Under the static planning, the mixed in-service masses satisfy
\[
\dot y_{m,i}(t)=\dot u_{d,m,i}(t)-\mu_{m,i}y_{m,i}(t)
=\varpi_{m,i}\sum_{j\in\mathcal I}\mu_{m,j}y_{m,j}(t)-\mu_{m,i}y_{m,i}(t).
\]
This is a linear system $\dot y_m(t)=A_m\,y_m(t)$, where
\[
  A_m \;:=\; \varpi_m\,\mu_m^\top - \mathrm{diag}(\mu_{m,1},\ldots,\mu_{m,I}),
  \qquad
  \varpi_m:=(\varpi_{m,i})_{i\in\mathcal I},\ \mu_m:=(\mu_{m,i})_{i\in\mathcal I}.
\]

Applying Lemma~\ref{lem:rank_one_convergence} to $A_m$ yields $y_{m,i}(t)\to y_{m,i}^\star$ as $t\to\infty$.
The same argument applies to the solo pool which yields $y_{s,i}(t)\to y_{s,i}^\star$.

Finally, we prove the convergence of the decode queues.
For each class $i$, the pool-specific decode queues satisfy, for $t\ge T_0$,
\[
\dot q_{d,s,i}(t)=\alpha_{s,i}-\dot u_{d,s,i}(t)-\theta_i q_{d,s,i}(t),\qquad
\dot q_{d,m,i}(t)=\alpha_{m,i}-\dot u_{d,m,i}(t)-\theta_i q_{d,m,i}(t).
\]
Since both pools are work-conserving for all $t\ge T_0$, the admission rates satisfy
\[
  \dot u_{d,s,i}(t)=\varpi_{s,i}\sum_{j\in\mathcal I}\mu_{s,j}y_{s,j}(t),\qquad
  \dot u_{d,m,i}(t)=\varpi_{m,i}\sum_{j\in\mathcal I}\mu_{m,j}y_{m,j}(t),
\]
for $t\ge T_0$.
Substituting gives, for $t\ge T_0$,
\[
  \dot q_{d,s,i}(t)=\alpha_{s,i}-\varpi_{s,i}\sum_{j\in\mathcal I}\mu_{s,j}y_{s,j}(t)-\theta_i q_{d,s,i}(t),\qquad
  \dot q_{d,m,i}(t)=\alpha_{m,i}-\varpi_{m,i}\sum_{j\in\mathcal I}\mu_{m,j}y_{m,j}(t)-\theta_i q_{d,m,i}(t).
\]
Since $y_{s,i}(t)\to y_{s,i}^\star$ and $y_{m,i}(t)\to y_{m,i}^\star$, the pool-level LP balances satisfy
\[
p_{s,i}\mu_{p,i}x_i^\star-\mu_{s,i}y_{s,i}^\star=\theta_i q_{d,s,i}^\star,\qquad
(1-p_{s,i})\mu_{p,i}x_i^\star-\mu_{m,i}y_{m,i}^\star=\theta_i q_{d,m,i}^\star.
\]
Using the definition of $\varpi_{\bullet,i}$, we also have $\varpi_{s,i}\sum_{j\in\mathcal I}\mu_{s,j}y_{s,j}^\star=\mu_{s,i}y_{s,i}^\star$ and $\varpi_{m,i}\sum_{j\in\mathcal I}\mu_{m,j}y_{m,j}^\star=\mu_{m,i}y_{m,i}^\star$.

Solving the standard linear ODE, we have for any $t\ge T_0$,
\[
q_{d,s,i}(t)
=
e^{-\theta_i (t-T_0)}\,q_{d,s,i}(T_0)
\;+\;
\int_{T_0}^t e^{-\theta_i (t-u)}
\Bigl(\alpha_{s,i}-\varpi_{s,i}\sum_{j\in\mathcal I}\mu_{s,j}y_{s,j}(u)\Bigr)\,du.
\]
Since $y_{s,i}(u)\to y_{s,i}^\star$ for each $i$, the integrand converges to $\alpha_{s,i}-\varpi_{s,i}\sum_{j\in\mathcal I}\mu_{s,j}y_{s,j}^\star =\alpha_{s,i}-\mu_{s,i}y_{s,i}^\star=\theta_i q_{d,s,i}^\star$.
Standard arguments for linear convolution with an exponentially decaying kernel (or directly applying dominated convergence) then yield $q_{d,s,i}(t)\to q_{d,s,i}^\star$ as $t\to\infty$.
The same argument applied to the mixed pool queue gives $q_{d,m,i}(t)\to q_{d,m,i}^\star$.
Summing gives $q_{d,i}(t)=q_{d,m,i}(t)+q_{d,s,i}(t)\to q_{d,i}^\star$.
\end{proof}

\section{Additional Experiments and Empirical Validation}
\label{sec:ec_exponential_fitting}

\textcolor{tc-ins}{This section provides additional empirical validations that complement the main text.
We first document workload heterogeneity, earlier exponential-fit diagnostics, and additional Markovian decoding checks.
We then provide matched synthetic-versus-trace and class-refinement experiments that test the robustness of the Markovian planning abstraction, followed by controlled convergence and ablation experiments for the synthetic setting.}

\paragraph{Computing infrastructure.}
Table~\ref{tab:computing-infrastructure} summarizes the hardware and software environment used for the empirical measurements and numerical experiments in the paper.

\begin{table}[htbp]
\centering
\caption{Computing infrastructure specifications.}
\label{tab:computing-infrastructure}
\scriptsize
\setlength{\tabcolsep}{3pt}
\renewcommand{\arraystretch}{1.05}
\begin{tabularx}{\linewidth}{lX}
\toprule
\textbf{Component} & \textbf{Specification} \\
\midrule
\multicolumn{2}{l}{\textbf{Hardware}} \\
CPU & AMD EPYC 7H12 64-core processor \\
Cores / threads & 2 sockets $\times$ 64 cores $\times$ 2 threads (256 hardware threads) \\
Clock frequency & 1.5--2.6\,GHz \\
Cache & L1d/L1i: 4\,MiB (128$\times$), L2: 64\,MiB (128$\times$), L3: 512\,MiB (32$\times$) \\
GPU & 4$\times$ NVIDIA A100-SXM4-40GB \\
GPU memory & 40\,GB per GPU (160\,GB total) \\
System memory & 1.0\,TiB RAM \\
Swap & 8.0\,GiB \\
\midrule
\multicolumn{2}{l}{\textbf{Software}} \\
Operating system & Ubuntu 22.04.5 LTS (Jammy Jellyfish) \\
Python & Version 3.9.23 \\
CUDA & Version 12.6 \\
NVIDIA driver & Version 560.28.03 \\
\bottomrule
\end{tabularx}
\end{table}

\subsection{\textcolor{tc-ins}{Workload Heterogeneity, Exponential Fits, and Markovian Checks}}
\label{sec:ec-workload-checks}

\paragraph{Workload heterogeneity and task characteristics.}
Table~\ref{tab:workload_heterogeneity} reports the average prompt (input) and output lengths for eight task categories from the Databricks Dolly-15k instruction-tuning dataset~\citep{DatabricksBlog2023DollyV2}.
The data reveals substantial variation across categories.
For example, summarization and information extraction tasks average over 1,000 input tokens with moderate output.
In contrast, creative writing and open QA require fewer than 100 input tokens yet generate longer outputs on average.

\setlength{\tabcolsep}{3pt}
\begin{table}[ht]
\centering
\scriptsize
\renewcommand{\arraystretch}{1.05}
\newcolumntype{Y}{>{\centering\arraybackslash}X}
\begin{tabularx}{\linewidth}{@{}lYYY@{}}
\toprule
Task Category & Avg. $P$ (tokens) & Avg. $D$ (tokens) & Samples \\
\midrule
Brainstorming & 61 & 331 & 1,764 \\
Classification & 123 & 142 & 2,136 \\
Closed QA & 992 & 182 & 1,738 \\
Creative writing & 89 & 915 & 699 \\
General QA & 69 & 572 & 2,187 \\
Information extraction & 1,139 & 273 & 1,462 \\
Open QA & 45 & 293 & 3,739 \\
Summarization & 1,177 & 436 & 1,150 \\
\bottomrule
\end{tabularx}
\caption{Workload heterogeneity: average prompt ($P$) and output ($D$) lengths across task categories from the Databricks Dolly-15k instruction-tuning dataset~\citep{DatabricksBlog2023DollyV2}.}
\label{tab:workload_heterogeneity}
\end{table}

To further illustrate this heterogeneity, Figure~\ref{fig:scatter_info_creative} plots the input and output length pairs for two representative categories: Information Extraction and Creative Writing.
These two classes exhibit nearly opposite characteristics.
Information Extraction tasks cluster in the high-input and low-output region.
Creative Writing tasks concentrate in the low-input and high-output zone.
This contrast confirms that a uniform scheduling policy cannot efficiently balance resources across diverse workloads, which motivates the multiclass framework developed in this paper.

\begin{figure}[htbp]
    \centering
    \includegraphics[width=0.5\linewidth]{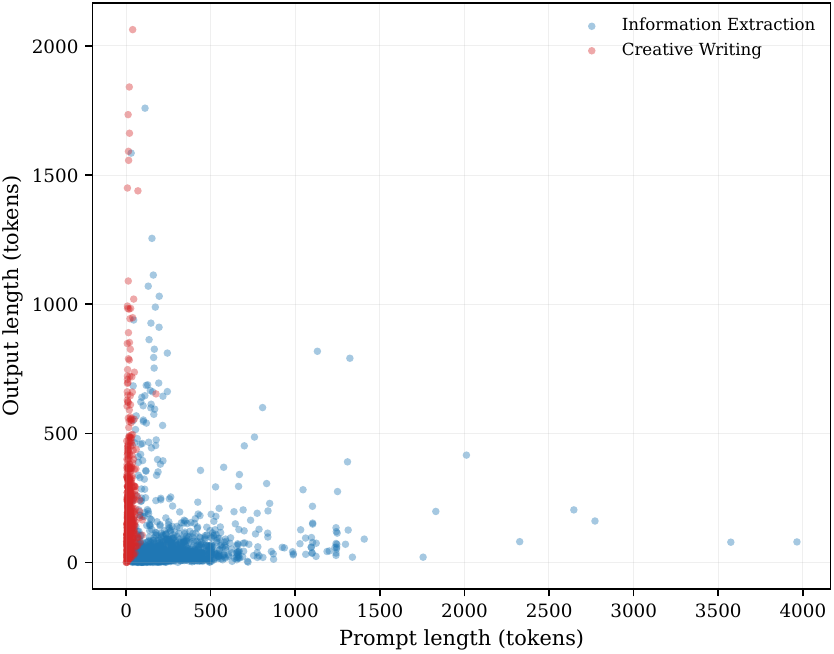}
    \caption{Scatter plot of input versus output lengths for Information Extraction (blue) and Creative Writing (red) in the Databricks Dolly-15k dataset. The two categories occupy nearly opposite regions of the $(P, D)$ space.}
    \label{fig:scatter_info_creative}
\end{figure}

\paragraph{Validation of the exponential distribution assumption.}
Our model assumes exponentially distributed output lengths for each task class.
This assumption directly determines the decode service rate in our stochastic network.
To check whether this approximation is reasonable, we fit exponential distributions to the actual output length data from the Databricks Dolly-15k dataset using Maximum Likelihood Estimation.

Figure~\ref{fig:output_cdf_exponential} compares the empirical CDFs with the fitted exponential CDFs for all eight task categories.
The exponential fit is good for most categories.
Brainstorming, Classification, Closed QA, Information Extraction, Open QA, and Summarization all track closely.
General QA shows reasonable agreement despite some mid-range deviation.
Creative Writing deviates more noticeably in the upper tail.
This is likely because its output distribution is heavier-tailed than a pure exponential, which is consistent with its high mean and variance.

\begin{figure}[htbp]
    \centering
    \includegraphics[width=0.95\linewidth]{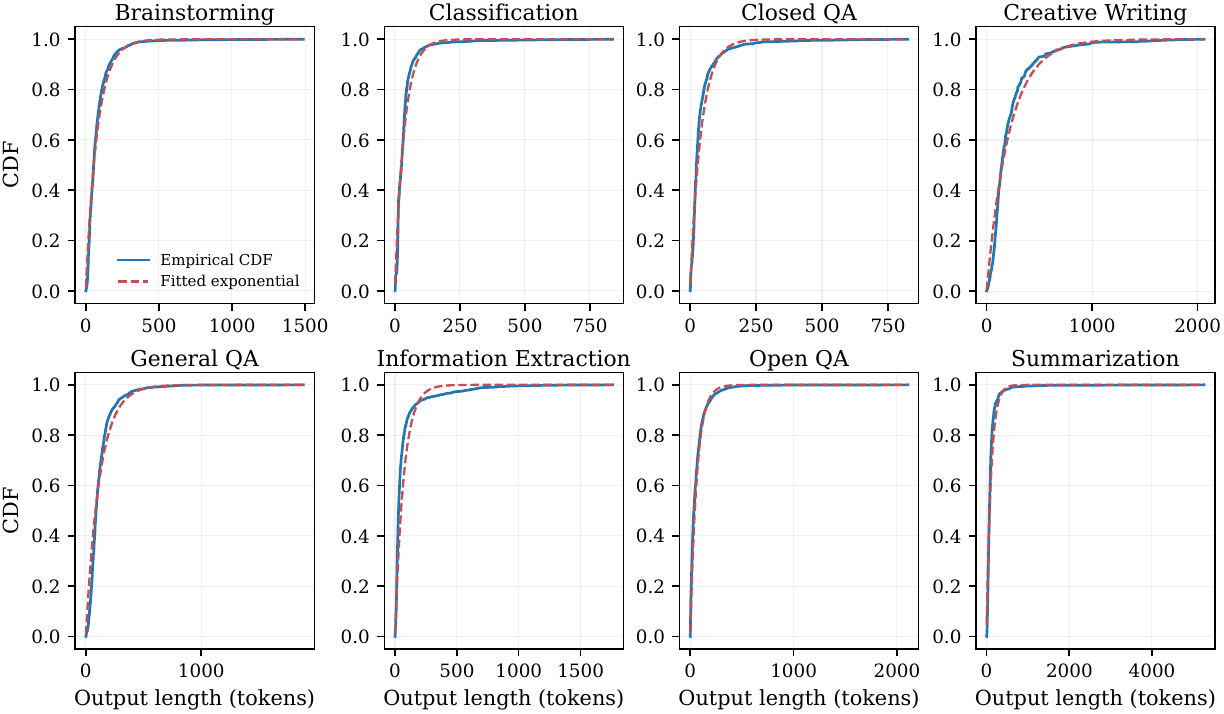}
    \caption{Empirical output length CDFs (solid blue) versus fitted exponential CDFs (dashed red) for the eight task categories in the Databricks Dolly-15k dataset.}
    \label{fig:output_cdf_exponential}
\end{figure}

To complement the visual comparison, Table~\ref{tab:kl_exponential_fit} reports the Kullback-Leibler (KL) divergence between the empirical output-length distribution and its fitted exponential approximation for each category.
Smaller values indicate closer agreement.
Overall, the exponential assumption captures the main characteristics of the distributions well enough to justify its use for analytical tractability.

\begin{table}[htbp]
    \centering
    \scriptsize
    \setlength{\tabcolsep}{3pt}
    \renewcommand{\arraystretch}{1.05}
    \caption{KL divergence between empirical output lengths and fitted exponential distributions across task categories (smaller is better).}
    \label{tab:kl_exponential_fit}
    \begin{tabularx}{\linewidth}{@{}l>{\centering\arraybackslash}X>{\centering\arraybackslash}X>{\centering\arraybackslash}X>{\centering\arraybackslash}X>{\centering\arraybackslash}X>{\centering\arraybackslash}X>{\centering\arraybackslash}X>{\centering\arraybackslash}X@{}}
        \toprule
        \textbf{Category} & \textbf{Open QA} & \textbf{Brainstorming} & \textbf{Closed QA} & \textbf{Summarization} & \textbf{General QA} & \textbf{Classification} & \textbf{\shortstack[c]{Creative\\Writing}} & \textbf{\shortstack[c]{Information\\Extraction}} \\
        \midrule
        \textbf{KL divergence} & 0.0584 & 0.0652 & 0.1245 & 0.1255 & 0.1309 & 0.1491 & 0.1595 & 0.2631 \\
        \bottomrule
    \end{tabularx}
\end{table}

{\color{tc-ins}
\paragraph{Log-survival checks for geometric decay.}
The preceding Dolly analysis checks whether an exponential length approximation is a reasonable distributional simplification within one instruction-tuning dataset.
We next ask a distinct question: whether the token-level stopping mechanism underlying autoregressive decoding is consistent with a Markovian geometric-decay surrogate across additional public query/response datasets.
This Markovian view is consistent with prior work: \citet{zekri2024large} represent autoregressive transformer-based LLMs as Markov chains over a finite state space induced by the finite vocabulary and context window, and the continuous-batching benchmark of \citet{daniel2023continuous} samples generation lengths from an exponential distribution when evaluating LLM serving throughput.
Mechanistically, each decode iteration either emits a regular token and continues or emits an end-of-sequence token and stops, so the stopping event acts like a discrete hazard; a per-class hazard that is approximately stable yields an approximately geometric output length, while a hazard that varies slowly with context still admits a parsimonious exponential-decay surrogate.
For decode lengths, a geometric distribution implies
\[
  \Pr(X\ge k)=(1-p)^{k-1},
\]
so $\log\Pr(X\ge k)$ is linear in $k$.
We therefore compute the empirical survival curve for several public query/response datasets, regress $\log\Pr(X\ge k)$ on $k$ over the main body of the distribution, and report the resulting $R^2$.
This statistic is not an exact goodness-of-fit test, but it measures whether the observed tail follows the exponential-decay shape implied by the Markovian surrogate.
Table~\ref{tab:public-geometric-check} reports representative datasets with high log-survival linearity, and Figure~\ref{fig:public-geometric-qq} provides the corresponding 2-by-2 QQ-plot overview.

\begin{table}[htbp]
\centering
\footnotesize
\caption{Empirical log-survival checks for approximate geometric output-length decay.}
\label{tab:public-geometric-check}
\begin{tabular}{lrr}
\toprule
Dataset & Samples & Log-survival $R^2$ \\
\midrule
Stanford Alpaca & 51,974 & 0.9836 \\
SQuAD & 98,169 & 0.9863 \\
CodeAlpaca 20K & 20,016 & 0.9994 \\
CoQA & 116,630 & 0.9996 \\
\bottomrule
\end{tabular}
\end{table}

\begin{figure}[htbp]
    \centering
    \includegraphics[width=0.80\linewidth]{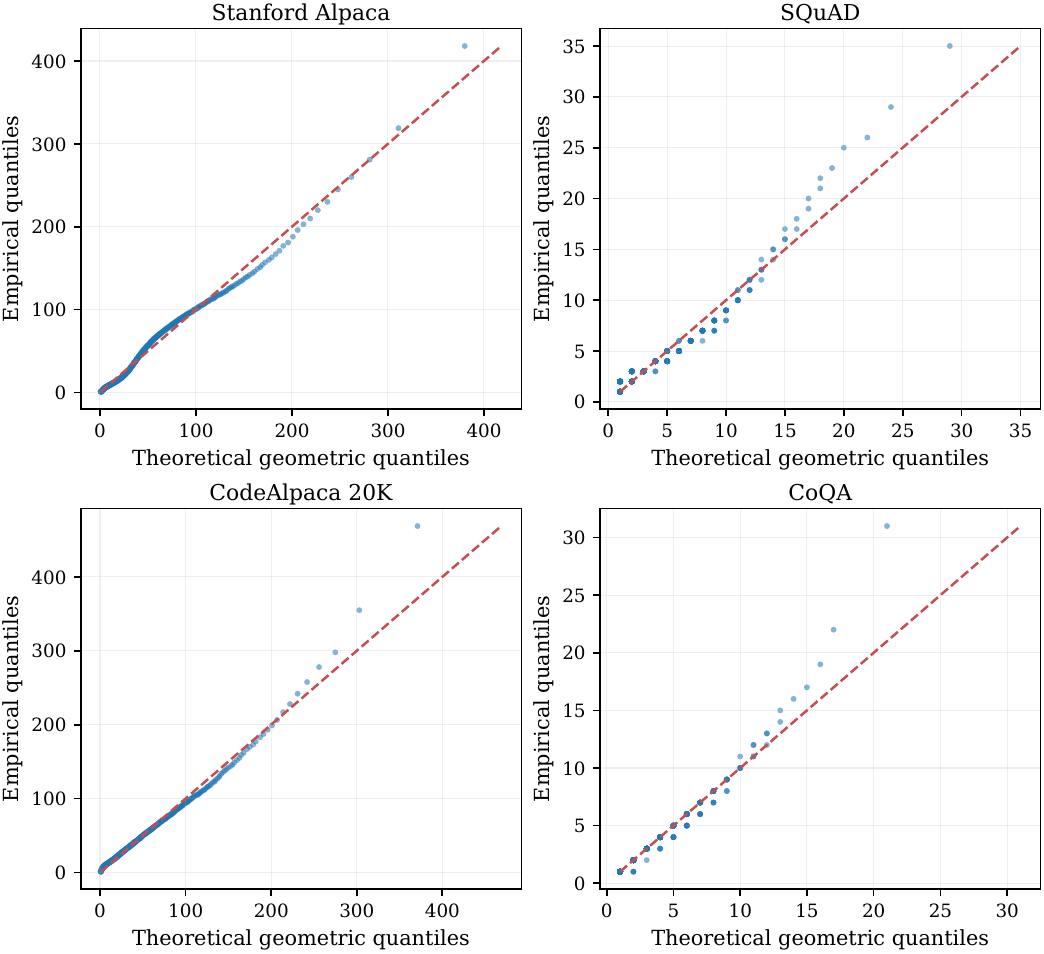}
    \caption{Geometric QQ plots for the four public datasets in Table~\ref{tab:public-geometric-check}. Points closer to the diagonal indicate closer agreement with the fitted geometric output-length distribution.}
    \label{fig:public-geometric-qq}
\end{figure}

These results should be interpreted as supporting evidence for a parsimonious exponential-decay planning model, not as a claim that production traces are exactly geometric.
Indeed, the Azure traces used below are replayed directly: arrivals and token lengths are taken from the data rather than sampled from Poisson or geometric distributions.
}

\subsection{\textcolor{tc-ins}{Matched Synthetic Versus Real Trace}}
\label{sec:ec-matched-synthetic}

{\color{tc-ins}
To quantify the distortion introduced by the Markovian workload abstraction, we construct a matched synthetic workload from the same trace summary.
The synthetic workload has the same two classes, the same class-level mean prompt and output lengths, the same per-GPU arrival-rate calibration, the same simulation horizon, and the same online estimation and replanning parameters as the real-trace replay.
The only change is that arrivals and token lengths are sampled from the corresponding Markovian/exponentialized primitives rather than replayed from the empirical trace.
We repeat this matched comparison across cluster sizes $n \in \{5, 10, 20\}$, holding the per-GPU offered load fixed so that the underlying fluid limit is the same across $n$, which isolates the effect of system scale on the approximation quality.

\begin{table}[htbp]
\centering
\footnotesize
\caption{Matched synthetic-versus-real comparison for the online gate-and-route policy across cluster sizes, holding the per-GPU offered load fixed. At each cluster size the synthetic workload matches the class means, arrival-rate calibration, horizon, and online-control parameters of the real-trace replay. Gap is the matched synthetic revenue rate relative to the real-trace replay.}
\label{tab:matched-synthetic-real}
\begin{tabular}{llrrrrr}
\toprule
$n$ & Scenario & Revenue rate & Gap (\%) & Completion & TTFT mean & TTFT P99 \\
\midrule
\multirow{2}{*}{$5$}  & Real trace replay & 600.70 &          & 0.3841 & 135.88 & 529.24 \\
                      & Matched synthetic & 681.24 & $+13.41$ & 0.4335 & 110.60 & 513.21 \\
\midrule
\multirow{2}{*}{$10$} & Real trace replay & 677.96 &          & 0.4263 & 52.98  & 268.87 \\
                      & Matched synthetic & 697.95 & $+2.95$  & 0.4441 & 47.98  & 238.77 \\
\midrule
\multirow{2}{*}{$20$} & Real trace replay & 617.17 &          & 0.3887 & 31.27  & 129.71 \\
                      & Matched synthetic & 627.39 & $+1.66$  & 0.3953 & 25.69  & 133.94 \\
\bottomrule
\end{tabular}
\end{table}

The matched synthetic workload is slightly optimistic on revenue, but the gap shrinks monotonically as the cluster scales: it is $13.41\%$ at $n=5$, $2.95\%$ at $n=10$, and $1.66\%$ at $n=20$.
Because the real and matched synthetic workloads share the same first-order class means and arrival-rate calibration, their fluid limits coincide, so the finite-system distortion introduced by the Markovian abstraction diminishes as the system scales.
This supports using the Markovian planning model in the many-server regime that motivates our analysis.
}

\subsection{\textcolor{tc-ins}{Benchmark Ranking Across Cluster Scale}}
\label{sec:ec-scale-axis}

{\color{tc-ins}
The trace-driven comparison in Table~\ref{tab:main-real-trace-benchmarks} compresses the Azure interarrival times by a factor of $0.1$ to bring the $10$-GPU system into the congested regime that the policies are designed to manage.
To confirm that the comparison is not an artifact of this particular operating point, we sweep the system along the many-server axis that our analysis targets: we hold the per-GPU offered load fixed and grow the cluster, taking $(n, \text{compression}) \in \{(10, 0.1), (20, 0.05), (40, 0.025)\}$ so that the product of cluster size and compression factor is constant.
The DistServe-style baselines keep their best fixed split at each cluster size.
Table~\ref{tab:ec-scale-axis-benchmarks} reports the same metrics as Table~\ref{tab:main-real-trace-benchmarks} at each cluster size, with online gate-and-route configured for each operating point.

\begin{table*}[htbp]
\centering
\scriptsize
\setlength{\tabcolsep}{2.2pt}
\renewcommand{\arraystretch}{1.08}
\caption{Trace-driven policy comparison across cluster scale on the 2023 Azure replay, holding the per-GPU offered load fixed (cluster size $\times$ compression factor constant).}
\label{tab:ec-scale-axis-benchmarks}
\vspace{0.9em}

\textbf{(a) $n=10$ GPUs, compression $0.1$}
\vspace{0.45em}
\resizebox{\linewidth}{!}{%
\begin{tabular}{lrrrrrrrr}
\toprule
Policy & Revenue rate & Completion rate & TTFT mean & TTFT P95 & TTFT P99 & TPOT mean & TPOT P95 & TPOT P99 \\
\midrule
Online gate-and-route (Ours) & \textbf{677.96} & \textbf{0.4263} & \textbf{52.98} & 248.70 & 268.87 & 0.03254 & 0.03515 & 0.03577 \\
Sarathi-style & 560.34 & 0.3860 & 103.85 & \textbf{198.83} & \textbf{208.37} & 0.02776 & 0.03481 & 0.03541 \\
vLLM-style & 442.94 & 0.3079 & 116.02 & 216.79 & 231.67 & 0.03384 & 0.03538 & 0.03592 \\
\midrule
\textit{DistServe best split (prefill/solo)} & 416.51 & 0.2878 & 125.46 & 225.82 & 236.56 & \textbf{0.01955} & \textbf{0.02021} & \textbf{0.02062} \\
\textit{DistServe best split (mix/solo)} & 418.59 & 0.2899 & 112.04 & 229.62 & 237.17 & 0.02685 & 0.03486 & 0.03552 \\
\bottomrule
\end{tabular}%
}
\vspace{1.25em}

\textbf{(b) $n=20$ GPUs, compression $0.05$}
\vspace{0.45em}
\resizebox{\linewidth}{!}{%
\begin{tabular}{lrrrrrrrr}
\toprule
Policy & Revenue rate & Completion rate & TTFT mean & TTFT P95 & TTFT P99 & TPOT mean & TPOT P95 & TPOT P99 \\
\midrule
Online gate-and-route (Ours) & \textbf{691.67} & \textbf{0.4369} & \textbf{30.68} & \textbf{87.23} & 122.41 & 0.03308 & 0.03595 & 0.03681 \\
Sarathi-style & 560.78 & 0.3866 & 50.60 & 97.58 & \textbf{102.46} & 0.02781 & 0.03484 & 0.03542 \\
vLLM-style & 440.35 & 0.3065 & 56.03 & 105.79 & 115.41 & 0.03384 & 0.03536 & 0.03592 \\
\midrule
\textit{DistServe best split (prefill/solo)} & 414.18 & 0.2866 & 61.46 & 111.79 & 116.94 & \textbf{0.01953} & \textbf{0.02016} & \textbf{0.02050} \\
\textit{DistServe best split (mix/solo)} & 410.67 & 0.2847 & 54.92 & 112.81 & 117.38 & 0.02705 & 0.03488 & 0.03546 \\
\bottomrule
\end{tabular}%
}
\vspace{1.25em}

\textbf{(c) $n=40$ GPUs, compression $0.025$}
\vspace{0.45em}
\resizebox{\linewidth}{!}{%
\begin{tabular}{lrrrrrrrr}
\toprule
Policy & Revenue rate & Completion rate & TTFT mean & TTFT P95 & TTFT P99 & TPOT mean & TPOT P95 & TPOT P99 \\
\midrule
Online gate-and-route (Ours) & \textbf{679.50} & \textbf{0.4329} & \textbf{14.69} & 58.68 & 63.03 & 0.03291 & 0.03570 & 0.03639 \\
Sarathi-style & 551.95 & 0.3818 & 24.17 & \textbf{47.80} & \textbf{50.58} & 0.02789 & 0.03487 & 0.03544 \\
vLLM-style & 437.85 & 0.3048 & 26.06 & 52.23 & 56.76 & 0.03382 & 0.03534 & 0.03590 \\
\midrule
\textit{DistServe best split (prefill/solo)} & 408.78 & 0.2834 & 29.39 & 54.31 & 57.43 & \textbf{0.01952} & \textbf{0.02013} & \textbf{0.02058} \\
\textit{DistServe best split (mix/solo)} & 401.48 & 0.2776 & 26.66 & 55.24 & 58.56 & 0.02736 & 0.03493 & 0.03547 \\
\bottomrule
\end{tabular}%
}
\end{table*}

Across all three cluster sizes, online gate-and-route attains the highest revenue rate, leading the strongest baseline by $21.0\%$, $23.3\%$, and $23.1\%$ at $n=10$, $20$, and $40$ respectively.
The revenue lead is therefore stable as the system scales up at fixed per-GPU intensity, rather than being specific to the $0.1$ compression.
As in the main text, this revenue lead coexists with the expected tail tradeoffs: the DistServe-style rows retain the lowest TPOT by isolating decode, and the simpler work-conserving rules can win individual TTFT tail cells, while online gate-and-route consistently leads on revenue, completion, and mean TTFT.
}

\subsection{\textcolor{tc-ins}{Effect of Finer Workload Classification}}
\label{sec:ec-kmeans-classification}

{\color{tc-ins}
The native code/conversation labels are useful but imperfect class definitions.
Figure~\ref{fig:azure-code-conv-length-density} plots the empirical prompt and output lengths for the two native trace labels.
The code trace is diffuse, while the conversation trace contains several visually distinct regions in the $(P,D)$ space.
This pattern suggests that a single conversation class mixes requests with materially different prefill and downstream decode requirements.
Since the LP uses class-level means and arrival rates to decide prefill admission and mixed/solo allocation, such coarse labels can blur the true resource tradeoff faced by the controller.

\begin{figure}[htbp]
    \centering
    \includegraphics[width=0.95\linewidth]{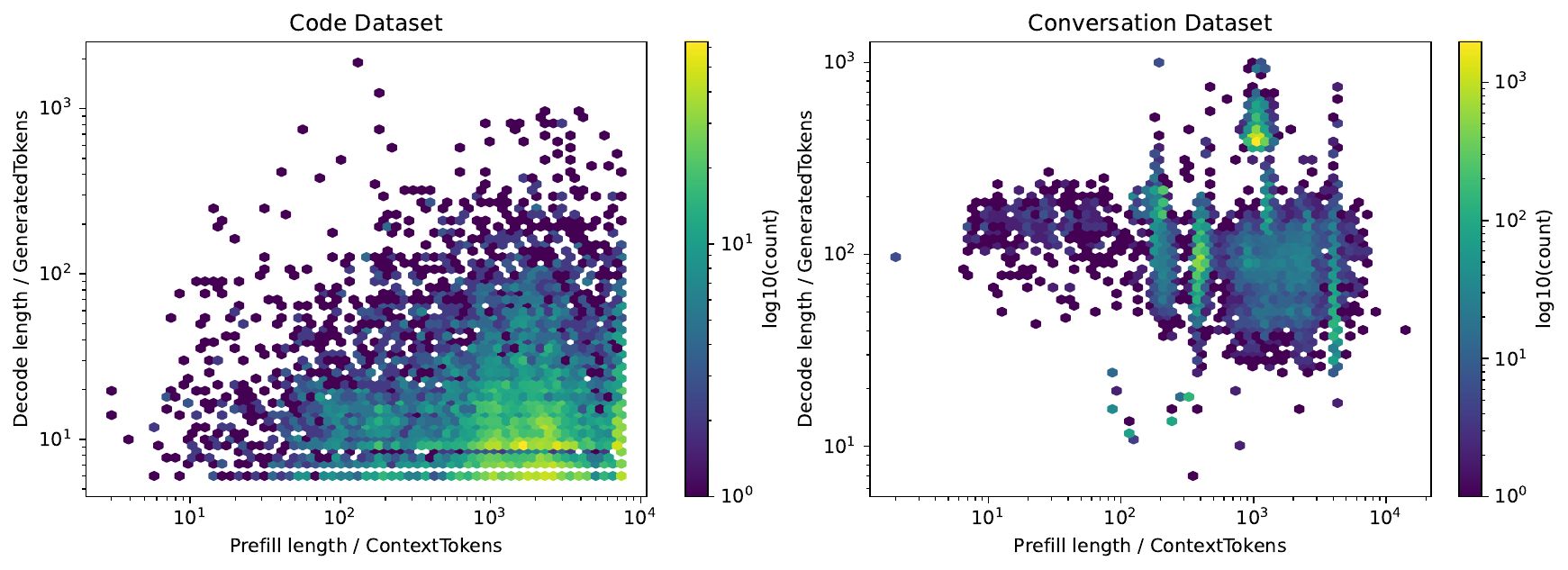}
    \caption{Prompt/output length density for the native code and conversation trace labels. The conversation class contains multiple visually distinct regions, indicating that the native label is an imperfect abstraction of workload heterogeneity.}
    \label{fig:azure-code-conv-length-density}
\end{figure}

As a diagnostic sanity check, we test whether more informative class definitions improve control.
We split the conversation trace using $k$-means clustering on prompt and output lengths and assume the resulting subclass label is available to the scheduler.
This creates $k+1$ classes: one code class and $k$ conversation subclasses.
All other simulator components, including trace replay, latency model, pricing, and online replanning logic, are unchanged.

\begin{table}[htbp]
\centering
\scriptsize
\setlength{\tabcolsep}{3pt}
\caption{Effect of refining the conversation class using $k$-means clustering.}
\label{tab:kmeans-classification}
\resizebox{\linewidth}{!}{%
\begin{tabular}{lrrrrrrrr}
\toprule
Policy & Revenue rate & Completion rate & TTFT mean & TTFT P95 & TTFT P99 & TPOT mean & TPOT P95 & TPOT P99 \\
\midrule
Online gate-and-route (Ours) & 677.96 & 0.4263 & 52.98 & 248.70 & 268.87 & 0.032536 & 0.035145 & 0.035772 \\
Online gate-and-route (Ours, conv $k=2$) & 690.61 & 0.4555 & 79.48 & 202.80 & 245.81 & 0.033340 & 0.035903 & 0.036972 \\
Online gate-and-route (Ours, conv $k=3$) & 711.43 & 0.4680 & 100.29 & 167.29 & 206.04 & 0.033171 & 0.037194 & 0.038177 \\
\bottomrule
\end{tabular}}
\end{table}

Table~\ref{tab:kmeans-classification} shows that, in this diagnostic comparison, refining the conversation class raises the revenue rate from $677.96$ under the native two-class definition to $690.61$ with two conversation subclasses and to $711.43$ with three conversation subclasses.
The gain is not just a mechanical increase in the number of labels; it is consistent with giving the LP more accurate class-level summaries of how admitted prefill work turns into future decode load.
With the refined classes, the online controller can better distinguish requests that are similar in their native label but different in their prefill/decode composition, leading to a better admission and mixed/solo balance.
The improvement in revenue is accompanied by a higher completion rate, while TPOT remains in the same range as the native two-class baseline.
}

\subsection{\textcolor{tc-ins}{Convergence Analysis}}
\label{subsec:convergence}

{\color{tc-ins}
This subsection reports the controlled synthetic experiment used to test convergence and mechanism-level tradeoffs.
The hardware-side parameters are kept the same as in the calibration of Section~\ref{subsec:calibration}.
In contrast, the class definitions and traffic primitives below are synthetic inputs rather than trace-calibrated quantities.

We define two synthetic workload classes:
\begin{itemize}
    \item \textbf{Class 0 (Decode-Heavy):} $P_0=300, D_0=1000$, representing tasks like code generation.
    \item \textbf{Class 1 (Prefill-Heavy):} $P_1=3000, D_1=400$, representing context-heavy analysis or summarization.
\end{itemize}
Arrival rates are symmetric with $\lambda = [0.5,0.5]$, and abandonment rates are $\theta=[0.1,0.1]$.
We use the separate charging scheme with prices $c_p=0.1$ and $c_d=0.2$.
We simulate the system across varying scales \(n\in\{5,20,50,200,500\}\), with five random seeds per configuration.

\paragraph{Revenue and Queue Convergence.}
We first validate the asymptotic optimality of the standard gate-and-route policy.
Figure~\ref{fig:rev_queue_conv} reports per-GPU revenue and queue lengths as the system scale \(n\) increases.
As predicted by the fluid limit, the average per-GPU revenue converges to the optimal value \(R^\star\) from the steady-state LP, while the error bands narrow as \(n\) grows.
The normalized queue lengths for Class 0 and Class 1 also stabilize near their fluid targets.
The prefill gate therefore regulates admission so that stochastic queue lengths track the fluid-optimal backlog needed to maintain utilization, while the decode buffer remains negligible.

\begin{figure}[htbp!]
    \centering
    \captionsetup[subfigure]{justification=centering,font=small}
    \begin{subfigure}[b]{0.44\textwidth}
        \centering
        \includegraphics[width=\linewidth, height=7.3cm]{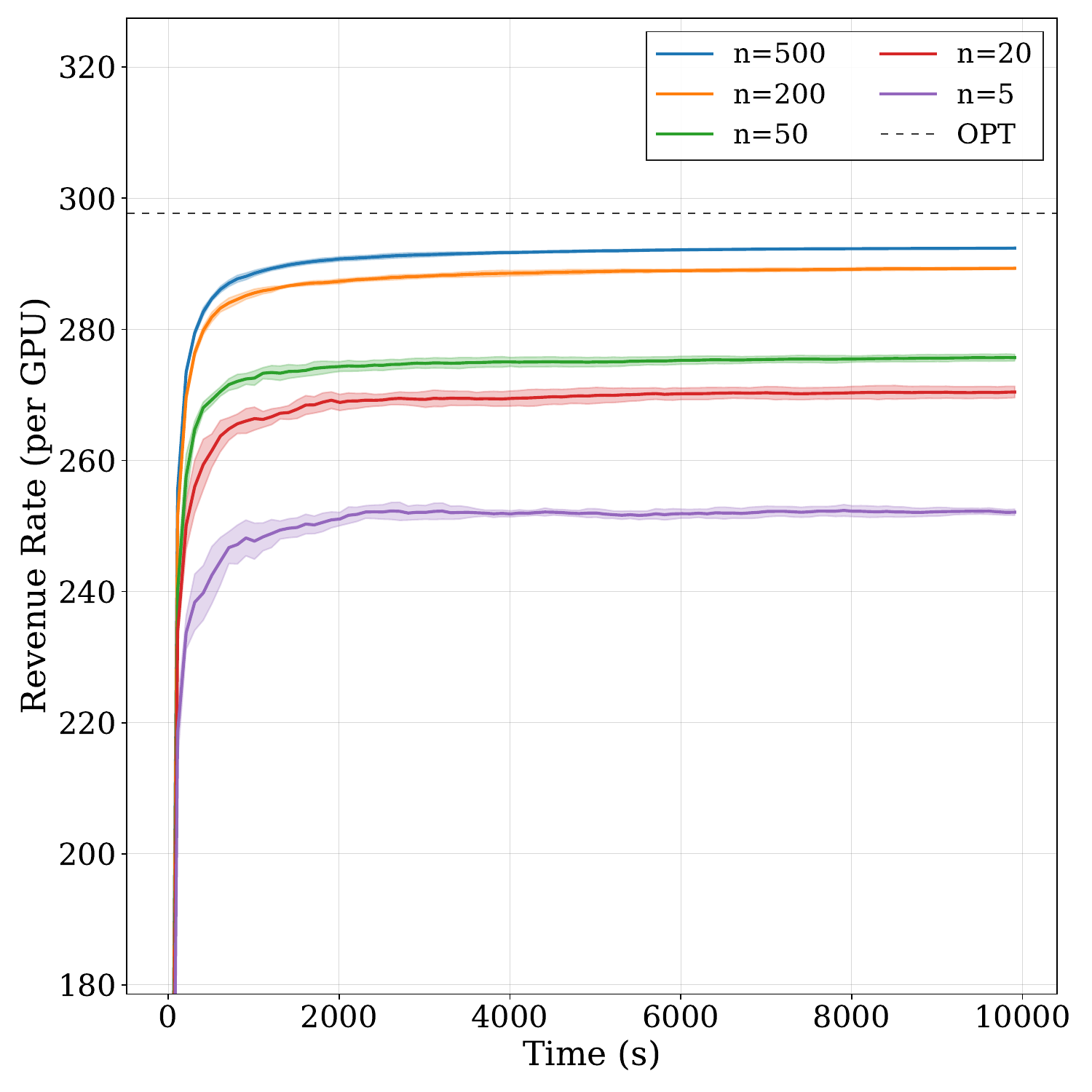}
        \caption{Convergence of per-GPU revenue.}
        \label{fig:rev_conv}
    \end{subfigure}
    \hfill
    \begin{minipage}[b]{0.54\textwidth}
        \begin{subfigure}[b]{\linewidth}
            \centering
            \includegraphics[width=\linewidth, height=3.2cm]{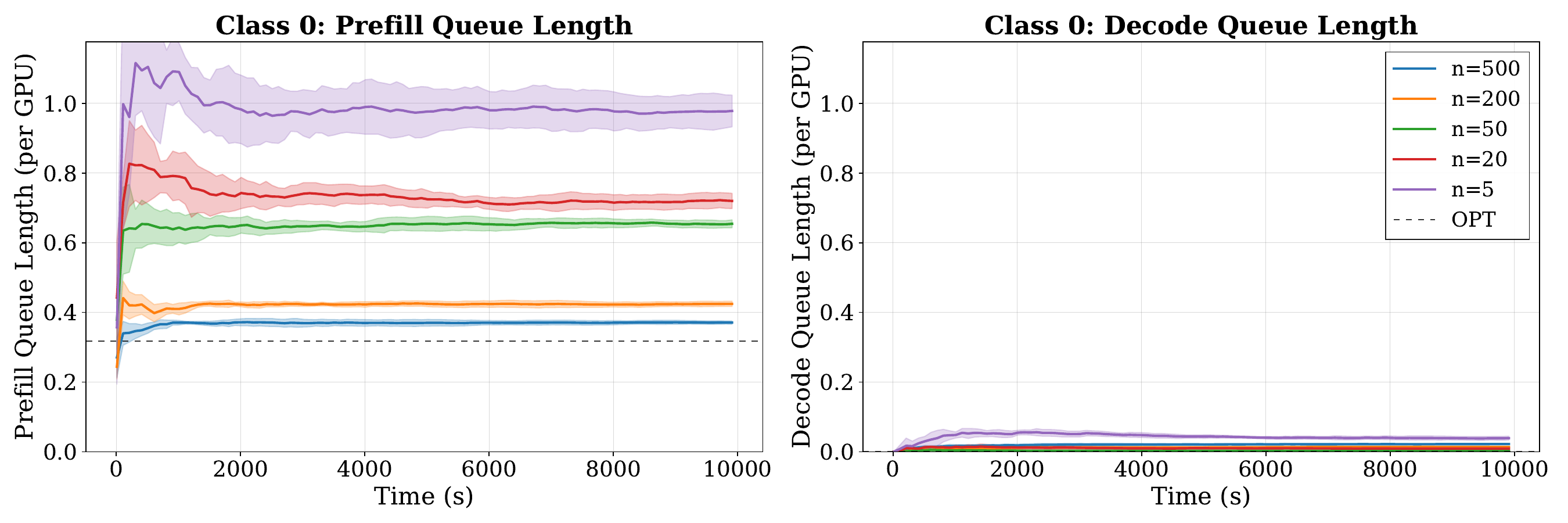}
            \caption{Convergence of average queue lengths for Class 0.}
            \label{fig:queue_class0}
        \end{subfigure}

        \vspace{0.4cm}

        \begin{subfigure}[b]{\linewidth}
            \centering
            \includegraphics[width=\linewidth, height=3.2cm]{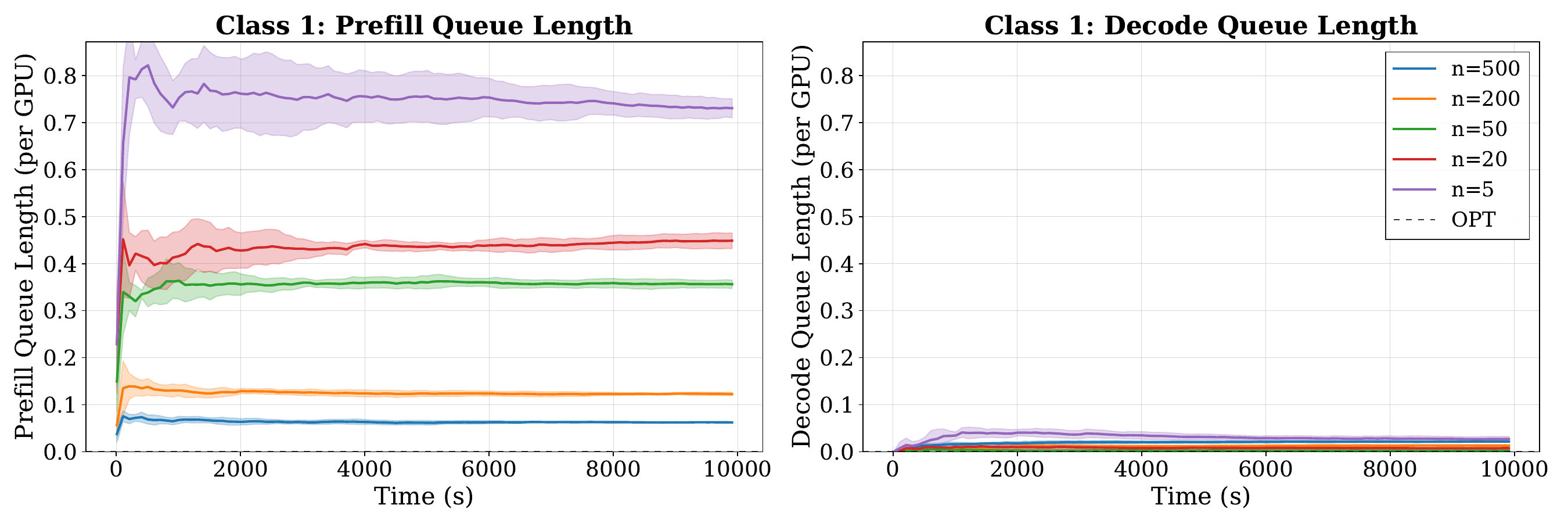}
            \caption{Convergence of average queue lengths for Class 1.}
            \label{fig:queue_class1}
        \end{subfigure}
    \end{minipage}

    \caption{Asymptotic convergence of revenue and queue lengths under the gate-and-route policy.}
    \label{fig:rev_queue_conv}
\end{figure}

\paragraph{Occupancy Convergence Analysis.}
While aggregate revenue converges, a deeper inspection shows a discrepancy in the convergence behavior of specific resource occupancies under the standard policy.
Figure~\ref{fig:occ_standard} plots the fluid-scaled prefill occupancy \(X_i^{(n)}/n\) and decode occupancy \(Y_i^{(n)}/n\).
The prefill occupancy converges tightly to the LP solution \(x_i^\star\), as expected because the prefill gate explicitly targets these values.
The decode occupancy, however, exhibits persistent variance and deviates from the specific class-wise breakdown \(y_{m,i}^\star+y_{s,i}^\star\) predicted by the LP, even though the total reward is optimal.
This occurs because the standard router prioritizes any available solo slot to maximize throughput but is indifferent to which class occupies that slot.
Consequently, the decode class mix fluctuates while still satisfying aggregate capacity constraints.

\begin{figure}[htbp!]
    \centering
    \begin{minipage}[b]{0.98\textwidth}
        \includegraphics[width=\linewidth, height=4cm]{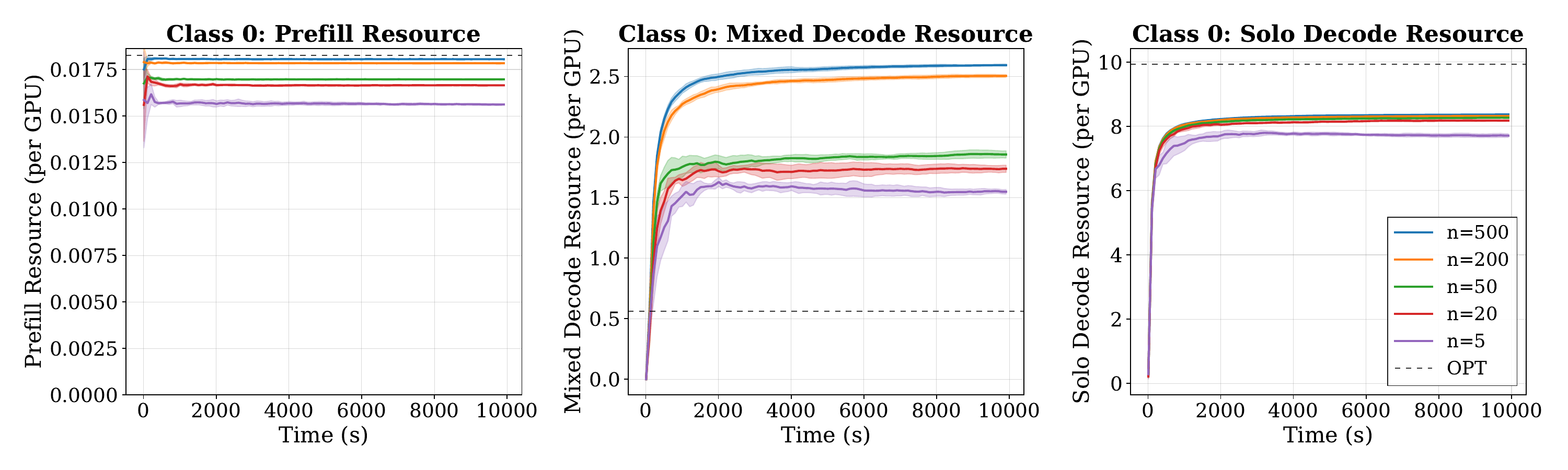}

        \vspace{0.3cm}

        \includegraphics[width=\linewidth, height=4cm]{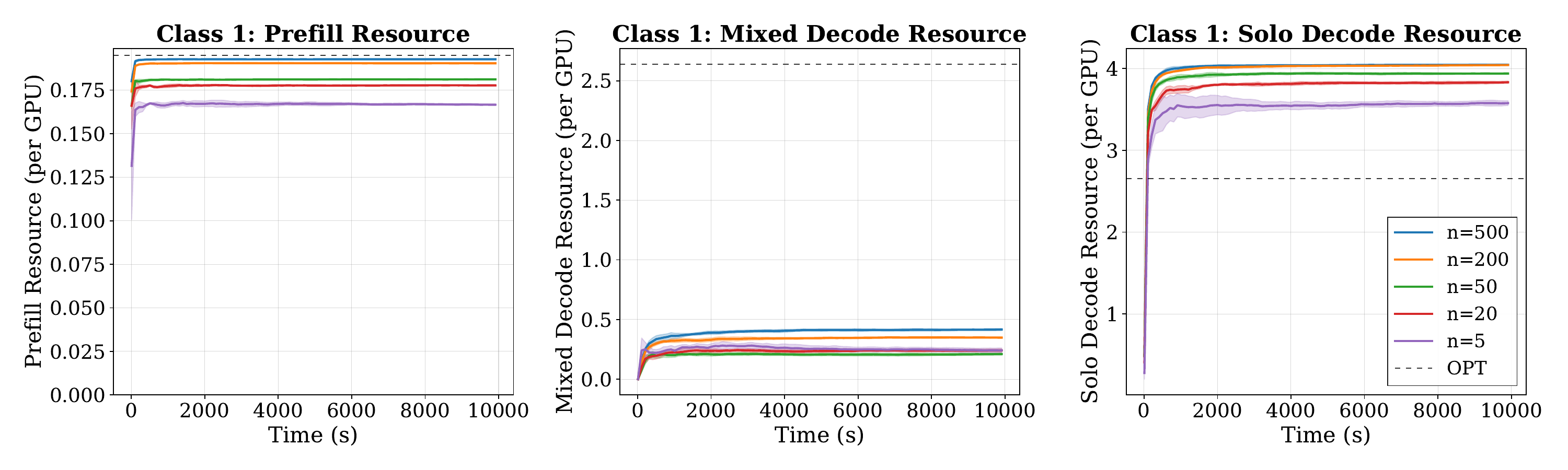}
    \end{minipage}

    \caption{Occupancy convergence under the gate-and-route policy. Top: Class 0 (decode-heavy). Bottom: Class 1 (prefill-heavy). Decode occupancy exhibits looser class-wise convergence.}
    \label{fig:occ_standard}
\end{figure}

\paragraph{SLI-Aware Convergence.}
To address the decode occupancy drift and enforce closer adherence to the fluid plan, we use the SLI-aware gate-and-route policy described in Section~\ref{subsec:sli-aware-policy}.
This policy uses routing probabilities \(p_{s,i}\) derived from the LP solution to probabilistically route jobs to mixed or solo pools.

\begin{figure}[htbp]
    \centering
    \begin{minipage}[b]{0.98\textwidth}
        \includegraphics[width=\linewidth, height=4cm]{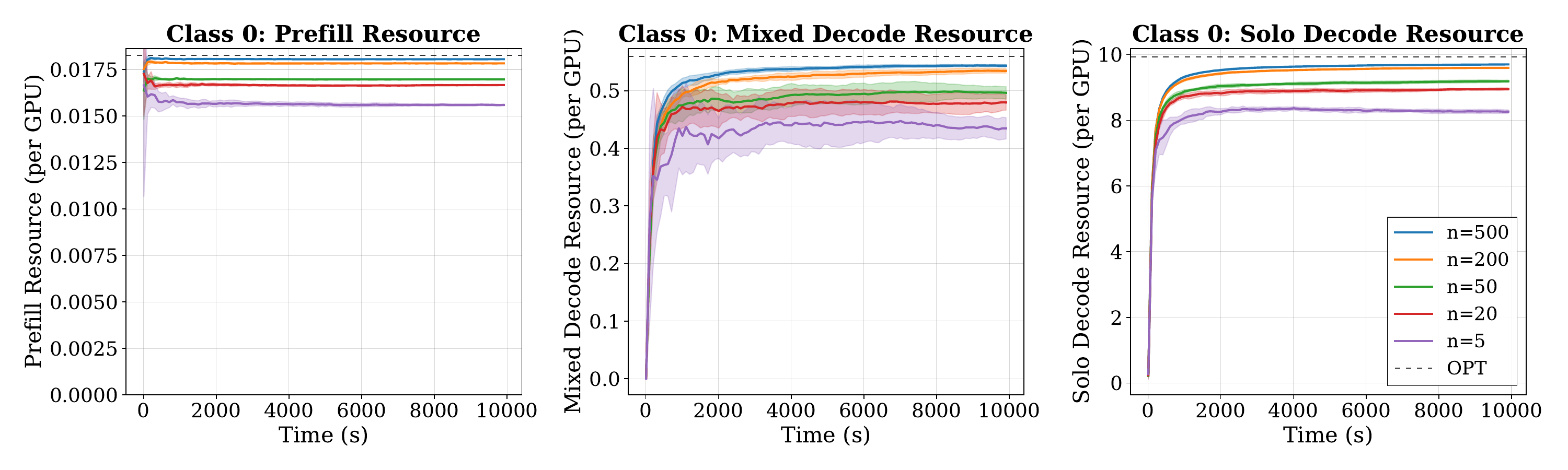}

        \vspace{0.3cm}

        \includegraphics[width=\linewidth, height=4cm]{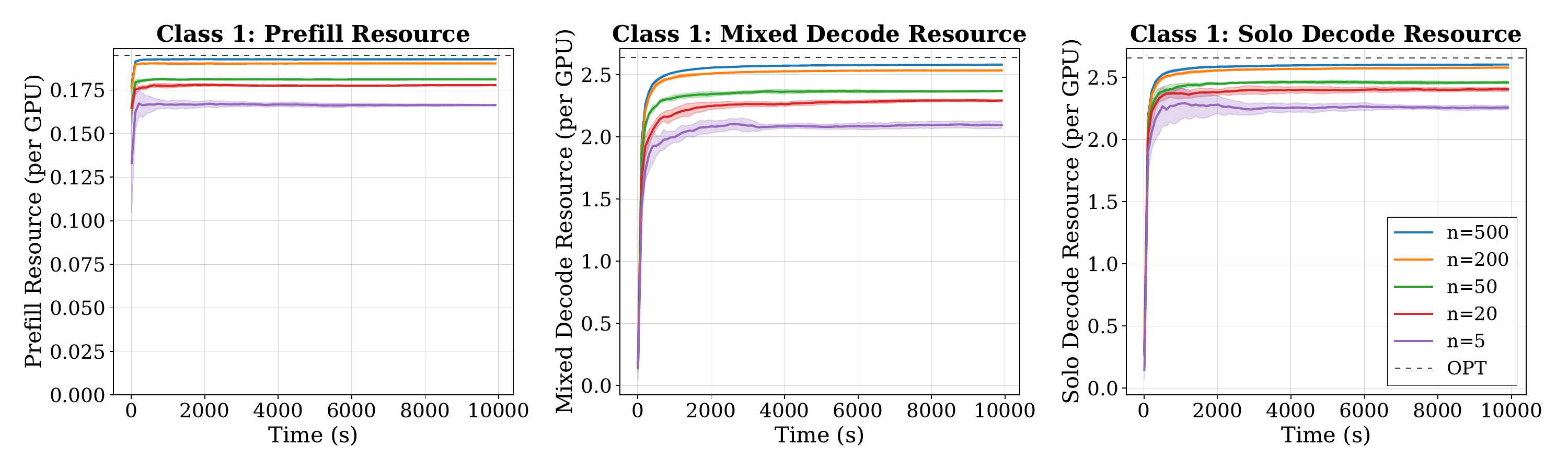}
    \end{minipage}

    \caption{Occupancy convergence under the SLI-aware gate-and-route policy. Top: Class 0 (decode-heavy). Bottom: Class 1 (prefill-heavy). Both prefill and decode occupancies converge tightly to the LP targets.}
    \label{fig:occ_slo}
    \vspace{-5mm}
\end{figure}

Figure~\ref{fig:occ_slo} shows that both prefill and decode occupancies converge to the optimization solution for both classes under the SLI-aware router.
By enforcing the specific routing split \(y_{m,i}^\star\) vs.\ \(y_{s,i}^\star\), the policy makes the stochastic system structurally mimic the fluid limit.
This validates Theorem~\ref{thm:slo_optimality}, confirming that precise control over resource-allocation distributions can be achieved at scale.
}

\subsection{\textcolor{tc-ins}{Additional Ablation Studies on Synthetic Workloads}}
\label{sec:ec-additional-ablation-studies}

To evaluate the contribution of each design component in the proposed framework, we compare the full occupancy-based Gate-and-Route policy with several component ablations across a variety of synthetic instances.
These instances encompass diverse infrastructure hyperparameters ($\alpha$ from $0.02$ to $0.15$, $\beta$ from $10^{-5}$ to $10^{-3}$, and $\gamma$ from $10$ to $50$), distinct target user compositions ($P_i,D_i$ from $200$ to $3000$), and different arrival rates $\lambda$ from $0.25$ to $0.5$, with $n=500$ GPUs throughout.
We report the normalized revenue rate (scaled from 0 to 1) along with standard deviations to summarize variability across the tested settings.

\subsubsection*{Candidate Policies}
We evaluate the following policies to isolate the contributions of occupancy-based admission control (Gate), the solo-first decode router (Route), and static planning.
Each label follows the pattern (prefill rule)(decode rule)-(planning rule), where the prefill rule is either occupancy-based Gate (G) or FCFS (F), the decode rule is either Greedy solo-first routing (G), Immediate same-slot decoding (I), or local FCFS within a GPU (F), and the planning suffix is Static Planning (SP) or Without Static Planning (WSP).
For brevity we write GG-SP for the full policy and use the same convention for the ablations.
In Figure~\ref{fig:synthetic-ablation-comparison}, the FI-WSP variant is annotated as ``(Sarathi)'' to indicate that its FCFS-prefill plus immediate-decode local rule is similar in spirit to a Sarathi-style heuristic; this synthetic-workload ablation is distinct from the trace-driven Sarathi-style baseline evaluated in Section~\ref{subsec:real-trace-online-eval}, which is implemented inside the calibrated finite-system simulator.

\begin{itemize}
    \item \textbf{Gate-and-Route with Static Planning (GG-SP):}
    This is the full occupancy-based Gate-and-Route policy.
    It uses the LP solution to fix the mixed/solo resource partition and to define class-level prefill occupancy targets.
    The prefill gate admits the most under-target class, while the decode router fills solo decode slots first and then mixed decode slots when needed.

    \item \textbf{FCFS-and-Immediate without Static Planning (FI-WSP):}
    This ablation removes all three components: there is no LP-based static partition, no occupancy-based gate, and no decoupled decode router.
    It admits prefills in FCFS order and immediately continues decode on the same GPU slot after prefill completion.

    \item \textbf{Gate-and-Immediate without Static Planning (GI-WSP):}
    This ablation keeps the occupancy-based prefill gate but removes the decoupled decode router.
    As in FI-WSP, decode begins immediately on the same GPU slot after prefill completion.
    Comparing GI-WSP with GG-SP isolates the value of decoupling prefill admission from downstream decode placement.

    \item \textbf{Gate-and-FCFS without Static Planning (GF-WSP):}
    This ablation keeps the occupancy-based prefill gate and the decoupled architecture, but replaces the solo-first decode router with an FCFS-style local rule.
    Whenever a GPU slot becomes available, the system prioritizes the oldest eligible task rather than explicitly preserving solo decode capacity.
    Comparing GF-WSP with GG-SP isolates the contribution of the solo-first routing rule.

    \item \textbf{FCFS-and-Route with Static Planning (FG-SP):}
    This ablation keeps the static mixed/solo partition and the solo-first decode router, but removes the occupancy-based prefill gate and admits prefills in FCFS order.
    Comparing FG-SP with GG-SP isolates the role of the gate in regulating the class mix and downstream decode load.
\end{itemize}

The comparative results, illustrated in Figure~\ref{fig:synthetic-ablation-comparison}, show how the components interact.
The full GG-SP policy performs best among the tested variants because the occupancy gate regulates the class mix entering prefill, the static planning step fixes an appropriate mixed/solo capacity split, and the solo-first router uses the faster decode-only slots effectively.
The ablated policies identify where performance is lost when one of these components is removed or coupled with another stage.
The relatively small standard deviations observed for GG-SP indicate stable performance across the tested synthetic instances.

\begin{figure}[ht]
    \centering
    \includegraphics[width=.8\textwidth]{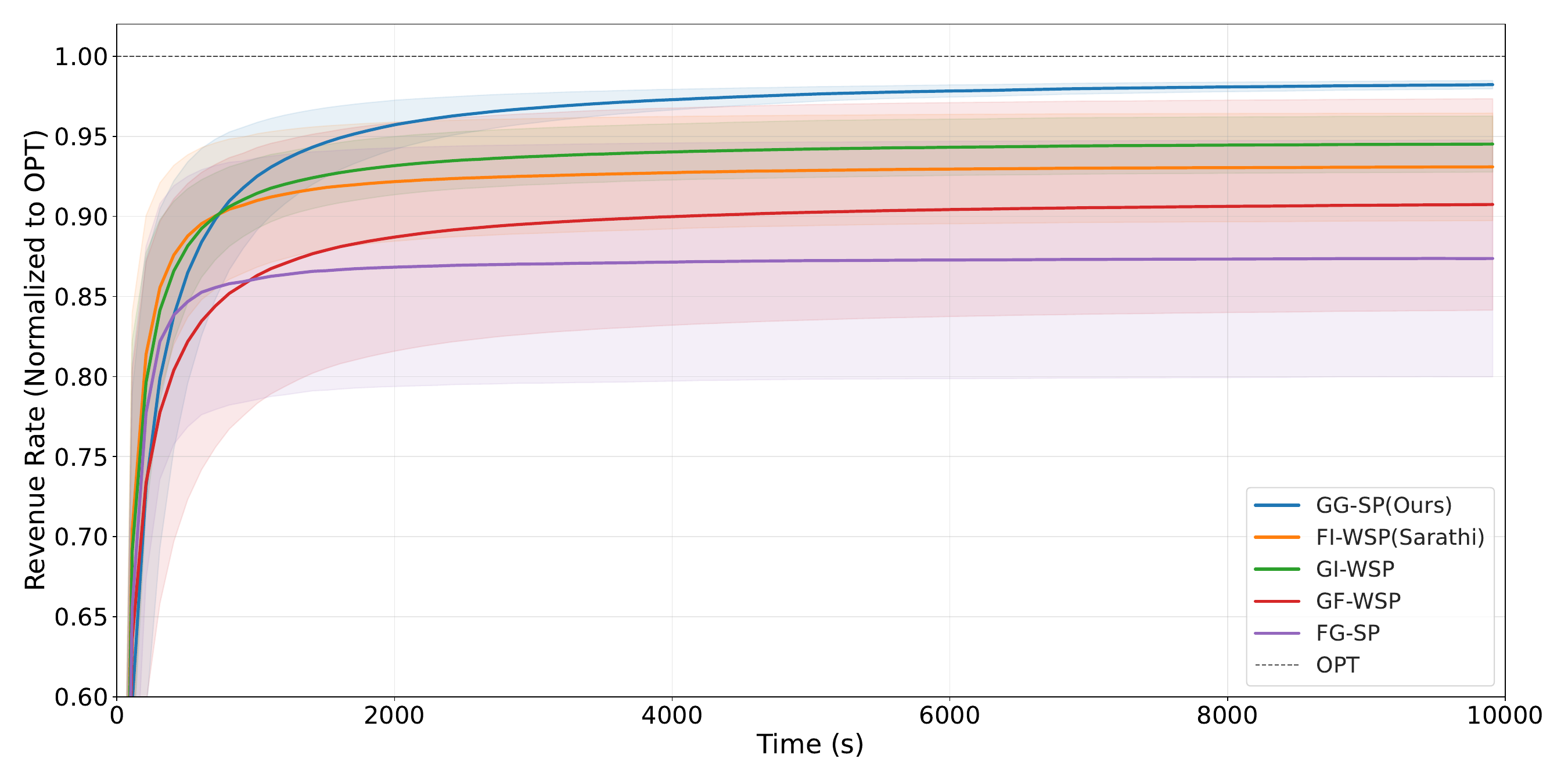}
    \caption{Comparison of normalized average revenue across different policies. The results aggregate performance over various infrastructure settings ($\alpha, \beta, \gamma$) and user class combinations. Error bars indicate the standard deviation.}
    \label{fig:synthetic-ablation-comparison}
    \vspace{-4mm}
\end{figure}